\documentclass{amsart}
\usepackage{geometry}
\usepackage{cite}
\newtheorem{thm}{Theorem}[section]
\newtheorem{cor}{Corollary}[section]
\newtheorem{conj}{Conjecture}[section]
\newtheorem{prop}{Proposition}[section]
\newtheorem{lem}{Lemma}[section]

\theoremstyle{definition}

\newtheorem{definition}{Definition}[section]

\newcommand{\lspan}{\operatorname{span}}
\newcommand{\cH}{\mathcal{H}}
\newcommand{\cU}{\mathcal{U}}
\newcommand{\cA}{\mathcal{A}}
\newcommand{\cB}{\mathcal{B}}
\newcommand{\cP}{\mathcal{P}}
\newcommand{\Nset}{\mathbb{N}}

\newcommand{\Rset}{\mathbb{R}}
\newcommand{\Cset}{\mathbb{C}}
\newcommand{\Xl}{{\rm X}_{\lambda}}
\newcommand{\rL}{\mathrm{L}}
\newcommand{\Wr}{\operatorname{Wr}}
\newcommand{\ord}{\operatorname{ord}}
\newcommand{\hT}{\hat{T}}
\newcommand{\hH}{\hat{H}}

                             %

\newcommand{\ta}{\tilde{a}}

\geometry{left=3.cm, right=3.cm, top=3.5cm, bottom=3.5cm}
\begin{document}

\title[]{Rational extensions of the quantum harmonic oscillator and exceptional Hermite polynomials}
\author{David G\'omez-Ullate}\thanks{This work has been partially supported by the Spanish MINECO-FEDER Grants MTM2009-06973, MTM2012-31714, the Catalan Grant
2009SGR--859 and the Canadian NSERC grant RGPIN-228057-2009.} 
\address{ Departamento de F\'isica Te\'orica II, Universidad Complutense de
Madrid, 28040 Madrid, Spain.}
\author{ Yves Grandati}
\address{Equipe BioPhysStat, LCP A2MC, Universit\'e de Lorraine--Site de Metz,
1 Bvd D. F. Arago, F-57070, Metz, France.}
\author{Robert Milson}
\address{Department of Mathematics and Statistics, Dalhousie University,
Halifax, NS, B3H 3J5, Canada.}
\begin{abstract}
 We prove that every rational extension of the quantum harmonic oscillator that is exactly solvable by polynomials is monodromy free, and therefore can be obtained by applying a finite number of state-deleting Darboux transformations on the harmonic oscillator. Equivalently, every exceptional orthogonal polynomial system of Hermite type can be obtained by applying a Darboux-Crum transformation to the classical Hermite polynomials. Exceptional Hermite polynomial systems only exist for even codimension $2m$, and they are indexed by the partitions $\lambda$ of $m$. We provide explicit expressions for their corresponding orthogonality weights and differential operators and a separate proof of their completeness. Exceptional Hermite polynomials satisfy a $2\ell+3$ recurrence relation where $\ell$ is the length of the partition $\lambda$. Explicit expressions for such recurrence relations are given.
 
\end{abstract}
\maketitle

\section{Introduction and main results}

In the seminal paper of Duistermaat and Gr\"unbaum, \cite{Duistermaat1986}, the authors explore the connections between bispectrality of a Schr\"odinger operator, trivial monodromy of the potential and Darboux transformations. They proved that for potentials on the real line that decay at infinity, trivial monodromy of the potential always implies the existence of a bispectral problem and moreover that the potential can be obtained by a finite number of Darboux transformations from that of the free particle. Oblomkov later generalized these results to trivial monodromy potentials with quadratic increase at infinity,\cite{Oblomkov1999}, which are Darboux connected to the harmonic oscillator.

Darboux transformations have long been used as a mechanism to generate new solvable potentials starting from known ones,\cite{INFELD1951,deift,SUKUMAR1985a,MIELNIK1984,Bagrov1995}. However, only a certain subclass of these transformations will preserve the polynomial character of the eigenfunctions, namely those for which the log derivative of the factorizing function is a rational function,\cite{Gomez-Ullate2004d,Gomez-Ullate2004c}. This class can also be described as a set of discrete symmetries that preserve the form of the Riccati-Schr\"odinger equation, \cite{Grandati2011b,Grandati2010}. 

The application of these \textit{rational} Darboux transformations on known exactly solvable potentials leads to transformed potentials that differ from the original ones by the addition of extra terms which are rational functions of a suitable variable and are typically bounded on the domain of definition of the Hamiltonian. The transformed potential therefore has the same asymptotic behaviour as the original one, and it is called a \textit{rational extension} \cite{Grandati2011b,Grandati2011c,Grandati2012b,Marquette2013}.

The eigenfunctions of these transformed Hamiltonians are often expressible in terms of \textit{exceptional orthogonal polynomials}, which are families of orthogonal polynomials with a finite number of gaps in their degree sequence that nevertheless span a complete basis of their corresponding Hilbert spaces.

Exceptional orthogonal polynomials of codimension one were first introduced in \cite{Gomez-Ullate2010a,Gomez-Ullate2009a} as the polynomial eigenfunctions of a Sturm-Liouville problem, extending a famous result by Bochner, \cite{Bochner1929}. 
The connection to Darboux transformations was not yet evident at that stage, and the main result was proved using the classification of normal forms for the flags of univariate polynomials of codimension one, \cite{Gomez-Ullate2005a,Gomez-Ullate2007e}.

Shortly after, Quesne \cite{Quesne2008,Quesne2009b}  showed the relation between exceptional orthogonal polynomials and the Darboux
transformation, which enabled her to
obtain examples of codimension two. Higher-codimensional families were first obtained by
Odake and Sasaki \cite{Odake2009a}. The same authors further showed the
existence of two families of ${\rm X}_m$-Laguerre and ${\rm X}_m$-Jacobi
polynomials \cite{Odake2010b}, the existence of which was further explained in \cite{Gomez-Ullate2010b,Sasaki2010} for ${\rm X}_m$-Laguerre polynomials and in \cite{Gomez-Ullate2012c} for  ${\rm X}_m$-Jacobi polynomials.

The next step towards the complete understanding of exceptional orthogonal polynomials came with the multi-indexed families associated to Darboux-Crum or iterated Darboux transformations \cite{Crum1955}, which were first proposed in \cite{Gomez-Ullate2012a} and later generalized in \cite{Odake2011c,Quesne2011}.

Exceptional orthogonal polynomials have been applied in a number of interesting physical contexts, such as quantum superintegrability \cite{Marquette2013a,Post2012a}, Dirac operators minimally coupled to external fields,\cite{Ho2011a}, entropy measures in quantum information theory, \cite{Dutta2011a},  Schr\"odinger's equation with position-dependent mass, \cite{Midya-Roy}, or discrete quantum mechanics, \cite{Odake2011a}. 

The mathematical properties of these exceptional polynomial families are also the subject of recent study, such as the asymptotic and interlacing properties of their zeros \cite{Gomez-Ullate2013a} and higher order recursion formulas, \cite{Odake2013}. 

The main results of this paper have a double interpretation depending on whether the focus is placed on the potential of the Schr\"odinger operator, or on the eigenfunctions. From the potential point of view, we characterize all rational extensions of the harmonic oscillator that are exactly solvable by polynomials. This is achieved by proving that any operator that is solvable by polynomials necessarily has trivial monodromy. Oblomkov's result \cite{Oblomkov1999} provides the necessary link to Darboux-Crum transformations, and we show which of them lead to a regular potential.
From the point of view of the eigenfunctions, to every such rational extension that is solvable by polynomials there corresponds an associated orthogonal polynomial system of Hermite type. This will be an exceptional polynomial system since the sequence of eigenfunctions does not contain polynomials of all degrees. The results in this paper provide a full classification of exceptional Hermite polynomials, showing that every such family can be obtained by applying a sequence of Darboux transformations to the classical Hermite polynomials. For the Hermite case, this results proves the conjecture made by two of the authors in \cite{Gomez-Ullate2012}.

We would like to recall at this point that some examples of exceptional Hermite polynomials have been treated in the literature. Exceptional Hermite polynomials related to a single step state-adding Darboux transformation were first considered in \cite{dubov1992} and later in \cite{junker1997,carinena2008,fellows2009,Dutta-Roy} and for  2-step transformations in \cite{Bagrov1995}. Likewise,  Wronskian determinants of sequences of Hermite polynomials have been investigated previously by Clarkson, \cite{clarkson1}, who has performed numerical investigations of the position of their zeros in the complex plane, later extended in \cite{Felder2012a}. His motivation was that these functions appear as rational solutions to nonlinear differential equations of Painlev\'e
type, \cite{clarkson2}. In this paper we show that these Wronskians are solutions to \textit{linear} second order differential equations and for suitably chosen sequences, they form an orthogonal polynomial system.

In the rest of this section we introduce some preliminary definitions and we state the main results of the paper.
\begin{definition}
  A quantum Hamiltonian 
  \begin{equation}\label{eq:Hgen}
  \cH=-\partial_{xx} +U(x)
 \end{equation}  
 is said to be \textit{exactly solvable by polynomials}\footnote{We
   note that this notion is equivalent to the concept of a polynomial
   Sturm-Liouville problem (PSLP) introduced in
   \cite{Gomez-Ullate2012}.} if there exist functions
 $\mu(x),\zeta(x)$ such that \emph{for all but finitely many} $k\in
 \Nset$ there exists a degree $k$ polynomial $y_k(z)$ such that
  \[ \psi_k(x) = \mu(x) y_k(\zeta(x)) \]
  is an eigenfunction (in the $L^2$ sense) of $\cH$.  
\end{definition}

\begin{definition}
  A rational extension of the harmonic oscillator is a potential of
  the form
  \begin{equation}
    \label{eq:XHermite}
    U(x) = x^2 + \frac{a(x)}{b(x)}
  \end{equation}
  where $a(x), b(x)$ are real polynomials such that $\deg a < \deg b$.
  If $b(x)\neq 0$ for all real $x$ we will say that the potential is
  regular.
\end{definition}

With these two definitions we are ready to state the main theorem.

\begin{thm}\label{thm:main}
  If $U(x)$ is a regular rational extension of the harmonic oscillator  that is solvable  by polynomials, then $U(x)$ has the form
  \begin{equation}\label{eq:Urat}
  U(x) = x^2 - \partial_{xx} \log \Wr[H_{k_1}, H_{k_1+1}, H_{k_2},
  H_{k_2 +1}, \ldots, H_{k_\ell}, H_{k_\ell+1}] , 
  \end{equation}
   where $\Wr$ is the
  Wronskian operator, $H_n(x)$ is the nth degree Hermite
  polynomial and where
  \begin{equation}\label{eq:regcond}
   k_j+1 < k_{j+1} ,\quad j=1,2,\ldots, \ell-1.
   \end{equation} Conversely, every
  potential of the form shown above is non-singular and solvable by
  polynomials.
\end{thm}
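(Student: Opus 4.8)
The plan is to prove the two implications separately, the forward one resting on a statement of independent interest: \emph{solvability by polynomials forces trivial monodromy}. So suppose first that $U(x)=x^2+a(x)/b(x)$ is regular and solvable by polynomials, with eigenfunctions $\psi_k=\mu(x)\,y_k(\zeta(x))$, $\deg y_k=k$, for all but finitely many $k$. (Since $U$ is confining these $\psi_k$ are genuine $L^2$-eigenfunctions with distinct eigenvalues $\lambda_k$, an infinite set, and the $\psi_k$ are linearly independent.) The first step is a normalisation lemma: transporting $\cH\psi_k=\lambda_k\psi_k$ to the variable $z=\zeta(x)$ gives a second-order operator $\hat T=\mu^{-1}\cH\mu$ that must map polynomials to polynomials of no larger degree in all but finitely many degrees; inspecting its leading coefficient forces $\zeta$ to be affine, so after an affine change of variable $\zeta(x)=x$, and matching the remaining coefficients against $x^2+a/b$ forces $\mu(x)=e^{-x^2/2}/W(x)$ for a real polynomial $W$, with $U(x)=x^2-\partial_{xx}\log W(x)$ (up to the numerical normalisation and additive constant that place $U$ in the form of Definition~1.2) and $b$ a constant multiple of $W^2$. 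Regularity of $U$ then says exactly that $W$ has no real zeros, so the poles of $U$ lie among the complex zeros of $W$, and at each such zero $x_0$ the function $\psi_k=e^{-x^2/2}y_k(x)/W(x)$ is a \emph{single-valued} (meromorphic) local solution of the ODE for every $\lambda_k$. Because the indicial exponents at $x_0$ are $\lambda$-independent — determined by the coefficient of the double pole — and one of them is realised by the single-valued $\psi_k$, both exponents are integers summing to $1$; comparing the exponent realised by $\psi_k$ with the smaller indicial exponent, over the infinite family $\{\lambda_k\}$, forces $x_0$ to be a simple zero of $W$ and shows that the logarithmic obstruction at $x_0$ — a holomorphic function of $\lambda$ vanishing on the infinite set $\{\lambda_k\}$ — vanishes identically. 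Hence $U$ has trivial monodromy at every pole, and $U\sim x^2$ at infinity with $U-x^2$ bounded.

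Now Oblomkov's theorem applies and gives that $U$ is obtained from $x^2$ by a finite chain of Darboux transformations; since $U$ is a rational extension this chain is rational, and the Darboux--Crum/Maya-diagram bookkeeping (equivalently, Oblomkov's explicit description) rewrites it as $U(x)=x^2-\partial_{xx}\log\Wr[H_{n_1},\dots,H_{n_k}]$ for a finite set of distinct non-negative integers $n_1,\dots,n_k$. Finally, $U$ regular means this Wronskian is nodeless on $\Rset$, and Krein--Adler's criterion ($\prod_i(m-n_i)\ge 0$ for all $m\in\Nset$) is equivalent to $\{n_1,\dots,n_k\}$ being a disjoint union of blocks of consecutive integers of even length, i.e.\ to its being listable as $\{k_1,k_1+1,k_2,k_2+1,\dots,k_\ell,k_\ell+1\}$ subject to \eqref{eq:regcond}. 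This is \eqref{eq:Urat}.

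For the converse, let $U$ be given by \eqref{eq:Urat} with \eqref{eq:regcond} and set $W:=\Wr[H_{k_1},H_{k_1+1},\dots,H_{k_\ell},H_{k_\ell+1}]$. By Krein--Adler $W$ is nodeless on $\Rset$, so $U$ is non-singular; writing $\partial_{xx}\log W=(W''W-(W')^2)/W^2$ and counting degrees shows $U=x^2+a/b$ with $b=W^2$ and $\deg a<\deg b$, so $U$ is a bona fide regular rational extension. To see it is solvable by polynomials, apply the Darboux--Crum transformation with seeds $H_{k_j}e^{-x^2/2},\,H_{k_j+1}e^{-x^2/2}$: for every $n\notin\{k_1,k_1+1,\dots,k_\ell,k_\ell+1\}$ Crum's formula yields, after the standard normalisation,
\[
\psi_n(x)=\frac{e^{-x^2/2}}{W(x)}\,\Wr[H_{k_1},H_{k_1+1},\dots,H_{k_\ell},H_{k_\ell+1},H_n](x),
\]
an eigenfunction of $\cH=-\partial_{xx}+U$ which is smooth (as $W$ is nodeless) and decays like a Gaussian times a rational function, hence lies in $L^2(\Rset)$. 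Thus $\psi_n=\mu(x)y_n(\zeta(x))$ with $\zeta(x)=x$, $\mu(x)=e^{-x^2/2}/W(x)$, and $y_n=\Wr[H_{k_1},\dots,H_{k_\ell+1},H_n]$ a polynomial whose degree is $n$ plus a constant depending only on $\lambda$; as $n$ runs over the cofinite complement of $\{k_1,\dots,k_\ell+1\}$ in $\Nset$ these degrees exhaust all but finitely many non-negative integers, so $U$ is exactly solvable by polynomials.

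The crux — and the expected main obstacle — is the first paragraph: establishing the normalisation lemma that pins down $(\mu,\zeta)$, and, above all, the local indicial analysis that extracts trivial monodromy \emph{for all $\lambda$} from the mere existence of the polynomial eigenfunctions $y_k$ in infinitely many degrees. Once trivial monodromy is secured, the rest is an assembly of Oblomkov's theorem, Crum's formula, the Maya-diagram reduction, and the Krein--Adler criterion.
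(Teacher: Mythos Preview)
Your overall architecture is the paper's: prove that solvability by polynomials forces trivial monodromy, then invoke Oblomkov, then Krein--Adler, and handle the converse via Crum's formula. The Oblomkov/Krein--Adler/Crum portions are essentially fine (with one omission: Krein--Adler also allows an initial block $\{0,1,\ldots,j\}$ of arbitrary length, which the paper disposes of by shape invariance; you silently drop it).

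The trivial-monodromy step, however, has real gaps. First, the normalisation ``matching the remaining coefficients against $x^2+a/b$ forces $\mu(x)=e^{-x^2/2}/W(x)$'' is not justified: from Cramer's rule one only gets that $\mu'/\mu$ is rational, and a priori it could have higher-order poles or non-integer residues, producing essential or branch singularities in $\mu$. The paper spends Lemma~3.1 on exactly this point, analysing the order sequence $I_\zeta(T)$ to show that each pole of $T$ is regular with leading Laurent term $\partial_{zz}-2\nu_\zeta(z-\zeta)^{-1}\partial_z$; only then does $\mu$ take the claimed form. Second, your assertion that each $x_0$ is a \emph{simple} zero of $W$ is both unnecessary for the theorem and, as far as anyone knows, open: the paper records it as Conjecture~5.1 (simplicity of the zeros of $H_\lambda$) and carefully avoids relying on it.

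The most serious issue is the log-obstruction argument. You claim the obstruction is ``a holomorphic function of $\lambda$ vanishing on the infinite set $\{\lambda_k\}$''. But the obstruction only vanishes at those $\lambda_k$ for which $\psi_k$ realises the \emph{smaller} indicial exponent, i.e.\ for which $y_k(x_0)\neq 0$; when $y_k(x_0)=0$ the meromorphic $\psi_k$ is the solution attached to the larger exponent and tells you nothing about the logarithm. Primitivity guarantees only that \emph{some} $y_k$ does not vanish at $x_0$, not infinitely many. One can salvage the idea, but it requires knowing enough about the order sequence at $x_0$ to bound from below the number of $k$ with $y_k(x_0)\neq 0$ and to bound from above the $\lambda$-degree of the obstruction --- and that order-sequence analysis is precisely the content of the paper's Lemma~3.1 and Proposition~3.1, where the log-free second solution is constructed directly from the $T$-invariance of $\cU(T)$ rather than by counting roots of a polynomial in $\lambda$.
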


In other words, the theorem states that every potential of the form \eqref{eq:XHermite} that is solvable by polynomials can be obtained from the harmonic oscillator by a sequence of state-deleting Darboux transformations.
The condition on the possible sequences \eqref{eq:regcond} ensures the regularity of the potential \eqref{eq:Urat}.

The following theorem provides an intermediate result that will be used in the proof of Theorem \ref{thm:main}.
\begin{thm}
  \label{thm:trivmonod}
  Every rational extension  of the harmonic oscillator \eqref{eq:Hgen}-\eqref{eq:XHermite} that is solvable
  by polynomials necessarily has trivial monodromy.
\end{thm}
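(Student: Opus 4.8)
The plan is to reduce trivial monodromy to a local statement at the finite poles of $U$ and then verify it using the polynomial form of the eigenfunctions. Recall that $\cH=-\partial_{xx}+U$ has trivial monodromy exactly when, for every $E\in\Cset$, every solution of $\cH\psi=E\psi$ continues to a single-valued meromorphic function on $\Cset$. Since $U(x)=x^2+a(x)/b(x)$ has quadratic growth, $x=\infty$ is an irregular singular point whose two formal solutions are $e^{\pm x^2/2}$ times single-valued asymptotic series, so infinity contributes no monodromy; the only possible obstructions sit at the zeros $x_0$ of $b(x)$. Thus it suffices to show that each such $x_0$ is a regular singular point of $\cH\psi=E\psi$ and that, for every $E$, the Frobenius solution attached to the smaller exponent is logarithm-free.

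I would first pin down the gauge factor $\mu$ and the variable $\zeta$. Substituting the ansatz $\psi_k=\mu\,y_k(\zeta)$ into the eigenvalue equation for a few of the available degrees $k$, and combining this with the asymptotics of $L^2$ eigenfunctions of a potential $x^2+o(1)$ --- decay like $e^{-x^2/2}$ up to polynomially bounded factors, with $E_k\to\infty$ --- one is in the situation of the classification of polynomial Sturm--Liouville problems of \cite{Gomez-Ullate2012}, which forces $\zeta$ to be affine in $x$ and $\mu(x)=e^{-x^2/2}\rho(x)$ with $\rho$ a rational function; moreover $\mu$ is itself an eigenfunction, so $U=E_0+\mu''/\mu$ for some $E_0\in\Rset$. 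In particular $\mu$ is meromorphic on $\Cset$ with zeros and poles of finite integer order, and $\mu''/\mu$ --- hence $U$ --- has poles of order at most two, so every finite singular point of $\cH$ is a regular singular point.

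It remains to check the logarithm-free condition at a zero $x_0$ of $b$. Put $m=\ord_{x_0}\mu\in\Zset\setminus\{0\}$; the local exponents of $\cH\psi=E\psi$ at $x_0$ are $\{m,1-m\}$ when $U$ has a double pole there ($m\neq1$) and $\{0,1\}$ when $U$ has a simple pole ($m=1$). If $m\le-1$ then $\mu$ has a pole at $x_0$ and realizes the Frobenius solution with the smaller exponent $m$; its meromorphy is precisely the vanishing of the would-be logarithm, so the local monodromy is trivial. The delicate case --- which I expect to be the main obstacle --- is $m\ge1$: then $\mu$ is holomorphic and vanishing at $x_0$, as is every eigenfunction $\psi_k=\mu\,y_k(\zeta)$, so none of them realizes the dangerous (smaller-exponent) solution. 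Here I would instead work with the second solution $\phi=\mu\int \mu^{-2}\,dx$ of $\cH\psi=E_0\psi$: expanding $\mu^{-2}$ near $x_0$, meromorphy of $\phi$ amounts to the vanishing of the residue of $\mu^{-2}$, and repeating the analysis at a general eigenvalue $E$ yields the remaining Duistermaat--Gr\"unbaum conditions as finitely many polynomial identities among the low-order Taylor coefficients of $\mu$ at $x_0$. Forcing these identities is the technical heart of the argument: one would use that $U=x^2+a/b$ is globally rational with $\deg a<\deg b$ together with the persistence of polynomial eigenfunctions at all large degrees --- for instance, since the $\psi_k$ are mutually $L^2$-orthogonal, the $y_k$ form an exceptional orthogonal polynomial system for the semiclassical weight $\mu^2=e^{-x^2}\rho^2$ and therefore obey a finite higher-order recurrence, so that $\cH$ is bispectral and a Duistermaat--Gr\"unbaum-type argument upgrades bispectrality to trivial monodromy. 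Carrying this last step out rigorously in the present non-decaying setting, and matching the recurrence to the local Frobenius data at each $x_0$, is where the real work lies.
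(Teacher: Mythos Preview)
Your proposal has a genuine gap, and the route it sketches does not close.

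First, the claim that ``$\mu$ is itself an eigenfunction, so $U=E_0+\mu''/\mu$'' is false in general. In the very families this theorem is meant to cover, the gauge factor is $\mu=e^{-x^2/2}/H_{\lambda}$ and the eigenfunctions are $\mu\,H_{\lambda,j}$; for most Adler partitions the minimal degree of $H_{\lambda,j}$ is strictly positive (e.g.\ $\lambda=(1,1,2,2)$ gives minimal degree $2$), so the constant $1$ is \emph{not} an eigenpolynomial of the associated operator $T$ and $\mu$ is not an eigenfunction of $\cH$. Your identification of the local exponents as $\{m,1-m\}$ and your conclusion that the poles of $U$ have order at most two both rest on this false premise.

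Second, even if one grants a meromorphic eigenfunction realizing the smaller exponent at some particular energy, that only rules out the logarithm at that single $E$; trivial monodromy requires it for every $E\in\Cset$. The Duistermaat--Gr\"unbaum criterion you have in mind is a statement about the odd Laurent coefficients of $U$, and verifying it from a single meromorphic solution requires the additional gap structure $d_1=d_3=\cdots=d_{2\nu-1}=0$ in its Taylor expansion, which you nowhere establish. Your case $m\geq1$ is, as you admit, not proved at all: the appeal to a higher-order recurrence and a ``Duistermaat--Gr\"unbaum-type argument'' for bispectrality $\Rightarrow$ trivial monodromy in the quadratically growing setting is precisely the missing theorem, not a tool you can invoke.

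The paper proceeds quite differently. It passes to the gauge-equivalent operator $T[y]=y''+q y'+r y$ with polynomial eigenfunctions and analyzes, at each pole $\zeta$, the \emph{order sequence} $I_\zeta(T)=\{\ord_\zeta y:y\in\cU(T)\}$. A combinatorial argument using the degree-homogeneous Laurent decomposition of $T$ forces $I_\zeta(T)=\{0,2,4,\ldots,2\nu\}\cup\{2\nu+1,2\nu+2,\ldots\}$ and pins down the leading term as $T_{-2}=\partial_{zz}-\tfrac{2\nu}{z-\zeta}\partial_z$, so the indicial roots are $0$ and $2\nu+1$. The key step is then constructive: since $\cU(T)$ contains polynomial eigenfunctions $y_0,y_2,\ldots,y_{2\nu}$ of orders $0,2,\ldots,2\nu$ at $\zeta$ and is $T$-invariant, one builds for each $\lambda$ a polynomial $p(z;\lambda)=y_0+\sum_{j=1}^{\nu} p_j(\lambda)y_{2j}$ with $(T-\lambda)[p]=O((z-\zeta)^{2\nu})$; this matches the smaller-exponent Frobenius solution through the critical order and forces the logarithmic coefficient to vanish for every $\lambda$. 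Trivial monodromy of $T$ then transfers to $\cH$ via the meromorphic gauge factor. The mechanism is thus the abundance and order-structure of the polynomial eigenfunctions themselves, not a single distinguished eigenfunction $\mu$.
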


In a certain way, this last theorem provides the converse statement to a result proved by Oblomkov in \cite{Oblomkov1999} (see Theorem \ref{thm:oblomkov}), thus showing that trivial monodromy of a rational extension and exact solvability by polynomials are equivalent.

The paper is organized as follows: In the next Section we introduce some preliminary results and definitions concerning second order differential operators that are solvable by polynomials. Section \ref{sec:Trivial} is devoted to proving Theorem \ref{thm:trivmonod} while Section \ref{sec:Darboux} provides the proof of Theorem \ref{thm:main}, which is based in Theorem \ref{thm:trivmonod} and previous results of  Oblomkov, Krein and Adler. In Section \ref{sec:XHermite} we study the eigenfunctions of these rational extensions, which are exceptional Hermite polynomials. We provide their degree sequences and codimension, orthogonality weights, differential equations, a convenient way to label them in terms of partitions, a proof of their completeness, recursion formulas and square norms.

\section{Preliminaries}\label{sec:Prelim}

\subsection{Differential operators of Hermite type}

Since the eigenfunctions of a Schr\"odinger operator that is exactly solvable by polynomials are polynomials in a suitable variable with a common pre-factor, it will be useful to work with an equivalent operator that has polynomial eigenfunctions. 

Let $T$ be a general second order differential operator
\begin{equation}\label{eq:Tgeneral} 
T[y]=p(z)y'' + q(z) y' + r(z) y
\end{equation}
acting on a function $y=y(z)$. 

\begin{definition}
A polynomial $y(z)$ is said to be a \textit{polynomial
    eigenfunction of $T$} if there exists a  $\lambda\in\Rset$ such that  $T[y] = \lambda y$. A second order differential operator $T[y]$ is \textit{solvable by polynomials} if there exists
    a polynomial eigenfunction of degree $n$ for all but finitely many
    $n\in \Nset$.  
\end{definition}

Without loss of generality we can assume that the
polynomials $y_k(z)$ do not share a common root.  If they did, we
could write every polynomial as 
\[ y_k(z) = \tilde{y}_{k-d}(z) \sigma(z) \] where the
$\tilde{y}_k(z)$ polynomials are relatively prime and where
$\sigma(z)$ is a polynomial of degree $d\geq 1$.  We could then
express the eigenfunctions as
\[ \psi_k(x) = \tilde{\mu}(x)  \tilde{y}_k(\zeta(x)),\quad
\tilde{\mu}(x) = \mu(x) \sigma(\zeta(x)).\]

A simple gauge transformation and a change of variable show that to every Schr\"odinger operator \eqref{eq:Hgen} that is solvable by polynomials there corresponds an operator $T$ with the same property. More specifically, let $\cH$ be one such operator, then the operator $T[y]$ defined by
\begin{subequations}  \label{eq:HTrel}
  \begin{align}
   T[y]\circ \zeta &= -\mu^{-1} \cH[\psi],\\
  y\circ\zeta &= \mu^{-1} \psi,
  \end{align} 
  \end{subequations}
 is solvable by polynomials. The coefficients of $T$ are given by 
 \begin{subequations}\label{eq:pqrdef}
\begin{align}
    \label{eq:pzeta}
     p\circ \zeta&= (\zeta')^2  ,\\
  \label{eq:qzeta}    q\circ \zeta &=  \frac{2\zeta'\mu'}{\mu}  + \zeta'',\\
   \label{eq:rzeta}  r\circ\zeta &=  \frac{\mu''}{\mu} - U.
  \end{align}
  \end{subequations}

\begin{definition}
A solvable by polynomials operator $T[y]$ is said to be \textit{imprimitive}  if there exists a polynomial $\sigma(z)$ of degree $\geq 1$ that divides \emph{every}
polynomial eigenfunction of $T$. Otherwise, we will say that $T$ is \textit{primitive}.

\end{definition}

The next two propositions express simple conditions on the coefficients $p,q,r$ of an operator $T$ that is solvable by polynomials.

\begin{prop}
  \label{prop:cramer}
  Let $T[y] = p(z)y''+q(z) y'+r(z)y$ be a differential operator such that
  \[T[y_i] = g_i,\; i=1,2,3,\]
  where $y_i(z), g_i(z)$ are polynomials with
  $y_1, y_2, y_3$ linearly independent.  Then, $p(z),q(z),$ and $r(z)$ are rational
  functions with  the Wronskian
  $\Wr[y_1,y_2,y_3]$ in the denominator.
\end{prop}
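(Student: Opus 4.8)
The plan is to view the three relations $T[y_i] = g_i$ as a linear system in the unknown coefficient functions $p, q, r$ and solve it by Cramer's rule. Concretely, for each $i$ we have
\[
 p(z)\, y_i''(z) + q(z)\, y_i'(z) + r(z)\, y_i(z) = g_i(z),
\]
which we read as the matrix equation
\[
 \begin{pmatrix} y_1'' & y_1' & y_1 \\ y_2'' & y_2' & y_2 \\ y_3'' & y_3' & y_3 \end{pmatrix}
 \begin{pmatrix} p \\ q \\ r \end{pmatrix}
 = \begin{pmatrix} g_1 \\ g_2 \\ g_3 \end{pmatrix}.
\]
The first step is to identify the determinant of the coefficient matrix. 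Up to sign (reordering columns from $(y'', y', y)$ to $(y, y', y'')$ costs one transposition), this determinant is exactly the Wronskian $\Wr[y_1, y_2, y_3]$. So I would note $\det = \pm \Wr[y_1,y_2,y_3]$, a polynomial in $z$.

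The second step is to argue this determinant is not identically zero, so that Cramer's rule applies. This is where the hypothesis that $y_1, y_2, y_3$ are linearly independent enters: over a field of characteristic zero (here $\Rset$), a family of polynomials is linearly independent if and only if its Wronskian is not the zero function. I would cite or briefly recall this standard fact. With $\Wr[y_1,y_2,y_3] \not\equiv 0$, Cramer's rule gives
\[
 p(z) = \frac{\det M_p(z)}{\Wr[y_1,y_2,y_3](z)}, \qquad
 q(z) = \frac{\det M_q(z)}{\Wr[y_1,y_2,y_3](z)}, \qquad
 r(z) = \frac{\det M_r(z)}{\Wr[y_1,y_2,y_3](z)},
\]
where each $M_p, M_q, M_r$ is the coefficient matrix with the relevant column replaced by $(g_1,g_2,g_3)^\top$. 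Since every entry of these matrices is a polynomial (the $y_i$ and $g_i$ are polynomials, and derivatives of polynomials are polynomials), each numerator determinant is a polynomial. Hence $p, q, r$ are rational functions with $\Wr[y_1,y_2,y_3]$ in the denominator, as claimed.

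There is no serious obstacle here: the argument is essentially Cramer's rule plus the linear-independence/Wronskian equivalence for polynomials. The only point requiring a moment's care is the non-vanishing of the Wronskian, since Cramer's rule for function coefficients must be justified pointwise at the level of rational functions (equivalently, in the fraction field $\Rset(z)$), rather than by a numerical invertibility argument — but this is exactly what the $\Rset(z)$-linear independence of $y_1, y_2, y_3$ provides, and that in turn follows from their $\Rset$-linear independence together with the characteristic-zero Wronskian criterion.
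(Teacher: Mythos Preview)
Your proof is correct and follows exactly the same approach as the paper: set up the three equations as a linear system in $p,q,r$ and apply Cramer's rule, noting that the coefficient determinant is (up to sign) $\Wr[y_1,y_2,y_3]$. Your added justification that this Wronskian is not identically zero (via the characteristic-zero Wronskian criterion for linear independence of polynomials) is a helpful detail that the paper leaves implicit.
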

\begin{proof}
  It suffices to apply Cramer's rule to solve the linear system
  \[ \begin{pmatrix}
    y''_1 & y'_1 & y_1\\
    y''_2 & y'_2 & y_2\\
    y''_3 & y'_3 & y_3
  \end{pmatrix}
  \begin{pmatrix}
    p\\q\\r
  \end{pmatrix}
  =
  \begin{pmatrix}
    g_1\\ g_2\\ g_3
  \end{pmatrix}
  \]
\end{proof}

As usual, we define the degree of a rational function as the degree of the numerator
minus the degree of the denominator.
\begin{prop}
  \label{prop:pqr}
  Let $T[y] = p(z) y'' + q(z) y' + r(z) y$ be a primitive
  operator that is solvable by polynomials.  Then, $p(z)$ is a
  polynomial while $q(z)$ and $r(z)$ are rational functions such that
   \[ \deg p \leq 2,\quad \deg q \leq 1,\quad \deg r \leq 0 .\]
\end{prop}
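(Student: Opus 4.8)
The plan is to split the statement into two parts that are proved by quite different means: the degree bounds, which are governed entirely by the behaviour of $p,q,r$ at $z=\infty$, and the assertion that $p$ has no finite poles, which is where primitivity is used.

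For the degree bounds I would compare leading Laurent coefficients at infinity. Fix a large $n$ and let $y_n$ be a monic polynomial eigenfunction of degree $n$ with $T[y_n]=\lambda_n y_n$, so that $y_n=z^n(1+O(1/z))$, $y_n'=n z^{n-1}(1+O(1/z))$ and $y_n''=n(n-1)z^{n-2}(1+O(1/z))$. Expanding $p\,y_n''+q\,y_n'+r\,y_n=\lambda_n y_n$ as a Laurent series at $\infty$ and dividing by $z^n$, the four summands on the left have leading terms of order $\deg p-2$, $\deg q-1$, $\deg r$ and $0$ relative to $z^n$, with leading coefficients $n(n-1)$, $n$, $1$ and $\lambda_n$ times the leading coefficients of $p,q,r$ and $y_n$ respectively (the cases $p\equiv0$, $q\equiv0$, $r\equiv0$ making the corresponding bound vacuous). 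If $\max(\deg p-2,\deg q-1,\deg r)>0$, then the highest-order coefficient of the left-hand side is a nonzero polynomial in $n$ of degree at most $2$ — the $n(n-1)$ contribution of $p\,y_n''$ dominating whenever $\deg p-2$ is among the maximizers, and otherwise one is left with a nonzero polynomial of degree $\le1$ in $n$ — hence nonzero for all large $n$, while the right-hand side contributes nothing of positive order relative to $z^n$. This contradiction gives $\deg p\le2$, $\deg q\le1$, $\deg r\le0$.

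For the polynomiality of $p$ I would first apply Proposition~\ref{prop:cramer} to three polynomial eigenfunctions of distinct degrees, which are automatically linearly independent: it shows that $p,q,r$ are rational with the Wronskian of those three eigenfunctions as a common denominator, so $p$ has only finitely many poles. Suppose $z_0$ is a pole of $p$. Writing the eigenvalue equation for $y_n$ in cleared form $\hat P\,y_n''+\hat Q\,y_n'+(\hat R-\lambda_n\hat B)\,y_n=0$, where $\hat P,\hat Q,\hat R$ are numerators over a common denominator $\hat B$ with $\hat B(z_0)=0$, and evaluating this identity together with its successive derivatives at $z_0$ produces a chain of $\lambda_n$-dependent linear relations among $y_n(z_0),y_n'(z_0),y_n''(z_0),\dots$. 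On the other hand, since the denominator of $p$ divides $\Wr[y_i,y_j,y_k]$ for every triple of eigenfunctions (and, through the bilinear identity $p\,\Wr[y_i,y_j]'+q\,\Wr[y_i,y_j]=(\lambda_j-\lambda_i)\,y_i y_j$, also forces $\Wr[y_i,y_j]'$ to vanish at $z_0$ when $q$ is regular there), the vectors $(y_n(z_0),y_n'(z_0),y_n''(z_0))$ are confined to a subspace that is pinned down by the data at $z_0$. Because $\lambda_n$ is a non-constant function of $n$ (the kernel of $T-\lambda$ being at most two-dimensional), feeding this into the relations above over all eigenfunctions should force $y_n(z_0)=0$ for every large $n$, contradicting primitivity — equivalently, the normalization that the $y_n$ have no common root.

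I expect the degree-bound step to be routine and the polynomiality of $p$ to be the main obstacle. The difficulty is to organize the local analysis at the pole $z_0$ into cases according to which of $p,q,r$ is most singular there, and according to whether the order of vanishing of $y_n$ at $z_0$ stays bounded as $n\to\infty$; in the hardest case, where $q$ (or $r$) is at least as singular at $z_0$ as $p$, one cannot extract a contradiction from the value of the cleared equation at $z_0$ alone and must descend to higher-order Taylor coefficients to produce enough independent $\lambda_n$-dependent relations. It is exactly at this point that primitivity, via the no-common-root normalization, is needed to rule out the remaining alternative that $(z-z_0)$ divides every eigenfunction of large degree.
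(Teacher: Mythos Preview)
Your degree-bound argument is correct and is essentially the same content as the paper's: comparing leading Laurent coefficients at $\infty$ amounts to the paper's partial-fraction decomposition $T=T^{(0)}+\sum_i T^{(i)}$ followed by the observation that the polynomial part $T^{(0)}$ cannot raise degree. The only care required is the case of ties among $\deg p-2,\ \deg q-1,\ \deg r$, and you handle that correctly by noting that the top-order coefficient is a nonzero polynomial in $n$ of degree $\le 2$.

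The polynomiality of $p$, however, has a genuine gap. You correctly extract from Cramer's rule that a pole $z_0$ of $p$ forces $\Wr[y_i,y_j,y_k](z_0)=0$ for every triple, and you note that the bilinear identity $p\,\Wr[y_i,y_j]'+q\,\Wr[y_i,y_j]=(\lambda_j-\lambda_i)y_iy_j$ yields further constraints. But the central claim --- that these relations, together with the cleared eigenvalue equation and its successive derivatives, force $y_n(z_0)=0$ for all large $n$ --- is asserted (``should force'') rather than proved, and your final paragraph explicitly concedes that the local case analysis (according to which of $p,q,r$ is most singular at $z_0$, and whether $\ord_{z_0} y_n$ stays bounded in $n$) has not been carried out. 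That difficulty is real: when $q$ or $r$ is at least as singular at $z_0$ as $p$, the cleared equation evaluated at $z_0$ can be vacuous, and one must organize the tower of Taylor relations carefully to squeeze out a contradiction. The paper does exactly this systematic local Laurent analysis, packaged as Lemma~\ref{lem:ordergaps}: for a primitive operator with leading coefficient $1$ it pins down the leading Laurent term at any pole $\zeta$ as $T_{-2}=\partial_{zz}-\tfrac{2\nu_\zeta}{z-\zeta}\partial_z$, so in particular the $y''$ coefficient carries no singularity, which is what rules out poles of $p$. Your sketch, if completed, would essentially have to reproduce the argument of that lemma; as it stands the proof is incomplete precisely at the step you yourself identify as the main obstacle.
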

In other words, operator $T$ does not raise the degree of any polynomial on which it acts. We postpone the proof of this last proposition until the next section. We now focus on rational extensions of the harmonic oscillator, which will be the object of our study for this paper.

\begin{prop}\label{prop:HTequiv}
Let $\cH$ be a rational extension of the harmonic oscillator \eqref{eq:Hgen}-\eqref{eq:XHermite} that is solvable by polynomials. Then $\cH$ is equivalent via \eqref{eq:HTrel}-\eqref{eq:pqrdef} to 
\begin{equation}\label{eq:Tp1}
T[y]=y''+q(z) y'+r(z) y.
\end{equation}
\end{prop}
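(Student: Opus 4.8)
The strategy is a case analysis on $\deg p$, using Proposition~\ref{prop:pqr} to bound it and the asymptotics of a rational extension to pin it down; the goal is to reach the case where $\zeta$ is affine, in which $p$ is constant and a linear rescaling of $z$ normalizes it to $1$.

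First I would reduce to the primitive case: if the polynomial eigenfunctions $y_k$ of the operator $T$ attached to $\cH$ by \eqref{eq:HTrel}--\eqref{eq:pqrdef} share a common factor $\sigma(\zeta)$, absorb it into $\mu$ (replace $\mu$ by $\mu\,\sigma(\zeta)$ and $y_k$ by $y_k/\sigma$), which changes neither $\cH$ nor the statement to be proved. After this reduction $T$ is primitive and solvable by polynomials, so Proposition~\ref{prop:pqr} gives that $p$ is a polynomial with $\deg p\le 2$ and $q,r$ are rational with $\deg q\le 1$, $\deg r\le 0$. Two asymptotic facts follow from $U = x^2 + a(x)/b(x)$ with $\deg a<\deg b$: first, $U(x) = x^2 + o(1)$ as $x\to\pm\infty$; second, by \eqref{eq:rzeta} one has $\mu''/\mu = U + r(\zeta)$, and since $r$ has degree $\le 0$ the term $r(\zeta(x))$ stays bounded at infinity whenever $\zeta$ does not, so the $L^2$ condition forces the Gaussian behaviour $\mu'/\mu \sim -x$ at $\pm\infty$.

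Now differentiate \eqref{eq:pzeta}: since $p(\zeta) = (\zeta')^2$, we get $p'(\zeta) = 2\zeta''$ where $\zeta'\ne 0$, that is $\zeta'' = p_2\zeta + p_1/2$ for $p(z) = p_0 + p_1 z + p_2 z^2$. If $\deg p = 2$ (so $p_2\ne 0$) the solutions of this linear ODE are shifted exponentials (for $p_2>0$) or sinusoids (for $p_2<0$); in every sub-case $\zeta,\zeta',\zeta''$ share the same exponential or bounded growth, so from \eqref{eq:qzeta} the quotient $\mu'/\mu = \tfrac{q(\zeta)-\zeta''}{2\zeta'}$ remains bounded at infinity, contradicting the previous paragraph. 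Hence $\deg p\le 1$. If $\deg p = 1$ (so $p_2=0$, $p_1\ne 0$) then $\zeta'' = p_1/2$ and $\zeta = \tfrac{p_1}{4}x^2 + C_1 x + C_2$; computing $\mu'/\mu$ by polynomial division in \eqref{eq:qzeta} and then $U = \mu''/\mu - r(\zeta)$, the requirement $U - x^2 = o(1)$ forces first $C_1 = 0$ and then $\mu'/\mu = -x + g/x$ for a constant $g$, so that $\mu = |x|^g e^{-x^2/2}$, $\zeta = cx^2 + C_2$ with $c = p_1/4$, and $U = x^2 + \tfrac{g(g-1)}{x^2} + \bigl(r(\infty) - r(cx^2+C_2)\bigr)$. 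If $\deg p = 0$ then $p = p_0\ne 0$ is constant, $\zeta' = \pm\sqrt{p_0}$, and $\zeta$ is affine.

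The remaining point, which I expect to be the main obstacle, is the $\deg p = 1$ case: it cannot simply be excluded, since the singular rational extension $U = x^2 + \ell(\ell+1)/x^2$ genuinely admits such a parabolic presentation, so it has to be unfolded into an affine one. Here $U$ is even and every eigenfunction obtained, $\psi_k = |x|^g e^{-x^2/2}y_k(cx^2+C_2)$, is even. Restricting $\cH$ to $(0,\infty)$, the behaviour of solutions near the origin is governed by the indicial exponents $g$ and $1-g$ of the term $g(g-1)/x^2$; for the admissible values of $g$ (the others being ruled out because the associated $\psi_k$ would not be admissible eigenfunctions, e.g.\ not real-analytic where $U$ is, or because the potential would be unbounded below) the origin is a limit-point endpoint, so the even and odd realizations of $\cH$ on $\Rset$ coincide and their common eigenfunctions all carry the prefactor $|x|^g e^{-x^2/2}$. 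Consequently the odd eigenfunctions of $\cH$ have the form $\operatorname{sgn}(x)|x|^g e^{-x^2/2}\tilde y_k(cx^2+C_2)$, and together with the even ones the whole eigenfunction family equals $\tilde\mu(x)$ times a polynomial in $x$ of every sufficiently large degree, with $\tilde\mu = |x|^g e^{-x^2/2}$. This is a presentation of $\cH$ as solvable by polynomials with $\tilde\zeta(x) = x$ affine. Thus in every case we may take $\zeta$ affine; then $p = (\zeta')^2$ is a nonzero constant, and the further linear change of variable $z\mapsto z/\zeta'$ (equivalently, rescaling the $y_k$) turns $p$ into $1$, leaving $T$ in the form \eqref{eq:Tp1}.
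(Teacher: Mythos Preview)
Your strategy---reduce to the primitive case, invoke Proposition~\ref{prop:pqr} to bound $\deg p\le 2$, then dispose of the cases---is exactly the paper's.  The execution differs in two places.

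For $\deg p=2$ the paper gives a one-line argument: the relation $(\zeta')^2=p(\zeta)$ with $p$ quadratic forces $\zeta$ to be transcendental (exponential or trigonometric), and then by \eqref{eq:rzeta} $U$ would be transcendental in $x$, contradicting the hypothesis that $U$ is a rational extension.  Your asymptotic route via $\mu'/\mu$ works in spirit but is more delicate; in the sinusoidal sub-case $\zeta'$ vanishes periodically, so $\mu'/\mu$ is not literally ``bounded at infinity''---the correct statement is that it is periodic in $x$, hence cannot be $\sim -x$.  The transcendentality argument avoids all of this.

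For $\deg p=1$ the paper simply observes that $y_k(x^2)$ is itself a polynomial in $x$, so a $\zeta(x)=x$ presentation is already available, and declares ``no generality is lost'' in taking $p=1$.  You are right to flag this as the crux: unfolding gives only \emph{even} degrees in $x$, whereas the downstream use of the proposition (Proposition~\ref{lem:trivmonod0}, Theorem~\ref{thm:trivmonod}) needs $T$ to be solvable by polynomials in the strict sense.  Your limit-point argument aims to supply the missing odd degrees, but as written it does not close the gap either: an odd eigenfunction of the form $\operatorname{sgn}(x)\,|x|^g e^{-x^2/2}\tilde y_k(cx^2+C_2)$ equals $|x|^{g-1}e^{-x^2/2}$ times an odd polynomial, not $|x|^g e^{-x^2/2}$ times one, so the even and odd families do not share the prefactor $\tilde\mu$ you name; and the spectral-theoretic step (which values of $g$ make $0$ limit-point, and why the resulting full-line eigenfunctions are genuinely of this form rather than supported on a half-line) is asserted rather than argued.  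The paper's own proof is terse on exactly this point, so your instinct to say more is sound; but the unfolding needs a cleaner mechanism than the one you sketch.
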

\begin{proof}
  By assumption, $T[y]$ defined by \eqref{eq:HTrel} has polynomial eigenfunctions $y_k(z),$ $\deg
  y_k = k$ for all but finitely many degrees $k\in \Nset$.  By
  adjusting $\mu(x)$, if necessary, no generality is lost if we assume
  that these polynomials do not have a common factor; i.e., that
  $T[y]$ is primitive.  Hence, by Proposition \ref{prop:pqr}, $p(z)$
  is a polynomial of degree $\leq 2$, and $q(z), r(z)$ are rational
  functions. We see that $U(x)$ cannot be a rational function if $\deg p=2$, since by \eqref{eq:pzeta}   $\zeta(x)$ would be a
  transcendental function, and $U(x)$ given by \eqref{eq:rzeta} would be transcendental too. If $\deg p = 1$, then without loss of generality, $p(z) = 4z$
  and $\zeta(x) = x^2$.  In this case, the polynomial eigenfunctions
  have the form $\mu(x) y_k(x^2)$.  If $p(z)=1$ then $\zeta(x) = x$ and the polynomial eigenfunctions have the form $\mu(x) y_k(x)$, so no generality is lost if we assume
that the latter case holds; i.e., that $p(z) = 1$.

\end{proof}

\subsection{Wronskians of Hermite polynomials}

Wronskian determinants of a sequence of Hermite polynomials will play a prominent role in the rest of the paper, so we will introduce some preliminary notation. Given an indexed set of functions $\{f_{k_1},\dots,f_{k_n}\}$ we will denote its Wronskian determinant by 
\[ \Wr[ f_{k_1},\dots,f_{k_n}]=\left| \begin{array}{cccc}
f_{k_1} & f_{k_2}& \cdots& f_{k_n}\\
f'_{k_1} & f'_{k_2}& \cdots& f'_{k_n}\\
\vdots & \vdots &\ddots&\vdots\\
f^{(n-1)}_{k_1} & f^{(n-1)}_{k_2}& \cdots& f^{(n-1)}_{k_n}
\end{array} \right|.
\]
For an increasing sequence of natural numbers $k_1<k_2<\dots<k_n$ we can define the following non-decreasing sequence $1\leq\lambda_1\leq\lambda_2\leq\dots\leq\lambda_n$ by letting $\lambda_i=k_i-i+1$. In this context, \cite{Felder2012a}, it is customary to interpret $\lambda=(\lambda_1,\dots,\lambda_n)$ as a partition with  $|\lambda|=\sum_{i=1}^n \lambda_i$.

For an indexed sequence of functions $\{f_{k_1},\dots,f_{k_n}\}$ defining a partition $\lambda$ we shall use the following shorthand notation
\[ f_\lambda=\Wr[ f_{\lambda_1},f_{\lambda_2-1},\dots,f_{\lambda_n+n-1}]= \Wr[ f_{k_1},\dots,f_{k_n}]\]

Let $f_n$ be a polynomial of degree $n$, then the degree of a Wronskian is easily expressed in terms of the partition:
\begin{equation}\label{eq:degW}
\deg f_\lambda=|\lambda|=\sum_{i=1}^n \lambda_i  
\end{equation}

We also note that that the Wronskian of Hermite polynomials has well defined parity:
\begin{equation}\label{eq:XHparity1}
H_\lambda(-x)=(-1)^{|\lambda|} H_\lambda(x)
\end{equation}


\section{Trivial monodromy}\label{sec:Trivial}

In this section we show that every operator $T$ of the form \eqref{eq:Tp1} that is solvable by polynomials necessarily has trivial monodromy. This property of trivial monodromy is inherited by the Schr\"odinger operator $\cH$, which will allow us to invoke the important result of Oblomkov \cite{Oblomkov1999} that characterizes potentials with trivial monodromy and quadratic growth at infinity in terms of Darboux transformations of the harmonic oscillator.

\begin{definition}
  Consider a second-order operator $T[y] = y'' + q(z) y' + r(z) y$
  where $q(z), r(z)$ are meromorphic functions.
We say that $\zeta\in \Cset$ is a pole of the operator  if either $q(z)$ or $r(z)$ have a pole at $z=\zeta$.  Moreover, $\zeta$ is a regular singular
    point of the operator if
    \[ \ord_\zeta q \geq -1,\quad \text{and}\quad \ord_\zeta r \geq
    -2.\]
    The order of $\zeta$ as a pole of $T$ is defined as
    \[\ord_\zeta T =
    \min\{\,-1+\ord_\zeta \!q,\,\ord_\zeta\! r\}\]
\end{definition}

\begin{definition}

An operator $T[y]$ has \emph{trivial monodromy} at $\zeta\in
    \Cset$ if for every $\lambda \in \Cset$ there are two linearly
    independent solutions of the differential equation
    \[ T[\phi] = \lambda \phi \] that are meromorphic at $z=\zeta$.
    An operator $T$ has trivial monodromy if the above condition holds for every $\zeta\in\Cset$.
\end{definition}

We first observe that at every pole of $T$ we can decompose the operator into a sum of degree homogeneous parts.
Consider the Laurent expansions of $q(z)$ and $r(z)$ at $z=\zeta$
\begin{subequations}\label{eq:qr}
    \begin{align}
      q(z) &= \sum_{j\geq
        \ord_\zeta\! q} q_j (z-\zeta)^j,\\
      r(z) &= \sum_{j\geq \ord_\zeta\!  r} r_j (z-\zeta)^j,
    \end{align}
    \end{subequations}
    and define the operators
    \begin{subequations}\label{eq:Tj}
    \begin{align}
      T_{j}[y] &= q_{j+1} (z-\zeta)^{j+1} y' +r_j (z-\zeta)^{j} y ,\qquad j\neq -2   \\
          T_{-2}[y] &=y'' +q_{-1} (z-\zeta)^{-1} 
        y' +r_{-2} (z-\zeta)^{-2} y
  \end{align}
  \end{subequations}
   It is clear that the action of $T$ on a function $\phi(z)$ that is meromorphic at $z=\zeta$ can be written as 
    \[ T[\phi] = \sum_{j\geq \ord_\zeta T} T_j[\phi].\] For this reason, we
    will call
    \begin{equation}\label{eq:TLaurent}
    T = \sum_{j\geq \ord_\zeta T} T_j 
    \end{equation}
     the Laurent expansion of $T$
    at $z=\zeta$.

\begin{definition}Let $\cU(T)$ denote the vector space spanned by the polynomial
    eigenfunctions of $T$.
 We define the \textit{order sequence of $T$ at
    $z=\zeta$} as
    \[
    I_{\zeta}(T)  = \{ \ord_\zeta y: y \in \cU(T) \}.
    \]    
\end{definition} 
We note that an equivalent definition of the order sequence is that we can construct a basis of $\cU(T)$ with polynomials 
    \[ y_k(z) = (z-\zeta)^k + \text{higher order terms},\quad k\in I_{\zeta}(T).\] 
We adopt the convention that  $\ord_\infty y = \deg y$ and we can therefore define the following sequence:
\begin{definition}
We define the \textit{degree sequence} of an operator $T$ to be
    \[ I_\infty(T) = \{ \deg y : y\ \in \cU(T) \}.\]
\end{definition}
 In this case, we are considering a basis of $\cU(T)$ consisting of
    polynomials of the form
    \[ y_k(z) = z^k + \text{lower degree terms}, \quad k\in I_\infty(T).\]

For $\zeta \in \Cset$, let $\nu_\zeta$ be the cardinality of
    $\Nset/I_\zeta(T)$, that is the number of gaps in the order sequence.  Let $\nu_\infty$ be the number of gaps in the degree sequence. It is clear that $T$ is solvable by polynomials if and only if the \textit{codimension} $\nu_\infty<\infty$.

\begin{prop}
  \label{prop:odgaps}
 Let $T[y]$ be a differential operator that is solvable by polynomials. Then we have
  $\nu_\zeta\leq \nu_\infty$ for all $\zeta\in \Cset$.  In other words, the number of gaps in the order sequence cannot exceed the number of gaps in the degree sequence.
\end{prop}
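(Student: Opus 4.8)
The plan is to notice that Proposition~\ref{prop:odgaps} is, despite appearances, a purely linear-algebraic fact about the subspace $\cU(T)$ and uses nothing about $T$ beyond the hypothesis $\nu_\infty<\infty$: it says that if $\cU(T)$ has finite codimension in the space of all polynomials, then for each $\zeta\in\Cset$ the value-set $I_\zeta(T)=\{\ord_\zeta y:y\in\cU(T)\}$ misses at most $\nu_\infty$ natural numbers. Fix $\zeta\in\Cset$. Two elementary observations drive the argument. First, every nonzero polynomial $y$ satisfies $0\le\ord_\zeta y\le\deg y$. Second, for \emph{any} finite-dimensional subspace $W$ of polynomials, the number of distinct values attained by $\ord_\zeta$ on $W\setminus\{0\}$ is exactly $\dim W$.

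For the second observation I would work with the decreasing filtration $W=W^{(0)}\supseteq W^{(1)}\supseteq\cdots$, where $W^{(k)}=\{y\in W:\ord_\zeta y\ge k\}$. If $N$ is the largest degree occurring in $W$, then $W^{(N+1)}=0$ by the first observation, so the filtration terminates. Moreover $\dim W^{(k)}-\dim W^{(k+1)}\in\{0,1\}$, equal to $1$ precisely when $W$ contains an element of order exactly $k$: if $y_1,y_2\in W$ both have $\ord_\zeta y_i=k$, then for appropriate nonzero scalars $c_1,c_2$ the combination $c_1y_1-c_2y_2$ has order $>k$, so $W^{(k)}/W^{(k+1)}$ is at most one-dimensional. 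Summing the telescoping differences gives $\dim W=\#\{\ord_\zeta y:y\in W\setminus\{0\}\}$.

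Finally I would apply this with $W=W_n:=\cU(T)\cap\cP_n$, where $\cP_n$ denotes the polynomials of degree at most $n$. Using the basis $\{y_k:k\in I_\infty(T)\}$ of $\cU(T)$ with $\deg y_k=k$, one gets $\dim W_n=\#\bigl(I_\infty(T)\cap\{0,1,\dots,n\}\bigr)$, which equals $n+1-\nu_\infty$ as soon as $n$ is large enough that all the finitely many gaps of $I_\infty(T)$ lie below $n$. By the first observation every value of $\ord_\zeta$ on $W_n$ lies in $\{0,1,\dots,n\}$, so by the second observation the number of integers in $\{0,\dots,n\}$ that are \emph{not} attained as $\ord_\zeta y$ for some $y\in W_n$ equals $(n+1)-\dim W_n=\nu_\infty$. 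Since $I_\zeta(T)\supseteq\{\ord_\zeta y:y\in W_n\}$, at most $\nu_\infty$ integers of $\{0,\dots,n\}$ lie outside $I_\zeta(T)$; letting $n\to\infty$ yields $\nu_\zeta\le\nu_\infty$. I do not expect a serious obstacle; the only delicate points are the bookkeeping in the filtration step (that the successive quotients are at most one-dimensional and that the filtration terminates) and ensuring $n$ is taken large enough in the last step for $\dim W_n$ to have stabilized to $n+1-\nu_\infty$.
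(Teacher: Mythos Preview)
Your proof is correct and follows essentially the same route as the paper: both restrict to the finite-dimensional truncation $W_n=\cU(T)\cap\cP_n$, use that the number of distinct orders (at $\zeta$ or at $\infty$) realized in $W_n$ equals $\dim W_n$, and then let $n\to\infty$. The only difference is that you spell out the filtration argument for $|I_\zeta(W_n)|=\dim W_n$, whereas the paper simply asserts this equality without proof.
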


\begin{proof}
  Let $\cU_n\subset \cU(T)$ be the subspace spanned by polynomial
  eigenvalues of degree $\leq n$ and let
  \begin{align*}
    I_{\zeta,n} &= \{ \ord_\zeta y : y\in \cU_n \},\qquad
   \nu_{\zeta,n} = n+1 - |I_{\zeta,n}|.
  \end{align*}
  If $j\notin I_{\zeta}$, then $j\notin I_{\zeta,n}$ for all $n$ and $j$ is a persistent gap.  There could also be non-persistent
  gaps, such that $j\notin I_{\zeta,n},\; n\geq j$ but $j\in I_\zeta$.  These
  non-persistent gaps disappear for $m$ sufficiently large.  Observe that the cardinality of
  $I_{\zeta,n}$ is equal to the dimension of $\cU_n$. Hence,
  $\nu_{\zeta,n} = \nu_{\infty,n}$. By assumption, $\nu_{\infty,n} =
  \nu_\infty$ for sufficiently large $n$.  We conclude that $\nu_\zeta$, the
  number of persistent gaps has to be bounded
  by $\nu_\infty$.
\end{proof}


Consider the Laurent decomposition \eqref{eq:qr}-\eqref{eq:Tj}-\eqref{eq:TLaurent} of an operator $T$ at a pole $\zeta\in\Cset$. The following lemma shows that if $T$ is solvable by polynomials, then $\zeta$ is a singular regular point and the order sequence has a well defined structure.

\begin{lem}
  \label{lem:ordergaps}
  Let $T[y]= y''+q(z) y' + r(z) y$ be a primitive operator that is solvable by polynomials.  Then every pole of $T$ is  a regular singular point.  Furthermore, if $\zeta\in \Cset$ is  a pole,
  then $\nu_\zeta\geq 1$ and the order sequence is
  \begin{equation}
    \label{eq:Ibform}
    I_\zeta(T) = \{ 0,2,4,\ldots, 2\nu_\zeta, 2\nu_\zeta+1,2\nu_\zeta+2,\ldots \} = \Nset/
    \{ 1,3,5,\dots 2\nu_\zeta-1 \}.
  \end{equation}
  Moreover, the leading term of $T$ in the Laurent expansion
  \eqref{eq:TLaurent} is
   \[ T_{-2} = \partial_{zz} -
    \frac{2\nu_{\zeta}}{z-\zeta} \partial_z\]
\end{lem}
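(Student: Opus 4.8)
The plan is to work locally at the pole; after a translation, assume $\zeta=0$. The engine is the observation that $T$ maps $\cU(T)$ into itself — it rescales each polynomial eigenfunction — so $T[y]$ is again a \emph{polynomial}, hence holomorphic at $0$, for every $y\in\cU(T)$. Write the Laurent expansion $T=\sum_{j\ge k}T_j$ at $0$; since the block $T_{-2}$ always contains $\partial_{zz}$ one has $k\le-2$, and proving $k=-2$ will amount to proving that the pole is regular singular. It is convenient to record the weight polynomials $c_j(p)$ determined by $T_j[z^p]=c_j(p)z^{p+j}$, i.e.\ $c_j(p)=q_{j+1}p+r_j$ for $j\ne-2$ and $c_{-2}(p)=p(p-1)+q_{-1}p+r_{-2}$. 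Holomorphy of $T[y]$ then gives the basic constraint: if $y\in\cU(T)$ has $\ord_0 y=p$, the lowest-order term of $T[y]$ is a nonzero multiple of $c_k(p)z^{p+k}$, so $\ord_0 T[y]\ge p+k$ and, whenever $p+k<0$, necessarily $c_k(p)=0$.

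I would first pin down the leading block. Primitivity gives $0\in I_0(T)$ (else $z$ divides every polynomial eigenfunction), and the constraint applied to an order-$0$ element forces $c_k(0)=0$, i.e.\ $r_k=0$; hence $c_k(p)=q_{k+1}p$ with $q_{k+1}\ne0$ when $k<-2$, and $c_{-2}(p)=p(p-1+q_{-1})$ when $k=-2$. Suppose $k\le-3$. Then $0$ is the only root of $c_k$, so the constraint forbids $I_0(T)$ from meeting $\{1,\dots,-k-1\}$. But $I_0(T)$ is cofinite by Proposition~\ref{prop:odgaps}, and for large $p\in I_0(T)$ the graded action of $T_k$ — it carries an order-$p$ element of $\cU(T)$ to an order-$(p+k)$ one with nonzero leading coefficient, precisely because $c_k(p)\ne0$ — drags the descending chain $p,p+k,p+2k,\dots$ through $I_0(T)$ and in particular puts $p\bmod(-k)$ into $I_0(T)$; choosing $p$ with $p\bmod(-k)\ne0$ then contradicts $I_0(T)\cap\{1,\dots,-k-1\}=\emptyset$. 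Hence $k=-2$, which forces $\ord_0 q\ge-1$ and $\ord_0 r\ge-2$ (so the pole is regular singular) and $r_{-2}=0$, so $c_{-2}(p)=p(p-\rho)$ with $\rho:=1-q_{-1}$.

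With $k=-2$ I would identify $\rho$ and $I_0(T)$ by the same graded descent, now in steps of $2$. For an order-$(2j-1)$ element $y\in\cU(T)$, the lowest term of $T[y]$ is a nonzero multiple of $c_{-2}(2j-1)z^{2j-3}$ as soon as $\rho\ne2j-1$; this is impossible for $j=1$ (a polynomial has no $z^{-1}$ term), forcing $\rho=1$ in that case, and for $j\ge2$ it shows $\ord_0 T[y]=2j-3$, hence $2j-3\in I_0(T)$. An induction on $j$ now gives: $2j-1\in I_0(T)\Rightarrow\rho\in\{1,3,\dots,2j-1\}$. Since $I_0(T)$ is cofinite it contains some large odd integer, so $\rho$ must be an odd positive integer, $\rho=2\nu+1$. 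Running the descent $p\mapsto p-2$ — and using that $c_{-2}$ vanishes only at $0$ and $\rho$ — one reaches every even integer from large even $p$, and reaches $\rho,\rho+2,\rho+4,\dots$ from large odd $p$; since the odd integers below $\rho$ were excluded above, this pins down $I_0(T)=\Nset\setminus\{1,3,\dots,2\nu-1\}$, so $\nu_\zeta=\nu$, $q_{-1}=1-\rho=-2\nu_\zeta$, and $T_{-2}=\partial_{zz}-\tfrac{2\nu_\zeta}{z-\zeta}\partial_z$. Finally $\nu_\zeta\ge1$: if $\nu=0$ then $q_{-1}=r_{-2}=0$, so $q$ is holomorphic at $\zeta$ while $r$, being singular there, has a pole; then $ry=\lambda y-y''-qy'$ is a polynomial, forcing every polynomial eigenfunction to vanish at $\zeta$ and contradicting primitivity.

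The genuine obstacle, I expect, is the step $k=-2$: excluding a higher-order pole of $q$, i.e.\ an irregular singular point. A single eigenfunction constrains only its leading Taylor exponent at $\zeta$ — a pointwise indicial condition, far too weak on its own; the argument really needs the rigidity of the \emph{entire} cofinite order sequence $I_\zeta(T)$ under the graded action of the leading Laurent block, fed by the bound $\nu_\zeta\le\nu_\infty$ from Proposition~\ref{prop:odgaps}. Once $k=-2$ is secured, the remaining steps are a routine induction.
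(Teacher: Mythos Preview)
Your proof is correct and follows essentially the same approach as the paper: Laurent-expand $T$ at the pole, use that $T$ preserves $\cU(T)$ so the leading block must respect the order sequence, and combine primitivity with the cofiniteness bound of Proposition~\ref{prop:odgaps} and the fact that the indicial polynomial $c_k$ has at most one (for $k<-2$) or two (for $k=-2$) roots. The only difference is cosmetic---the paper phrases the key constraint as \emph{gaps} propagating under $j\mapsto j-d$ while you phrase it dually as \emph{elements} of $I_0$ descending under iteration of $T$---and your separate handling of the case $\nu=0$ at the end makes explicit a point the paper's argument leaves implicit.
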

\begin{proof} Without loss of generality, we take $\zeta = 0$ and
  write $\nu = \nu_0$.  Let
  \[T = \sum_{j=d}^\infty T_j,\quad d= \ord_0 T \] be the Laurent
  expansion at $z=0$.  To prove that $\zeta=0$ is a regular singular
  point it suffices to show that $d\geq -2$.  We observe that each
  $T_j$ is degree-homogeneous, i.e. $T_j$ either annihilates a given
  monomial $z^k$, or it shifts its degree by $j$.  A non-zero $T_j$
  can annihilate at most two distinct monomials.
   $T_{d}$ is the leading term of the operator, so it must preserve the monomial vector space spanned by $z^j, \; j \in I_0$.  Since $T$ is
  primitive, we must have $0\in I_0$, which means that $T_{d}[1] = 0$ and therefore $r_d = 0$.  But $T$ has a pole at $z=0$ so necessarily $d<0$.  If $d=-1$, then $r_{-1} = 0$, contrary to hypothesis that there is a pole at $z=0$, so $d$ must be  $d\leq -2$. We next show that $d=-2$ exactly.  By Proposition
  \ref{prop:odgaps}, $\nu<\infty$.  Let $j\notin I_0$ be one such
  gap. Then, either $j-d\notin I_0$, or $T_{d}$ annihilates $z^{j-d}$.
 We conclude that $1\notin I_0$ must be a gap of $I_0$ since otherwise $T_{d}$ would be required to
annihilate three monomials: $z^0,z^1$ and at least one higher degree monomial, which is impossible.  Thus, for some integer $\alpha \geq
  1$, there exist gaps $1,1-d,1-2d,\ldots, 1-(\alpha-1)d\notin I_0$,
  with $T_{d}[z^{1-\alpha d}]=0$.  Since $T_{d}$ annihilates $1$ and
  $z^{1-\alpha d}$, it cannot annihilate any other
  monomial. Therefore, the above gaps are the only gaps in $I_0$.  It
  follows that $\alpha = \nu$ and that $2\in I_0$ is \emph{not} a
  gap. If the leading order was $d<-2$ then $T_{d}$ would also be
  required to annihilate three monomials: $1,z^2, z^{1-d\alpha}$; which is
  impossible.  We conclude then that $d=-2$ and therefore
  \[ T_{-2} [1] = T_{-2}[z^{2\nu+1}] =0,\] which in turn imply that 
  \[ T_{-2} = \partial_{zz} -
    \frac{2\nu}{z} \partial_z.\]
\end{proof}

\begin{proof}[Proof of Proposition \ref{prop:pqr}]
  By Proposition \ref{prop:cramer}, the operator coefficients are
  rational functions. Now it is obvious that $p(z)$ must be a
  polynomial, since if it had a pole in $\zeta$ then $\ord_\zeta T\geq
  3$ which is forbidden by Lemma \ref{lem:ordergaps}.  Next, let
  $\zeta_i, \; i=1,\ldots, N$ be the poles of the operator.  Consider
  the partial fraction expansions
  \begin{align*}
    q(z) &= q_0(z) + \sum_{i=1}^N
    \frac{q^{(i)}(z)}{(z-\zeta_i)^{a_i}},\quad a_i = \ord_{\zeta_i} q,\\
    r(z) &= r_0(z) + \sum_{i=1}^N \frac{r^{(i)}(z)}{(z-\zeta_i)^{b_i}},
    \quad b_i = \ord_{\zeta_i} r,
  \end{align*}

  where $q^{(i)}(z), r^{(i)}(z)$ are polynomials such that $\deg q^{(i)}< a_i,
  \deg r^{(i)} < b_i$.  Extend the expansion to the operator by setting
  \begin{align*}
    T^{(0)}[y] &= p(z) y'' + q^{(0)}(z) y' + r^{(0)}(z) y,\\
    T^{(i)}[y] &= \frac{q^{(i)}(z)}{(z-\zeta_i)^{a_i}} y' + \frac{r^{(i)} (z)}{(z-\zeta_i)^{b_i}} y,\quad i=1,\ldots, N
  \end{align*}
  In this way
  \[ T = \sum_{i=0}^N T^{(i)} .\] Next, observe that if $y_k(z),\;
  \deg y_k = k$ is a polynomial eigenfunction of $T[y]$, then
  $T^{(i)}[y_k]$ must be a polynomial of degree strictly less than $k$ for
  every $i=1,\ldots, N$.  It follows that $T^{(0)}[y_k]$ is a polynomial
  of degree $k$ for infinitely many $k$.  Therefore, $T^{(0)}[y]$  does not raise degree, and it follows that
\[ \deg p \leq 2,  \quad \deg q^{(0)}\leq 1, \quad  \deg r^{(0)} = 0. \]

\end{proof}

\begin{prop}
  \label{lem:trivmonod0}
  Let $T[y]= y''+q(z) y' + r(z) y$ be a primitive operator that is
  solvable by polynomials.  Then the general solution of the equation $T[y]=\lambda y$ is an entire function for any $\lambda\in\Cset$. In particular, $T$ has trivial monodromy.

\end{prop}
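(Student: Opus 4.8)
The plan is to argue locally at each point of $\Cset$. By Proposition~\ref{prop:cramer} the coefficients $q,r$ are rational, so $T$ has only finitely many poles; away from them $q,r$ are holomorphic, every solution of $T[\phi]=\lambda\phi$ is holomorphic, and such points trivially carry two linearly independent holomorphic solutions. It therefore suffices to show that at each pole $\zeta$ every solution of $T[\phi]=\lambda\phi$ extends holomorphically across $\zeta$: then each solution is entire and, being one of a two-dimensional family of holomorphic solutions near every point, $T$ has trivial monodromy. Fix a pole $\zeta$ and write $\nu=\nu_\zeta$. By Lemma~\ref{lem:ordergaps}, $\zeta$ is a regular singular point with $\nu\ge1$ and leading Laurent term $T_{-2}=\partial_{zz}-\tfrac{2\nu}{z-\zeta}\,\partial_z$. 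Replacing $T$ by $T-\lambda$ alters only the Laurent coefficient $r_0$ of $r$ at $\zeta$, leaving $T_{-2}$ unchanged, so for \emph{every} $\lambda\in\Cset$ the indicial equation at $\zeta$ is $s(s-1-2\nu)=0$, with roots $0$ and $2\nu+1$.

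Next I would invoke the Frobenius/Fuchs analysis at the regular singular point $\zeta$. The larger exponent $2\nu+1$ always produces a genuine holomorphic solution $\phi_1(z)=(z-\zeta)^{2\nu+1}\bigl(1+O(z-\zeta)\bigr)$, because the indicial polynomial $s\mapsto s(s-1-2\nu)$ is nonzero at $2\nu+1+k$ for every $k\ge1$; the exponent $0$ then yields a second solution of the form $\phi_2(z)=\bigl(1+O(z-\zeta)\bigr)+C_\lambda\,\phi_1(z)\log(z-\zeta)$ for some $C_\lambda\in\Cset$. Hence for a given $\lambda$ the monodromy at $\zeta$ is trivial precisely when $C_\lambda=0$, in which case $\phi_1$ and $\phi_2$ are both holomorphic at $\zeta$; conversely, if $C_\lambda\ne0$ then every solution of $T[\phi]=\lambda\phi$ that is holomorphic at $\zeta$ is a scalar multiple of $\phi_1$ and so vanishes at $\zeta$ to order exactly $2\nu+1$.

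The crux is to show that $C_\lambda=0$ for all $\lambda$. Running the Frobenius recursion on the exponent-$0$ branch, every coefficient is a polynomial in $\lambda$ (the recursion depends affinely on $\lambda$, which enters only through $r_0\mapsto r_0-\lambda$), and the obstruction that must vanish at level $2\nu+1$---a fixed nonzero multiple of $C_\lambda$---is a polynomial in $\lambda$ of degree at most $\nu$. On the other hand, Lemma~\ref{lem:ordergaps} gives $I_\zeta(T)=\{0,2,4,\dots,2\nu,2\nu+1,2\nu+2,\dots\}$, so each of the $\nu+1$ orders $0,2,\dots,2\nu$ is realized as $\ord_\zeta y$ for some $y\in\cU(T)$. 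A filtration argument of the kind used in the proof of Proposition~\ref{prop:odgaps} shows that any $y\in\cU(T)$ with $\ord_\zeta y\le2\nu$ differs, by a term of order $\ge2\nu+1$ at $\zeta$, from a linear combination of those polynomial eigenfunctions that themselves vanish at $\zeta$ to order $\le2\nu$; hence the span of these low-order eigenfunctions already realizes all $\nu+1$ orders $0,2,\dots,2\nu$, and so there are at least $\nu+1$ linearly independent polynomial eigenfunctions of order $\le2\nu$ at $\zeta$. Each such eigenfunction is an entire solution of $T[\phi]=\lambda_n\phi$ of order $<2\nu+1$ at $\zeta$, which by the preceding paragraph forces $C_{\lambda_n}=0$. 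Since the eigenvalues $\lambda_n$ are pairwise distinct (a short consequence of $\deg q\le1$ and $\deg r\le0$ from Proposition~\ref{prop:pqr}: acting with $T$ on the leading term of a degree-$n$ eigenfunction gives $\lambda_n=\alpha n+\beta$ with $\alpha\ne0$), the polynomial $C_\lambda$ has at least $\nu+1$ distinct zeros but degree at most $\nu$, and therefore $C_\lambda\equiv0$. Thus every solution of $T[\phi]=\lambda\phi$ is holomorphic at $\zeta$, and running over the finitely many poles finishes the proof.

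I expect the last paragraph to carry the real weight. It is immediate that $C_\lambda$ vanishes at whatever few eigenvalues happen to admit a polynomial eigenfunction with low vanishing order at $\zeta$; the substantive point is upgrading this to the identity $C_\lambda\equiv0$, which rests on playing off two quantitative facts against each other---the upper bound $\deg_\lambda C_\lambda\le\nu$ obtained from the Frobenius recursion, and the extraction of at least $\nu+1$ linearly independent low-order polynomial eigenfunctions from the rigid order-sequence structure of Lemma~\ref{lem:ordergaps}.
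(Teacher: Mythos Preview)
Your argument is correct, and it follows a genuinely different strategy from the paper's.

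The paper does not track the $\lambda$-dependence of the Frobenius obstruction at all. Instead, it works \emph{constructively} for each fixed $\lambda$: using the $T$-invariance of $\cU(T)$ and a basis $y_0,y_2,\dots,y_{2\nu}\in\cU(T)$ with $\ord_\zeta y_{2j}=2j$ (these are not required to be eigenfunctions), it solves a lower-triangular linear system to build a polynomial $p(z;\lambda)=y_0+\sum_{j=1}^{\nu}p_j(\lambda)\,y_{2j}$ satisfying $(T-\lambda)[p]=O\bigl((z-\zeta)^{2\nu}\bigr)$. Since $p$ is holomorphic and agrees with the exponent-$0$ Frobenius solution through order $2\nu$, a one-line comparison of leading Laurent terms forces the log coefficient to vanish. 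The invariance of $\cU(T)$ under $T$ is what makes the triangular system solvable.

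Your route instead treats the log coefficient $C_\lambda$ globally as a polynomial in $\lambda$: you bound $\deg_\lambda C_\lambda\le\nu$ via the Frobenius recursion (this is correct: $\lambda$ enters only through $r_0-\lambda$, so $\deg_\lambda b_k\le\lfloor k/2\rfloor$ and the obstruction at level $2\nu+1$ has degree $\le\nu$), and then manufacture $\nu+1$ distinct zeros from low-order polynomial eigenfunctions. Your quotient/filtration argument to extract those eigenfunctions is sound, and the passage from ``$\nu+1$ linearly independent eigenfunctions'' to ``$\nu+1$ distinct eigenvalues'' is justified once one notes that $\alpha=q_1\ne0$ (otherwise infinitely many polynomial eigenfunctions would share a single eigenvalue, impossible for a second-order ODE) and hence each polynomial eigenspace is one-dimensional. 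You state $\alpha\ne0$ without proof; it deserves one sentence.

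What each approach buys: the paper's construction avoids any analysis of how the recursion depends on $\lambda$ and yields, as a byproduct, an explicit polynomial approximant to the second Frobenius solution; your argument is shorter in spirit and makes transparent \emph{why} the number $\nu+1$ of low orders in $I_\zeta(T)$ exactly matches what is needed---it is one more than the degree of the obstruction polynomial.
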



\begin{proof}
  Without loss of generality we suppose that $\zeta=0$ and write $\nu=\nu_0$. By Lemma \ref{lem:ordergaps}, $\nu\geq 1$ and
  \begin{equation}
    \label{eq:Tord-2}
    T= \sum_{j=-2}^\infty T_j,\quad\text{where} \quad T_{-2}
    = \partial_{zz} - \frac{2\nu}{z} \partial_z 
  \end{equation}
  is the Laurent expansion of the operator at $z=0$. For every
  $\lambda \in \Cset$, the roots
  of the indicial equation for the differential equation
  \begin{equation}
    \label{eq:Tlambday}
     (T-\lambda)[y] = 0
  \end{equation}
  are $0$ and $2\nu+1$.  Hence there exists a unique holomorphic
  solution of the form
  \[ y=a(z;\lambda) = z^{2\nu+1} \left( 1+ \sum_{n=1}^\infty a_n z^n
  \right)\] 

  The difference between the roots of the indicial equation of
  \eqref{eq:Tlambday} is an integer, so we have to show that a logarithmic singularity does not arise for the solution
  corresponding to the smaller root.  To that end, choose a basis $y_k
  \in \cU(T), \; k\in I_0(T)$ such that
  \[ y_k = z^k + \text{higher order terms},\quad k\in I_0(T).\] 
  By Lemma \ref{lem:ordergaps},
  \[ I_0(T) = \{ 0,2,\ldots, 2\nu\} \cup \{ 2\nu+1, 2\nu+2,\ldots \}
  .\] By the usual method of Frobenius, there exists a unique series solution
  $y=b(z;\lambda)$ to \eqref{eq:Tlambday} of the form
  \[ b(z;\lambda) = 1+ \sum_{j=1}^{2\nu} b_j z^j + c\,
  a(z;\lambda)\, \log z+ \sum_{j=1}^\infty a_j z^{2\nu+1+j} \]
  Our claim will be proven once we show that  $c=0$; i.e. that
  $b(z;\lambda)$ is holomorphic in $z$.
  Since
  $\cU(T)$ is $T$-invariant, the action of $T$ on the first $\nu+1$
  basis elements can be expressed as
  \begin{align*}
    T[y_0] &= \sum_{k=0}^\nu B_{0k} \,y_{2k} + \sum_{k=2\nu+1}^{N_0}
    A_{0k} z^k,\\
    T[y_{2j}] &= 2j(2j-1-2\nu)y_{2j-2} + \sum_{k=j}^\nu B_{jk}
    \,y_{2k} + \sum_{k=2\nu+1}^{N_j} 
  A_{jk} z^k,\quad j=1,2,\ldots, \nu,
  \end{align*}
  where $A_{jk}, B_{jk}\in \Cset$ and $N_j\in \Nset$.  
  Let us define the polynomial
  \[ p(z;\lambda) = y_0(z) + \sum_{j=1}^\nu p_j(\lambda) y_{2j}(z) \] where the
  coefficients $p_j,\; j=1,\ldots, \nu$ are defined by the recurrence
  relations
  \begin{align*}
    & 2(1-2\nu) p_1 + (B_{00}-\lambda)  = 0\\
    & 4(3-2\nu) p_2 + (B_{11}-\lambda)p_1 + B_{01}  = 0\\
    & 6(5-2\nu) p_3 + (B_{22}-\lambda)p_2 +B_{12} p_1 +  B_{02}  = 0\\
    &\qquad \vdots \\
    & -2\nu p_\nu  + (B_{\nu-1,\nu-1} -\lambda) p_{\nu-1} +
    \sum_{j=1}^{\nu-2} B_{j,\nu-1} b_j + B_{0,\nu-1} = 0
  \end{align*}
  By construction, it follows that
  \[ (T-\lambda)[p] = O(z^{2\nu}).\]
  Hence,
  \[  p(z;\lambda) = 1+ \sum_{j=1}^{2\nu} b_j z^j + O(z^{2\nu+1})\]
  and then
  \[ b(z;\lambda) = p(z;\lambda) + c\, a(z;\lambda) \log z +
  \sum_{j=0}^\infty \ta_j z^{2\nu+1+j} \] for some choice of constants
  $\ta_j \in \Cset$.  
  Next, observe that
  \begin{align*}
      T_{-2}[z^{2\nu+1}\log z] &= (1+2\nu) z^{2\nu-1},\\
      T_{-2}[z^{2\nu+1}] &= 0\\
     T_{-2} [z^{2\nu+1+j}] &= j (j+1+2\nu) z^{2\nu+j-1} =
  O(z^{2\nu}),\quad j\geq 1.
  \end{align*}
 It follows immediately that $c=0$ and
 $b(z;\lambda)$ is holomorphic.  
  The key element in the proof is the fact  that the invariance of $\cU(T)$ guarantees the
  absence of a logarithmic singularity.
\end{proof}



We are now ready to give the proof of Theorem \ref{thm:trivmonod}.
\begin{proof}[Proof of Theorem \ref{thm:trivmonod}]

By Proposition \ref{prop:HTequiv}, every such Schr\"odinger operator $\cH$ is equivalent via \eqref{eq:HTrel} to an operator $T$ of the form \eqref{eq:Tp1} which is solvable by polynomials. Proposition \ref{lem:trivmonod0} then asserts that $T[y]$ has trivial monodromy. We only need to prove that this property is inherited by $\cH$ via the equivalence formulas \eqref{eq:HTrel}.
Indeed, operator $T$ has the form
  \[ T[y] = y'' + q(z) y' + r(z) y \] where $q(z), r(z)$ are rational
  functions.  By Lemma
  \ref{lem:ordergaps},
  \[ q(z)=q_0+ q_1 z -2 \sum_{i=1}^N \frac{\nu_{\zeta_i}}{z-\zeta_i}\]
  where $\zeta_i,\; i=1,\ldots, N$ are the poles of the operator and
  where the $\nu_{\zeta_i} \geq 1$ are the corresponding gap counts.
  Performing an affine change of variable, without loss of generality we can take
  $q_0=0, q_1 = -2$.  Expression \eqref{eq:pzeta} implies that $\zeta(x)=x$ and \eqref{eq:qzeta} that
    \begin{equation}
      \label{eq:mutrivmonod}
      \mu(x) = e^{-x^2} \prod_{i=1}^N (x-\zeta_i)^{-\nu_{\zeta_i}}.
    \end{equation}
    Finally, since $\mu(x)$ is meromorphic and $T$ has trivial monodromy, we see from \eqref{eq:HTrel} that $\cH$ also has trivial monodromy.
\end{proof}

\section{Darboux transformations and regularity of the potential}\label{sec:Darboux}

We have established in the previous sections that a rational potential with quadratic growth at infinity which is solvable by polynomials has trivial monodromy. The connection with Darboux transformations is provided by the following result proved by Oblomkov in \cite{Oblomkov1999}

\begin{thm}[Oblomkov 1999]\label{thm:oblomkov}
Every monodromy-free Schr\"odinger operator $\cH$ with a quadratically increasing rational potential has the form
\begin{equation}\label{eq:HDarb}
\cH=-D_{xx} +x^2 -2 D_{xx}\log \Wr \left[H_{k_1},\dots,H_{k_n}\right]
\end{equation}
where $k_1<\dots<k_n$ is an increasing sequence of positive integers and $H_m(x)$ is the $m$-th Hermite polynomial.
\end{thm}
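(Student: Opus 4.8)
The plan is to prove Theorem~\ref{thm:oblomkov} by induction on a complexity measure attached to the finite poles of the potential: a single \emph{rational Darboux transformation} performed at a worst pole lowers the measure, after finitely many steps one arrives at the bare harmonic oscillator, and then the Darboux chain is reassembled into the Hermite--Wronskian formula~\eqref{eq:HDarb}. The pole-free base case is immediate: a rational $V$ with quadratic growth and no finite pole is a polynomial of degree $\le2$, hence $x^2$ up to an affine change of variable, i.e.\ the empty Wronskian.

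The first ingredient is the local analysis at a finite pole $z_0\in\Cset$ of $V(x)=x^2+a(x)/b(x)$. Writing $-y''+Vy=\lambda y$ in Frobenius form at $z_0$ and arguing as in Lemma~\ref{lem:ordergaps} and Proposition~\ref{lem:trivmonod0}, but for the $-\partial_{xx}+V$ normalization rather than for \eqref{eq:Tp1}, the requirement that for \emph{every} $\lambda\in\Cset$ there be two linearly independent meromorphic solutions forces the principal part of $V$ at $z_0$ to be $\ell_0(\ell_0+1)(x-z_0)^{-2}+\cdots$ with an integer $\ell_0\ge1$; the indicial roots are then $-\ell_0$ and $\ell_0+1$, and the intervening \emph{odd} Laurent coefficients of $V$ must satisfy the Duistermaat--Gr\"unbaum trivial-monodromy relations that remove the would-be logarithm. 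In particular, at each $z_0$ the generic solution has the singular exponent $-\ell_0$. Set $m(z_0)=\ell_0$ and define the complexity $\Delta(\cH)=\sum_z m(z)\bigl(m(z)+1\bigr)$, summed over the finitely many finite poles.

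The reduction step is to convert trivial monodromy into a \emph{quasi-rational} factorizing eigenfunction. After normalizing the linear term of the first-order data as in the proof of Theorem~\ref{thm:trivmonod}, every solution of $(\cH-\lambda)y=0$ is meromorphic on $\Cset$ with poles only among the finite poles $z_1,\dots,z_N$ of $V$ (trivial monodromy) and behaves like $x^p e^{\pm x^2/2}$ at $x=\pm\infty$. I would argue --- this is the heart of the matter --- that there is a value of $\lambda$ for which the corresponding factorizing solution $\phi$ is quasi-rational, i.e.\ of the form $\phi=P(x)\,e^{\pm x^2/2}\big/\prod_{i=1}^N(x-z_i)^{\ell_i}$ with $P$ a polynomial and with $\phi$ singular (exponent $-\ell_i$) at every finite pole; this is the analogue, in the quadratic-growth setting, of the bispectrality mechanism of Duistermaat--Gr\"unbaum, and is exactly where the full strength of trivial monodromy is used. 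Granting this, form
\[ \tilde\cH=-\partial_{xx}+\tilde V,\qquad \tilde V=V-2\,\partial_{xx}\log\phi. \]
Then $\tilde V$ is again a rational extension of the harmonic oscillator with quadratic growth; $\tilde\cH$ is still monodromy-free, since a Darboux transformation with meromorphic factorizing function carries meromorphic solution bases to meromorphic solution bases; at every $z_i$ the pole order of $\tilde V$ drops from $\ell_i$ to $\ell_i-1$, and the only new finite poles lie at the zeros of $P$, whose multiplicities are themselves constrained to be triangular numbers by monodromy-freeness. Using the one free parameter $\lambda$ to keep $\deg P$ minimal, one checks that $\Delta(\tilde\cH)<\Delta(\cH)$, so the induction terminates at the base case.

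It remains to run the chain backwards. Each factorizing function used above is of the form $H(x)e^{\pm x^2/2}$ with $H$ rational, so $\cH$ is an $n$-step Darboux--Crum transform of the harmonic oscillator with seeds drawn from the eigenstates $H_k(x)e^{-x^2/2}$ and the virtual states $H_k(\mathrm{i}x)e^{+x^2/2}$; Crum's determinant formula gives $\cH=-\partial_{xx}+x^2+c-2\,\partial_{xx}\log\Wr[\phi_1,\dots,\phi_n]$. Pulling the common Gaussian factors out of the Wronskian, and applying a Maya-diagram / Wronskian \emph{flip} identity of the type used in \cite{Felder2012a}, one rewrites this --- up to sign and an additive constant in the potential --- as $x^2-2\,\partial_{xx}\log\Wr[H_{k_1},\dots,H_{k_n}]$ for a genuine increasing sequence $k_1<\dots<k_n$ of positive integers, with $c$ pinned down by the normalization $V-x^2\to0$ at infinity. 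The converse is straightforward: for any such sequence the potential~\eqref{eq:HDarb} is visibly rational with quadratic growth, and being a Darboux--Crum transform of the monodromy-free harmonic oscillator it is itself monodromy-free. The main obstacle, as indicated, is the existence of the quasi-rational factorizing eigenfunction together with the accompanying verification that $\Delta$ strictly decreases; the local analysis and the reassembly are comparatively routine.
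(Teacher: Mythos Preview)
The paper does not actually prove Theorem~\ref{thm:oblomkov}; it is quoted as a known result from \cite{Oblomkov1999} and used as a black box in the proof of Theorem~\ref{thm:main}. So there is no ``paper's own proof'' to compare against.

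That said, your outline follows the correct overall architecture, which is indeed the one in Oblomkov's paper and in the Duistermaat--Gr\"unbaum prototype \cite{Duistermaat1986}: local Frobenius analysis forces the principal parts to be $\ell(\ell+1)(x-z_0)^{-2}$ with integer $\ell$, one finds a quasi-rational factorizing solution, a Darboux step lowers a complexity measure, and the chain is rewound via Crum's formula.

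However, you have correctly identified your own gap, and it is a genuine one. The assertion that ``there is a value of $\lambda$ for which the corresponding factorizing solution $\phi$ is quasi-rational'' is precisely the nontrivial lemma in Oblomkov's argument; it does not follow from trivial monodromy alone without work. In \cite{Oblomkov1999} this is established by writing a candidate solution as $e^{-x^2/2}P(x)/\prod_i(x-z_i)^{\ell_i}$ with $P$ polynomial of a prescribed degree, and then showing that the resulting linear system for the coefficients of $P$ (together with the choice of $\lambda$) is consistent --- the monodromy-freeness constraints on the odd Laurent coefficients of $V$ are exactly what make the obstruction vanish. Your sketch invokes this existence as an ``analogue of the bispectrality mechanism'' but supplies no argument.

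A second, smaller gap is the monotonicity of $\Delta$. You assert that ``using the one free parameter $\lambda$ to keep $\deg P$ minimal, one checks that $\Delta(\tilde\cH)<\Delta(\cH)$,'' but the new poles at the zeros of $P$ can themselves carry nontrivial $m$-values, and the inequality requires an explicit count: at each old pole the contribution drops from $\ell(\ell+1)$ to $(\ell-1)\ell$, so the old poles lose $\sum_i 2\ell_i$, while the new simple poles contribute at most $2\deg P$; one must bound $\deg P$ in terms of $\sum_i\ell_i$ to close this. In Oblomkov's version the factorizing function is chosen so that $P$ has degree exactly $\sum_i\ell_i - 1$ (or one uses a different, cleaner complexity measure), which makes the strict decrease immediate. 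Without pinning this down, the induction does not terminate.
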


This expression translates the fact that such a potential can be obtained by a sequence of state-deleting Darboux transformations at levels $k_1,\dots,k_n$ from the quantum harmonic oscillator.

Although not discussed in \cite{Oblomkov1999}, such a potential will in general have singularities in the real line. We need to investigate thus for which sequences $k_1,\dots,k_n$ will the Wronskian $\Wr(H_{k_1},\dots,H_{k_n})$ have no real zeros.

This problem has been addressed by Krein for Sturm-Liouville problems on the half-line, \cite{Krein1957} and independently by Adler \cite{Adler1994} for Sturm-Liouville problems on a bounded interval. Their result is therefore more general than the case we need here and the proof in \cite{Adler1994} can be extended without difficulty to the case of an infinite interval. The main result is the following

\begin{thm}[Krein-Adler]\label{thm:krein}
Let $\phi_j$ be the eigenfunctions of a pure-point Sturm-Liouville operator \mbox{$L=-D_{xx}+U$} defined on the real line 
\begin{equation}
L[\phi_j]=\lambda_j \phi_j,\quad j=0,1,2,\dots
\end{equation}
The Wronskian $\Wr[\phi_{k_1},\dots,\phi_{k_n}]$ has no zeros in the real line if and only if the sequence $\{k_1,\dots,k_n\}$ has the following structure 
\begin{equation}\label{eq:Adlerseq}
\{0,\dots,k_0+1\}\cup\{k_1,k_1+1,k_2,k_2+1,\dots,k_m,k_m+1\}
\quad\text{ with }\quad k_i+1<k_{i+1}
\end{equation}
for all $i=0,\dots,n-1$.
\end{thm}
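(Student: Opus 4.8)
The plan is to work inside the Darboux--Crum correspondence between Wronskians of eigenfunctions and iterated state-deleting Darboux transformations of $L$, combined with Sturm oscillation theory. I would first assemble four ingredients. \emph{(i) Crum's identity:} if $\Wr[\phi_{k_1},\dots,\phi_{k_n}]$ is nowhere zero on $\Rset$ then $L_K:=L-2D_{xx}\log\Wr[\phi_{k_1},\dots,\phi_{k_n}]$ is again a nonsingular pure-point Sturm--Liouville operator with the same asymptotics, its eigenfunctions are $\psi_j=\Wr[\phi_{k_1},\dots,\phi_{k_n},\phi_j]/\Wr[\phi_{k_1},\dots,\phi_{k_n}]$ with eigenvalues $\lambda_j$ for $j\notin\{k_1,\dots,k_n\}$, and $\Wr[\psi_{j_1},\dots,\psi_{j_m}]=\Wr[\phi_{k_1},\dots,\phi_{k_n},\phi_{j_1},\dots,\phi_{j_m}]/\Wr[\phi_{k_1},\dots,\phi_{k_n}]$. \emph{(ii) Oscillation theorem:} the eigenfunction of a nonsingular pure-point Sturm--Liouville operator whose eigenvalue is the $(\ell{+}1)$-st in increasing order has exactly $\ell$ zeros, all simple. \emph{(iii)} Ground states are nodeless. \emph{(iv) Pairing lemma:} if $\phi_m,\phi_{m+1}$ are consecutive eigenfunctions then $\Wr[\phi_m,\phi_{m+1}]$ has no real zero; I would get this from $D_x\Wr[\phi_m,\phi_{m+1}]=(\lambda_m-\lambda_{m+1})\phi_m\phi_{m+1}$ together with the strict interlacing of the zeros of $\phi_m$ and $\phi_{m+1}$ and the decay at $\pm\infty$, which force $\Wr[\phi_m,\phi_{m+1}]$ to be monotone between consecutive zeros of $\phi_m\phi_{m+1}$ and to keep one fixed nonzero sign at all those zeros and at $\pm\infty$.

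For the \emph{sufficiency} direction I would let $K$ have the stated form, i.e.\ the disjoint union of an initial block $\{0,1,\dots,r\}$ (possibly empty) and separated consecutive pairs $\{k_i,k_i{+}1\}$, and build $\Wr[\phi_K]$ up. First the initial block: deleting the ground state of a nonsingular pure-point operator leaves a nonsingular operator whose new ground state is again nodeless, so iterating and using Crum's identity shows by induction that every $\Wr[\phi_0,\dots,\phi_s]$, $s\le r$, is nodeless. Then adjoin the pairs one at a time from the bottom up: if $K_{t-1}$ is the part already assembled (of allowed form, with $\Wr[\phi_{K_{t-1}}]$ nodeless, so $L_{K_{t-1}}$ nonsingular) and $\{k,k{+}1\}$ is the next pair, then $\psi_k,\psi_{k+1}$ are consecutive eigenfunctions of $L_{K_{t-1}}$, so $\Wr[\psi_k,\psi_{k+1}]$ is nowhere zero by \emph{(iv)}, and Crum's identity gives $\Wr[\phi_{K_t}]=\Wr[\psi_k,\psi_{k+1}]\cdot\Wr[\phi_{K_{t-1}}]$, a product of nowhere-vanishing functions. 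By induction $\Wr[\phi_K]$ has no real zero.

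For the \emph{necessity} direction (contrapositive) I would suppose $K$ is \emph{not} of the stated form, equivalently that $K$ has a maximal block of odd length not containing $0$ --- a ``bad'' block --- and must exhibit a real zero of $\Wr[\phi_K]$. The idea is to strip away the good sub-blocks of $K$ via Crum's identity: stripping a good block $K_0$ (which by the sufficiency direction is nodeless, so $L_{K_0}$ is a genuine nonsingular pure-point operator) replaces $\Wr[\phi_K]$ by $\Wr[\phi_{K_0}]$ times a Wronskian of eigenfunctions of $L_{K_0}$, and since $\Wr[\phi_{K_0}]$ is nowhere zero this preserves the set of real zeros (with multiplicities). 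If one can reduce in this way to a single bad block $K=\{a,a{+}1,\dots,a{+}2s\}$ with $a\ge1$, the conclusion is immediate: Crum's identity gives $\Wr[\phi_K]=\Wr[\phi_a,\dots,\phi_{a+2s-1}]\cdot\psi_{a+2s}$, where $\Wr[\phi_a,\dots,\phi_{a+2s-1}]$ is nodeless (an even block) and $\psi_{a+2s}$ is the $(a{+}1)$-st eigenfunction of the nonsingular operator $L_{\{a,\dots,a+2s-1\}}$, hence has exactly $a\ge1$ zeros by \emph{(ii)}.

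The hard part is precisely this last reduction when $K$ carries \emph{more than one} bad block: peeling one bad block off the rest forces an intermediate set whose Wronskian already vanishes somewhere, so the corresponding operator is singular, its ``eigenfunctions'' have poles, and one must rule out that the zeros produced by one bad block are cancelled by poles coming from another. The cleanest route I see around this is to prove the sharper quantitative statement
\[ \#\{\text{real zeros of }\Wr[\phi_{k_1},\dots,\phi_{k_n}]\text{, counted with multiplicity}\}\;=\;\sum_{p\ge1}p\,\bigl(\operatorname{mult}_\lambda(p)\bmod 2\bigr),\qquad \lambda_i=k_i-i+1, \]
after which the theorem follows because the right-hand side vanishes exactly for $K$ of the stated form. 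This exact count can be secured either by Adler's direct sign-analysis of the Darboux--Crum chain, or --- since it is an isospectral invariant within the class of pure-point operators with quadratically growing potential --- by continuously deforming $U$ to the harmonic-oscillator potential $x^2$, where the $\phi_k$ become Hermite functions and the number of real zeros of the Wronskian Hermite polynomials is known explicitly; I would regard the verification that no real zeros are created or destroyed along such a deformation (i.e.\ that complex zeros never meet the real axis) as the principal technical point to nail down.
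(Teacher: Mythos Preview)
The paper does not give its own proof of this theorem; it quotes the result from Krein and Adler and remarks that Adler's argument for a bounded interval extends without difficulty to the whole line. So there is no proof in the paper to compare against, and what you have written is an independent reconstruction.

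Your sufficiency direction is sound and is essentially Adler's inductive construction: remove the initial block one ground state at a time, then adjoin the consecutive pairs using the pairing lemma. Your argument for the pairing lemma via $D_x\Wr[\phi_m,\phi_{m+1}]=(\lambda_m-\lambda_{m+1})\phi_m\phi_{m+1}$ together with strict interlacing is the standard one and goes through; the sign check at successive zeros of $\phi_m\phi_{m+1}$ confirms that $\Wr[\phi_m,\phi_{m+1}]$ keeps a fixed nonzero sign.

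The necessity direction, however, has a real gap, and you identify it correctly. When $K$ contains two or more maximal odd blocks not touching $0$, no ordering of the ``strip a good sub-block'' moves ever reaches a nonsingular intermediate operator carrying a single bad block, so your oscillation argument cannot be invoked. The two escape routes you propose are both substantial theorems in their own right: the exact zero-count formula you quote is \emph{stronger} than Krein--Adler and is itself nontrivial (for Hermite Wronskians it was only established rigorously well after Adler's paper), while the deformation-to-$x^2$ argument requires ruling out creation or annihilation of real zeros along the homotopy, which you acknowledge is unsecured. Adler's actual proof avoids this impasse by a direct sign/parity analysis of the chain of Wronskians $\Wr[\phi_{k_1}],\Wr[\phi_{k_1},\phi_{k_2}],\ldots$ that does not rely on any intermediate operator being nonsingular: one tracks the sign of each successive ratio at $\pm\infty$ and at the zeros of the previous Wronskian, and shows that an odd block not at the bottom forces a sign change, hence a zero, that survives all later steps. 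That is the missing ingredient; without it, your necessity argument is a sketch rather than a proof.
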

In other words, the sequence is allowed to have a block of arbitrary length of consecutive integers starting at $0$, followed by any number of blocks of even length. 


It is well known that a sequence of state-deleting Darboux transformations where at each step the ground state is deleted does not introduce singularities in the potential, since the ground state wavefunction at each step has no zeros. The Krein-Adler sequence expresses the fact that the resulting potential in a Darboux-Crum transformation can be regular even if some of the intermediate steps produce singular potentials. Theorem \ref{thm:krein} characterizes precisely those cases in which such a thing occurs.

\begin{proof}[Proof of Theorem 1.1]

By Theorem \ref{thm:trivmonod} every rational extension of the harmonic oscillator has trivial monodromy. Theorem \ref{thm:oblomkov} by Oblomkov implies then that the potential can be obtained by Darboux transformations form the harmonic oscillator, and therefore has the form \eqref{eq:HDarb}. 
Theorem \ref{thm:krein} states that only if the sequence is chosen as in \eqref{eq:Urat} is the potential ensured to be regular.
We note that since the harmonic oscillator is shape invariant, the first block of Darboux transformations $\{0,\dots,k_0+1\}$ in \eqref{eq:Adlerseq} will not lead to rational extensions, but merely to an overall shift in the spectrum. For this reason, that block can be safely neglected in the sequence \eqref{eq:Urat}.

\end{proof}

\section{Exceptional Hermite polynomials}\label{sec:XHermite}

In much the same way as Hermite polynomials are directly related to the harmonic oscillator, it is natural to describe the polynomial eigenfunctions of the rational extensions \eqref{eq:Urat}.

Since the potential \eqref{eq:Urat} is bounding at infinity and Schr\"odinger's equation has by hypothesis an infinite number of eigenfunctions of the form $\mu(x)y_k(x)$ where $y_k(x)$ are polynomials, the $y_k(x)$ will form a complete set of orthogonal polynomials that satisfy a differential equation. This set clearly does not contain polynomials of every degree, so we are dealing with a system of exceptional orthogonal polynomials, that we shall denote by \textit{exceptional Hermite polynomials}. In the rest of the section we introduce the precise definition of such families, we provide expressions for the differential operator and orthogonality weight and we list some examples.

Exceptional Hermite polynomials are defined through Wronskian determinants of a sequence of ordinary Hermite polynomials. Only when that sequence is chosen according to certain rules will the resulting set be an orthogonal polynomial system with a positive definite weight. However, some of the properties of exceptional Hermite polynomials are derived from identities between arbitrary Wronskian determinants. We choose to present these general identities first and to treat the orthogonal polynomial families as a special case, later in the section.

\subsection{Wronskians of Hermite polynomials}

We denote by $\lambda=(\lambda_1,\dots,\lambda_l)$ a non-decreasing sequence of
non-negative integers
\[0\leq  \lambda_1\leq \lambda_2 \leq \cdots \leq \lambda_\ell.\]
Every such partition determines an increasing sequence of non-negative integers
\begin{equation}\label{eq:klambda}
0\leq k_1<k_2<\cdots <k_\ell,\quad\text{where}\quad
k_i=\lambda_i+i-1,\quad i=1,\dots,\ell 
\end{equation}
 which we will refer to as
the \textit{gap sequence}, for reasons that will be shortly evident. 
Conversely, every such gap sequence
determines a partition as per the above formula.\,\footnote{
Note that contrary
to the standard convention, our definition of a partition allows a
string of initial zeros.  This allows us to describe gap sequences
that begin with $0$; for example
\mbox{$ \lambda = (0,0,0,1,1,2,2)$} corresponds to the gap sequence
$0,1,2,4,5,7,8$. } 
Following \cite{Felder2012a}, for any given partition $\lambda=(\lambda_1\dots,\lambda_\ell)$ of length $\ell$  we define $\lambda^2$ to be the \textit{double partition} of length $2\ell$ as
\begin{equation}\label{eq:double}
 \lambda^2 = (\lambda_1,\lambda_1, \lambda_2,\lambda_2,\ldots,
\lambda_\ell, \lambda_\ell),
\end{equation}

We define an \textit{Adler partition} to be either a double partition or a
double partition preceded by an initial  string of zeros of arbitrary length, and a \emph{reduced partition} to be one that begins with a positive $\lambda_1>0$. For a given partition $\lambda$, consider the following Wronskian determinants
\begin{subequations}\label{eq:Hdefs}
\begin{align}
  H_\lambda &:= \Wr[H_{k_1},\dots,H_{k_\ell}]\\
  \label{eq:Hlk} H_{\lambda,j} &:= \Wr[H_{k_1},\dots,H_{k_\ell},H_j],\quad j\notin \{
  k_1,\ldots, k_\ell\}\\
  \hH_{\lambda,i} &:=\Wr[H_{k_1},\ldots
  \widehat{H_{k_i}},\ldots,H_{k_\ell}],\quad i\in {1,2,\ldots,\ell}, \label{eq:Hhat}
\end{align}
\end{subequations}
where the symbol $ \widehat{H_{k_i}}$ means that polynomial is missing in the sequence.
\begin{definition}\label{def:XHermite}
For a given double partition $\lambda^2=(\lambda_1,\lambda_1,\dots,\lambda_\ell,\lambda_\ell)$
  we define the $\Xl$-Hermite polynomials $H^{(\lambda)}_j$ as the following numerable sequence
\begin{equation}
  H^{(\lambda)}_j=H_{\lambda^2,j}, \qquad j\in\mathbb N\backslash\{k_1,k_1+1,\dots,k_\ell,k_\ell+1\}
\end{equation}
\end{definition}
From the above definition and \eqref{eq:degW} it is clear that 
\begin{equation}
\deg H^{(\lambda)}_j(x)= 2|\lambda|-2\ell+j,
\end{equation}
and also that the $\Xl$-Hermite polynomials have well defined parity
\[ H^{(\lambda)}_j(-x)=(-1)^j H^{(\lambda)}_j(x)  \]
For example, if $\lambda=(1,3)$, then \[H^{(1,3)}_j=H_{(1,1,3,3),j}=\Wr(H_1,H_2,H_5,H_6,H_j),\qquad j\in\Nset\backslash\{1,2,5,6\}\]
and we have $\ell=2$, $|\lambda|=4$ and $\deg H^{(1,3)}_j=4+j$, as it is clear from the Wronskian determinant.

In the rest of the section, we shall use the notation $H_\lambda$ and $H_{\lambda,j}$ for general Wronskian determinants \eqref{eq:Hdefs} and we will reserve the notation $H^{(\lambda)}_j$ for the  $\Xl$-Hermite polynomials that form an orthogonal polynomial system.

\subsection{Differential operators and factorization}

Given a partition $\lambda$ corresponding to the gap sequence $\{
k_1,\ldots, k_\ell\}$, we define the following differential operators
of order $\ell$
\begin{align}  
  \cA_\lambda[y]&:=\Wr[H_{k_1},\dots,H_{k_\ell},y]\\
  \cB_\lambda[y] &:= e^{x^2} H_\lambda(x)^{-\ell} \Wr\big[\hH_{\lambda,1},\ldots, \hH_{\lambda,\ell},e^{-x^2} y\big]  
\end{align} 
where the hat in the definition of $\hH_{\lambda,i}$ indicates deletion, as defined in \eqref{eq:Hhat}.
Let us also define the second order differential operators
  \begin{align}
  \label{eq:Tlambdadef}
  T_\lambda[y]&:=y''-2\left(x+\frac{H_\lambda'}{H_\lambda} \right)
  y'+ \left(\frac{H_\lambda''}{H_\lambda}+ 2 x
    \frac{H_\lambda'}{H_\lambda} \right) y\\
  T[y] &:= y'' - 2x y'
\end{align}
where the latter is the usual Hermite operator.  The second order operator \eqref{eq:Tlambdadef} will generally have singular rational coefficients when $\lambda$ is an arbitrary partition. If $\lambda$ is an Adler partition then  $T_\lambda$ will be nonsingular in $\mathbb R$ and we will say it is an $\Xl$-Hermite operator.

\begin{prop}\label{prop:intertwining}
The classical Hermite operator $T$ and $T_\lambda$ obey the following intertwining relations:
\begin{align}
  &(T_{\lambda}-2\ell) \cA_{\lambda} = \cA_{\lambda} T\\
  &\cB_{\lambda}(T_{\lambda}-2\ell) = T \cB_{\lambda}
\end{align}
\end{prop}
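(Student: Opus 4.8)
The plan is to establish the two intertwining relations by a direct computation with Wronskian identities, exploiting the fact that both $\cA_\lambda$ and $T$ (resp. $T_\lambda$) act naturally on the same space of functions. The key structural fact is that $T$ is the classical Hermite operator, so $T[H_{k_i}] = -2k_i H_{k_i}$ for each $i$, and more importantly that the gauge-transformed form $\hT[y] := e^{x^2}(e^{-x^2}y)''$-type manipulation turns $T$ into a self-adjoint-like operator whose Darboux factorization is exactly $\cA_\lambda$. Concretely, I would first verify the identity $(T_\lambda - 2\ell)\cA_\lambda = \cA_\lambda T$ by noting that $\cA_\lambda[y] = \Wr[H_{k_1},\dots,H_{k_\ell},y]$ is a degree-$\ell$ differential operator in $y$ whose kernel is $\lspan\{H_{k_1},\dots,H_{k_\ell}\}$, each of which is a $T$-eigenfunction. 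This is the standard Crum/Darboux intertwining: if $L$ is a second-order operator and $\phi_1,\dots,\phi_\ell$ are eigenfunctions with $L[\phi_i]=\mu_i\phi_i$, then $\Wr[\phi_1,\dots,\phi_\ell,\cdot\,]$ intertwines $L$ with the Darboux-transformed operator. The only subtlety is identifying the transformed operator explicitly as $T_\lambda - 2\ell$, which is pinned down by matching leading coefficients and by the requirement that $\cA_\lambda$ annihilate the correct subspace.

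Second, for the relation $\cB_\lambda(T_\lambda - 2\ell) = T\cB_\lambda$, I would use the fact that $\cB_\lambda$ is (up to the $e^{x^2}H_\lambda^{-\ell}$ gauge) the Crum-type operator built from the "partner" functions $\hH_{\lambda,i}$, which are the Wronskians with one Hermite polynomial deleted. The point is that $\cB_\lambda$ should be, again up to gauge, a formal adjoint of $\cA_\lambda$ with respect to the appropriate weights: $\cA_\lambda$ maps the Hermite weight $e^{-x^2}$ problem to the $H_\lambda^{-2}e^{-x^2}$ weight problem, and $\cB_\lambda$ maps back. So I would compute $\cB_\lambda \cA_\lambda$ and $\cA_\lambda \cB_\lambda$ and show these are (shifted) polynomials in $T$ and $T_\lambda$ respectively — this is the standard fact that the product of the two halves of a Darboux-Crum factorization is a polynomial in the Schrödinger operator. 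The cleanest route is probably to pass to the Schrödinger picture via the gauge factor $\mu_\lambda = e^{-x^2/2}/H_\lambda$ (or its square), where $\cA_\lambda$ and $\cB_\lambda$ become genuine intertwiners $\hA_\lambda \cH_0 = \cH_\lambda \hA_\lambda$ and $\hB_\lambda \cH_\lambda = \cH_0 \hB_\lambda$ of the harmonic oscillator $\cH_0$ and the rational extension $\cH_\lambda$, which is exactly the content of the Darboux-Crum construction, and then transfer back.

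Alternatively — and this may be the more self-contained path given what the excerpt has set up — I would prove the first relation by the following elementary argument. Both sides, $(T_\lambda - 2\ell)\cA_\lambda$ and $\cA_\lambda T$, are linear differential operators of order $\ell+2$ acting on $y$. I would show they agree by checking: (i) they have the same kernel on $\lspan\{H_{k_1},\dots,H_{k_\ell}\}$ — the left side because $\cA_\lambda$ kills this space, the right side because $T$ maps it to itself (each $H_{k_i}$ is an eigenfunction) and then $\cA_\lambda$ kills the image; (ii) their leading coefficients (in $\partial_y^{\ell+2}$, $\partial_y^{\ell+1}$, $\partial_y^{\ell}$) match, which is a short computation using $\deg$-counting and the explicit coefficients of $T_\lambda$; and (iii) a dimension count showing that an order-$(\ell+2)$ operator killing an $\ell$-dimensional space and having prescribed top three coefficients is determined up to... — here one needs a bit more, so in fact I would instead verify agreement on a full basis: apply both operators to $H_j$ for all $j \notin \{k_1,\dots,k_\ell\}$ and use the identity $\cA_\lambda[H_j] = H_{\lambda,j}$ together with a recurrence/derivative identity for $H_{\lambda,j}$. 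The second relation then follows from the first by a gauge-adjointness argument as above, or by an analogous direct Wronskian manipulation.

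The main obstacle I anticipate is the bookkeeping in identifying the Darboux-transformed operator as precisely $T_\lambda - 2\ell$ with the eigenvalue shift $2\ell$, and correspondingly handling the $e^{x^2}H_\lambda^{-\ell}$ prefactor in $\cB_\lambda$ cleanly so that the adjointness/factorization argument for the second relation goes through without sign or normalization errors. The conceptual content — that Crum-Wronskian operators intertwine a Schrödinger operator with its Darboux transform — is completely standard (Crum 1955, Adler 1994); the work is entirely in matching the present paper's specific gauge conventions (the $e^{-x^2}$ versus $e^{-x^2/2}$ choices, the power $\ell$ in $H_\lambda^{-\ell}$, the eigenvalue normalization $T[y] = y'' - 2xy'$) to those standard identities. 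I would therefore organize the proof around a single clean lemma — the generic Crum intertwining for a second-order operator with $\ell$ known eigenfunctions — and then specialize, rather than computing everything from scratch.
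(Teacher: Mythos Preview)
Your proposal is correct and ultimately converges on the same idea the paper uses: the intertwining relations are instances of the standard Crum--Darboux intertwining, and the clean way to establish them is to factor $\cA_\lambda$ and $\cB_\lambda$ into a chain of first-order Darboux operators. The paper's proof does exactly this: it introduces first-order operators $A_{\lambda,k}$ and $B_{\lambda,k}$ (the single-step Darboux pair attaching or removing one more $H_k$) and records the factorizations $B_{\lambda,k}A_{\lambda,k}=T_\lambda+2k-2\ell$ and $A_{\lambda,k}B_{\lambda,k}=T_{\lambda,k}+2k-2\ell-2$, from which both intertwining relations follow by composing the chain. This is precisely the ``single clean lemma'' you identify at the end of your proposal, proved inductively on $\ell$ rather than cited as a black box.

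Your alternative routes --- checking agreement on a spanning set of $H_j$'s, or passing to the Schr\"odinger gauge and invoking the self-adjoint Darboux--Crum picture --- would also work and are genuinely different in execution, though not in substance. The paper's stepwise factorization has the advantage of simultaneously delivering the $\cB_\lambda$ relation without a separate adjointness argument: once you have the one-step identities $A_{\lambda,k}B_{\lambda,k}$ and $B_{\lambda,k}A_{\lambda,k}$, both directions of the intertwining are immediate by iteration, and the eigenvalue shift $2\ell$ and the prefactor $H_\lambda^{-\ell}$ fall out of the bookkeeping automatically. Your basis-checking approach, by contrast, handles the first relation naturally but would still need a dual argument (your adjointness discussion) for the second. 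So the paper's choice is the more economical of the options you list.
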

\begin{proof}
The above relations can be derived by breaking them down into a
sequence of 1st order intertwining relations as follows.  Define the
first order operators
\begin{align}
  A_{\lambda, k}[y] &:= \frac{\Wr[H_{\lambda, k},y]}{H_{\lambda}}\\
  B_{\lambda, k}[y] &:= \frac{H_\lambda}{H_{\lambda, k}}\left( y' -
    \left( 2x + \frac{H_\lambda'}{H_\lambda}\right) \right)
\end{align}
Writing $T_\lambda = T_{k_1\ldots k_\ell}$ and $T_{\lambda, k} =
T_{k_1\ldots k_\ell, k}$ we have
\begin{align}
  B_i A_i &= T + 2i \\
  A_i B_i &= T_i +2i-2\\
  B_{ij} A_{ij} &= T_i +2j-2\\
  A_{ij} B_{ij} &= T_{ij} +2j-4
\end{align}
and more generally,
\begin{align}
  B_{\lambda, k} A_{\lambda, k} &= T_\lambda + 2k-2\ell\\
  A_{\lambda, k} B_{\lambda, k} &= T_{\lambda, k} + 2k-2\ell-2
\end{align}
\end{proof}

\begin{prop}
  \label{prop:Tlambdaeigenvalue}
  For every partition $\lambda$ we have
  \begin{equation}
    \label{eq:Tlambdaeigenvalue}
    T_{\lambda}[H_{\lambda, j}] =2 (\ell-j) H_{\lambda, j},\quad
    j\notin \{k_1,\ldots, k_\ell\}.
  \end{equation}
where $\ell$ is the length of the partition $\lambda$.
\end{prop}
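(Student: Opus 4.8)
The plan is to derive the eigenvalue equation \eqref{eq:Tlambdaeigenvalue} directly from the intertwining relations of Proposition~\ref{prop:intertwining} together with the action of the Hermite operator $T$ on Hermite polynomials, namely $T[H_j] = -2j H_j$. First I would observe that for $j \notin \{k_1,\ldots,k_\ell\}$ the function $H_{\lambda,j} = \Wr[H_{k_1},\ldots,H_{k_\ell},H_j] = \cA_\lambda[H_j]$ is, up to the factor $H_\lambda$ in the denominator convention of $A_{\lambda,k}$, exactly the image of $H_j$ under the $\ell$-th order raising operator $\cA_\lambda$. Then applying the first intertwining relation $(T_\lambda - 2\ell)\cA_\lambda = \cA_\lambda T$ to $y = H_j$ gives
\[
(T_\lambda - 2\ell)[H_{\lambda,j}] = (T_\lambda - 2\ell)\cA_\lambda[H_j] = \cA_\lambda T[H_j] = \cA_\lambda[-2j\,H_j] = -2j\,H_{\lambda,j},
\]
whence $T_\lambda[H_{\lambda,j}] = (2\ell - 2j)H_{\lambda,j} = 2(\ell-j)H_{\lambda,j}$, which is the claim.

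The one genuine subtlety I would need to address is that the eigenvalue equation only makes sense, and the argument above only applies, when $H_{\lambda,j}$ is nonzero, i.e.\ when $H_j$ does not lie in the span of $H_{k_1},\ldots,H_{k_\ell}$; but since the Hermite polynomials are linearly independent and $j \notin \{k_1,\ldots,k_\ell\}$, the Wronskian $\Wr[H_{k_1},\ldots,H_{k_\ell},H_j]$ is indeed a nonzero polynomial, so this is automatic. A second point worth a sentence is that the intertwining relation in Proposition~\ref{prop:intertwining} is stated with $\cA_\lambda[y] = \Wr[H_{k_1},\ldots,H_{k_\ell},y]$ (no denominator), so it applies verbatim to $y = H_j$ with $\cA_\lambda[H_j] = H_{\lambda,j}$, and no normalization bookkeeping is required.

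The main obstacle — such as it is — is really just ensuring the identification $\cA_\lambda[H_j] = H_{\lambda,j}$ is unambiguous and that the value $T[H_j] = -2j H_j$ is used with the correct sign convention for the Hermite operator $T[y] = y'' - 2xy'$ (physicists' Hermite polynomials satisfy $H_j'' - 2xH_j' = -2jH_j$). Everything else is a one-line substitution into the already-established intertwining identity. If one preferred an entirely self-contained argument avoiding Proposition~\ref{prop:intertwining}, one could instead verify the identity by a direct Wronskian computation: expand $T_\lambda[\cdot]$ using $H_\lambda''/H_\lambda$ and $H_\lambda'/H_\lambda$, multiply through by $H_\lambda$, and recognize the resulting expression as a linear combination of Wronskian-type determinants that collapses via column operations and the Hermite ODE for each $H_{k_i}$ and $H_j$; but this is more laborious and the intertwining route is cleaner, so I would present that one.
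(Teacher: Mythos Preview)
Your proof is correct and follows exactly the same approach as the paper: apply the intertwining relation $(T_\lambda - 2\ell)\cA_\lambda = \cA_\lambda T$ from Proposition~\ref{prop:intertwining} to $H_j$, using $\cA_\lambda[H_j] = H_{\lambda,j}$ and $T[H_j] = -2j H_j$. The paper's proof is a single sentence to this effect; your additional remarks on nonvanishing and sign conventions are sound but not strictly needed.
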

\begin{proof}
  The proof is immediate from Proposition \ref{prop:intertwining} and
  the fact that \mbox{$H_{\lambda, j}=\cA_{\lambda}[H_j]$}.
\end{proof}
Adapting the above Proposition to the case of Adler partitions we have the following corollary
\begin{cor}
The $\Xl$-Hermite polynomials $H^{(\lambda)}_j$ defined in Definition \ref{def:XHermite} are
eigenfunctions of the following second order differential operator 
\begin{equation}
  T_{\lambda^2}[H^{(\lambda)}_j] = (4\ell-2j) H^{(\lambda)}_j,\quad
  j\in \mathbb N\backslash \{k_1,k_1+1,\ldots, k_\ell,k_\ell+1\}.
\end{equation}
where $T_{\lambda}$ is given by \eqref{eq:Tlambdadef}.
\end{cor}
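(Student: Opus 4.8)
The plan is to obtain the corollary as a direct specialization of Proposition \ref{prop:Tlambdaeigenvalue}, applied to the double partition $\lambda^2$ in place of $\lambda$. The only preliminary work is a bookkeeping check on gap sequences. By \eqref{eq:double}, $\lambda^2$ has length $2\ell$, and its associated gap sequence is computed through \eqref{eq:klambda} as $\tilde k_i=(\lambda^2)_i+i-1$. Plugging in the $(2m-1)$-th and $(2m)$-th entries of $\lambda^2$ (both equal to $\lambda_m$) yields the consecutive pair $\tilde k_{2m-1}=\lambda_m+2(m-1)$ and $\tilde k_{2m}=\tilde k_{2m-1}+1$; monotonicity of $\lambda$ guarantees that these $\ell$ blocks are correctly ordered and that the full sequence is strictly increasing. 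Writing $k_m:=\tilde k_{2m-1}$, the gap sequence of $\lambda^2$ is therefore exactly $\{k_1,k_1+1,k_2,k_2+1,\ldots,k_\ell,k_\ell+1\}$, which is precisely the set of indices excluded in Definition \ref{def:XHermite}. Hence the admissible range $j\in\mathbb N\setminus\{k_1,k_1+1,\ldots,k_\ell,k_\ell+1\}$ in the corollary coincides with the hypothesis of Proposition \ref{prop:Tlambdaeigenvalue} for the partition $\lambda^2$.

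With this identification, Proposition \ref{prop:Tlambdaeigenvalue} applied to $\lambda^2$ (whose length is $2\ell$) reads $T_{\lambda^2}[H_{\lambda^2,j}]=2(2\ell-j)\,H_{\lambda^2,j}$ for every admissible $j$. Since $H^{(\lambda)}_j=H_{\lambda^2,j}$ by Definition \ref{def:XHermite}, and $2(2\ell-j)=4\ell-2j$, this is exactly the asserted eigenvalue equation, the operator $T_{\lambda^2}$ being the specialization of \eqref{eq:Tlambdadef} obtained by substituting the Wronskian $H_{\lambda^2}$ for $H_\lambda$.

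I do not anticipate any genuine obstacle: Proposition \ref{prop:Tlambdaeigenvalue} (itself a consequence of the intertwining relations of Proposition \ref{prop:intertwining} via $H_{\lambda,j}=\cA_\lambda[H_j]$) does all the analytic work, and it holds for an \emph{arbitrary} partition, so regularity of $T_{\lambda^2}$ — which is automatic because $\lambda^2$ is an Adler partition — is not required for the formal eigenvalue relation. The only points demanding a little care are the combinatorics of the gap sequence of the double partition and making sure that the operator appearing in the statement is $T_{\lambda^2}$, i.e. \eqref{eq:Tlambdadef} evaluated at $\lambda^2$, rather than at $\lambda$ itself.
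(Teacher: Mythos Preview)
Your proposal is correct and follows exactly the paper's approach: the paper states the corollary with the single remark ``Adapting the above Proposition to the case of Adler partitions,'' i.e.\ specialize Proposition \ref{prop:Tlambdaeigenvalue} to the partition $\lambda^2$ of length $2\ell$. Your explicit verification that the gap sequence of $\lambda^2$ is $\{k_1,k_1+1,\ldots,k_\ell,k_\ell+1\}$ is more detailed than what the paper spells out, but it is the right bookkeeping and nothing more is needed.
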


\subsection{The exceptional subspace}

Let $\cP$ denote the infinite-dimensional space of univariate
polynomials.  Given a partition $\lambda$, with corresponding gap
sequence $k_1,\ldots, k_\ell$, consider the polynomial subspace
\begin{equation}
  \cU_\lambda = \lspan \{ H_{\lambda, j} \colon j\notin \{ k_1,\ldots,
  k_\ell\}\} =  \{ \Wr[H_{k_1},\ldots, H_{k_\ell}, p] \colon p \in \cP
  \}.  
\end{equation}
We call  the set of integers
\[ I_\lambda = \{ \deg p \colon p\in \cU_\lambda \} \] the degree
sequence of the above polynomial subspace.
\begin{prop}
  The codimension of $\cU_\lambda$ in $\cP$ is $|\lambda|$.
\end{prop}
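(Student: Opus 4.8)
The plan is to show that the degree sequence $I_\lambda$ of $\cU_\lambda$ omits exactly $|\lambda|$ natural numbers, since the codimension of $\cU_\lambda$ in $\cP$ equals the number of gaps in $I_\lambda$. First I would recall that $\cU_\lambda$ is spanned by the polynomials $H_{\lambda,j} = \cA_\lambda[H_j] = \Wr[H_{k_1},\dots,H_{k_\ell},H_j]$ for $j \notin \{k_1,\dots,k_\ell\}$, and that by the degree formula \eqref{eq:degW} (or equivalently a direct count of the Wronskian), $\deg H_{\lambda,j}$ can be written explicitly in terms of $j$ and the partition data. The map $j \mapsto \deg H_{\lambda,j}$ is the key combinatorial object: I would identify its image as the degree sequence $I_\lambda$ and then count $\Nset \setminus I_\lambda$.

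The cleanest route is via the partition/gap-sequence bookkeeping already set up in the paper. Writing $\mu = (\lambda_1,\dots,\lambda_\ell)$ with gap sequence $k_i = \lambda_i + i - 1$, one checks that $\deg \Wr[H_{k_1},\dots,H_{k_\ell},H_j]$ equals $|\lambda| + j - \ell$ when $j > k_\ell$, while for $j$ interleaved among the $k_i$ the degree is shifted accordingly; in all cases the resulting set of degrees is precisely $\{0,1,2,\dots\}$ with a finite set of $|\lambda|$ values deleted. A slick way to see the count without case analysis is to use the identity $\deg f_\mu = |\mu|$ from \eqref{eq:degW} applied to the \emph{augmented} partition: adjoining $H_j$ to the list $H_{k_1},\dots,H_{k_\ell}$ and reordering produces a new increasing sequence of length $\ell+1$, whose associated partition $\tilde\mu$ satisfies $|\tilde\mu| = \deg H_{\lambda,j}$. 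As $j$ ranges over $\Nset \setminus \{k_1,\dots,k_\ell\}$, the values $|\tilde\mu|$ sweep out exactly those nonnegative integers $n$ for which $n$ is NOT of the form $\deg \Wr[H_{k_1},\dots,H_{k_\ell}]$ with one index removed and... — more transparently, the complement $\Nset \setminus I_\lambda$ is in bijection with the "inner corners" contributing to $|\lambda|$, so $|\Nset \setminus I_\lambda| = |\lambda|$.

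Concretely, I would argue: the linear map $\cA_\lambda \colon \cP \to \cU_\lambda$ is surjective with kernel the $\ell$-dimensional span of $H_{k_1},\dots,H_{k_\ell}$, and it is degree-strict in the sense that $\deg \cA_\lambda[p]$ depends only on $\deg p$ (away from the kernel directions). Hence $\dim(\cU_\lambda \cap \cP_{\le n}) = \dim \cP_{\le n} - \#\{i : k_i \le n\}$ for every $n$, where $\cP_{\le n}$ denotes polynomials of degree $\le n$. Since $k_i \to \infty$, for $n$ large this gives $\dim(\cU_\lambda \cap \cP_{\le n}) = (n+1) - \ell$, and the total defect between $\cP$ and $\cU_\lambda$, measured as $\lim_{n\to\infty}\big(\dim \cP_{\le n} - \dim(\cU_\lambda \cap \cP_{\le n})\big)$ summed appropriately over the gaps, equals $\sum_{i=1}^\ell \big(k_i - (\text{position among degrees})\big)$, which telescopes to $\sum_{i=1}^\ell \lambda_i = |\lambda|$ using $k_i = \lambda_i + i - 1$.

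The main obstacle I anticipate is the bookkeeping in the degree-strictness claim: one must verify that $\cA_\lambda$ does not accidentally drop degree, i.e. that the leading coefficient of $\Wr[H_{k_1},\dots,H_{k_\ell},H_j]$ in the expected degree is nonzero for every admissible $j$. This is where the specific structure of Hermite polynomials (their leading coefficients and the non-vanishing of the relevant Wronskian of monomials, a Vandermonde-type determinant) is needed; once that lemma is in hand, the codimension count is pure combinatorics on the partition $\lambda$.
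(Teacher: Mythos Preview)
Your overall strategy --- compute the degree sequence of $\cU_\lambda$ and count its gaps --- matches the paper's, and you correctly identify the key input $\deg H_{\lambda,j}=|\lambda|+j-\ell$. But the concrete dimension count in your third paragraph is wrong, and if taken at face value it gives the wrong answer.

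The operator $\cA_\lambda$ \emph{raises} degree by $|\lambda|-\ell$: if $\deg p=j$ with $j\notin\{k_1,\dots,k_\ell\}$ then $\deg\cA_\lambda[p]=j+|\lambda|-\ell$. Consequently $\cA_\lambda$ does not map $\cP_{\le n}$ into $\cP_{\le n}$, so the equality $\dim(\cU_\lambda\cap\cP_{\le n})=\dim\cP_{\le n}-\#\{i:k_i\le n\}$ is false. Following your formula for large $n$ gives $\dim(\cU_\lambda\cap\cP_{\le n})=(n+1)-\ell$, hence codimension $\ell$, not $|\lambda|$; your subsequent ``telescoping'' does not repair this. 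The correct version is that $\cA_\lambda$ maps $\cP_{\le n}$ onto $\cU_\lambda\cap\cP_{\le n+|\lambda|-\ell}$ with $\ell$-dimensional kernel (once $n\ge k_\ell$), so $\dim(\cU_\lambda\cap\cP_{\le N})=(N-|\lambda|+\ell+1)-\ell=N+1-|\lambda|$ for $N$ large, which yields codimension $|\lambda|$.

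A related minor point: the degree formula $\deg H_{\lambda,j}=|\lambda|+j-\ell$ holds for \emph{every} $j\notin\{k_1,\dots,k_\ell\}$, not just $j>k_\ell$; there is no separate ``interleaved'' case. (The leading term is a Vandermonde in $k_1,\dots,k_\ell,j$, nonzero precisely when $j$ avoids the $k_i$, as you anticipate.) The paper exploits this uniformity directly: from $\deg H_{\lambda,j}=|\lambda|+j-\ell$ it reads off that the degree sequence stabilizes at $|\lambda|+\lambda_\ell$ and that there are exactly $\lambda_\ell$ polynomials of smaller degree, giving codimension $(|\lambda|+\lambda_\ell)-\lambda_\ell=|\lambda|$. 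Once you fix the degree-shift bookkeeping, your $\cA_\lambda$ argument is a perfectly valid alternative; but as written the dimension formula is the genuine gap.
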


\begin{proof}
From \eqref{eq:degW} and \eqref{eq:Hlk} it follows that
\[ \deg H_{\lambda, j} = |\lambda|+j-\ell,\quad j\notin \{ k_1,\ldots,
k_\ell \} .\] Hence, if we consider $H_{\lambda, j}$ for $j\geq
k_\ell+1= \lambda_\ell+\ell$ we see that the degree sequence of
$\cU_\lambda$ stabilizes at degree $|\lambda|+\lambda_\ell$, meaning
that $n\in I_\lambda$ for all $n\geq |\lambda|+\lambda_\ell$. In
addition $H_{\lambda, j} \in \cU_\lambda$ for the following set:
\[
j\in\{ \underbrace{0,\dots,k_1-1}_{\lambda_1} \}\cup\{
\underbrace{k_1+1,\dots,k_2-1}_{\lambda_2-\lambda_1} \}\cup\dots\cup
\{ \underbrace{k_{\ell-1}+1,\dots,k_\ell-1}_{\lambda_\ell-\lambda_{\ell-1}} \}
\]
The number of all these additional polynomials is clearly 
\[\lambda_1+(\lambda_2-\lambda_1)+\dots+(\lambda_\ell-\lambda_{\ell-1})
=\lambda_\ell\] and therefore the codimension of the subspace is
$|\lambda|+\lambda_\ell-\lambda_\ell=|\lambda|$.
\end{proof}

The fact that $|\lambda|$ equals both the codimension of $\cU_\lambda$
and the degree of $H_\lambda$ allows us to give the following
characterization of $\cU_\lambda$.  
\begin{prop}
  \label{prop:flagcondition}
  A polynomial $p\in \cP$ belongs to the exceptional  subspace
  $\cU_\lambda$ if and only if
  \begin{equation}
    \label{eq:Tlambdasing}
    2 H_\lambda' (xp-p') + H_\lambda'' p
  \end{equation}
  is divisible by $H_\lambda$.
\end{prop}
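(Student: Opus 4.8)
The plan is to characterize membership in $\cU_\lambda$ via the eigenvalue equation from Proposition \ref{prop:Tlambdaeigenvalue}. Recall that $\cU_\lambda = \{\cA_\lambda[p] : p\in\cP\}$ where $\cA_\lambda[p] = \Wr[H_{k_1},\dots,H_{k_\ell},p]$, and that by Proposition \ref{prop:intertwining} the operator $T_\lambda$ acts on such Wronskians with $T_\lambda \cA_\lambda = \cA_\lambda(T+2\ell)$. The key structural fact I would exploit is that $T_\lambda$, although it has rational coefficients with $H_\lambda$ in the denominator, maps polynomials in $\cU_\lambda$ to polynomials in $\cU_\lambda$; for an arbitrary polynomial $p$ the image $T_\lambda[p]$ is only a rational function, and the numerator of $T_\lambda[p]$ (cleared of the denominator $H_\lambda$) is exactly the expression \eqref{eq:Tlambdasing}. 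Indeed, writing out $T_\lambda[p] = p'' - 2(x + H_\lambda'/H_\lambda)p' + (H_\lambda''/H_\lambda + 2xH_\lambda'/H_\lambda)p$ and multiplying by $H_\lambda$, one gets $H_\lambda(p''-2xp') + 2H_\lambda'(xp - p') + H_\lambda'' p = H_\lambda\,T[p] + \big(2H_\lambda'(xp-p') + H_\lambda'' p\big)$. So $T_\lambda[p]$ is a polynomial if and only if \eqref{eq:Tlambdasing} is divisible by $H_\lambda$.

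With that identity in hand, the proof splits into two implications. For the forward direction, suppose $p\in\cU_\lambda$, so $p = c\,H_{\lambda,j}$ up to lower-order corrections — more carefully, $p$ is a linear combination of the $H_{\lambda,j}$, each of which is an eigenfunction of $T_\lambda$ with eigenvalue $2(\ell-j)$ by Proposition \ref{prop:Tlambdaeigenvalue}. Hence $T_\lambda[p]$ is again a polynomial (a linear combination of the same $H_{\lambda,j}$), and therefore \eqref{eq:Tlambdasing} is divisible by $H_\lambda$ by the computation above. For the converse, suppose \eqref{eq:Tlambdasing} is divisible by $H_\lambda$, so that $q := T_\lambda[p]$ is a polynomial. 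I would then estimate degrees: since $\deg H_\lambda = |\lambda|$, a short computation shows $\deg T_\lambda[p] \leq \deg p$ (the operator $T_\lambda$ does not raise degree — this is consistent with Proposition \ref{prop:pqr} applied to $T_\lambda$, or can be checked directly from the leading terms, noting the $-2xp'$ and $2xH_\lambda'p/H_\lambda$ contributions to the top degree must be controlled). The cleanest route: the polynomial eigenfunctions $H_{\lambda,j}$ of $T_\lambda$ span a subspace of each finite-dimensional space $\cP_n$ of polynomials of degree $\leq n$ of the right codimension, and since $T_\lambda$ preserves $\cP_n$ (once we know it does not raise degree) and is triangular with distinct diagonal entries $2(\ell - j)$ on the eigenbasis, any $p$ with $T_\lambda[p]$ polynomial must lie in the span of the eigenvectors, i.e. in $\cU_\lambda$.

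The main obstacle I anticipate is making the converse direction fully rigorous: knowing merely that $T_\lambda[p]$ is a polynomial does not instantly place $p$ in $\cU_\lambda$ unless one controls how $T_\lambda$ acts on a complement of $\cU_\lambda$ inside $\cP$. The argument I would use is that $T_\lambda$, restricted to polynomials on which it acts as a polynomial operator, is diagonalizable with eigenvalues $2(\ell-j)$ for $j\notin\{k_1,\dots,k_\ell\}$, and by the codimension count ($\mathrm{codim}\,\cU_\lambda = |\lambda| = \deg H_\lambda$, established in the preceding proposition) these eigenfunctions already account for the full degree sequence $I_\lambda$; hence if $p\notin\cU_\lambda$, decomposing $p$ against the eigenbasis would produce a genuine rational (non-polynomial) contribution in $T_\lambda[p]$ coming from the missing degrees, a contradiction. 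One must be slightly careful that the "missing" directions do not accidentally also give polynomials under $T_\lambda$; this follows because $\cU_\lambda$ is precisely the maximal $T_\lambda$-invariant subspace of $\cP$ on which $T_\lambda$ acts polynomially, which in turn follows from the intertwining $T_\lambda\cA_\lambda = \cA_\lambda(T+2\ell)$ together with the fact that $\cA_\lambda$ is injective on $\cP$ (its kernel would be a polynomial annihilated by an order-$\ell$ Wronskian operator with linearly independent $H_{k_i}$, forcing linear dependence). Assembling these observations gives the equivalence.
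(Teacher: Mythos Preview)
Your forward implication is correct and matches the paper: $T_\lambda$ maps every $H_{\lambda,j}$ (hence every element of $\cU_\lambda$) to a polynomial, and the expression \eqref{eq:Tlambdasing} is precisely the numerator of the singular part of $T_\lambda[p]$, so divisibility by $H_\lambda$ follows.

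Your converse, however, is overcomplicated and has a genuine gap. You want to show that the set $V=\{p\in\cP: T_\lambda[p]\in\cP\}$ equals $\cU_\lambda$, and you try to do this by arguing that $\cU_\lambda$ is the \emph{maximal} $T_\lambda$-invariant polynomial subspace on which $T_\lambda$ acts polynomially. But the justification you offer for maximality---the intertwining relation $T_\lambda\cA_\lambda=\cA_\lambda(T+2\ell)$ together with injectivity of $\cA_\lambda$---only shows that $\cU_\lambda$ \emph{is} such an invariant subspace, not that it is the largest one. The claim that a $p\notin\cU_\lambda$ would produce a ``genuine rational contribution'' under $T_\lambda$ is exactly the statement $V\subseteq\cU_\lambda$ you are trying to prove, so the argument is circular at this point. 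The eigenvalue/triangularity considerations do not close the gap either, because $T_\lambda$ is not an endomorphism of $\cP_n$: it is only defined as a polynomial map on $V\cap\cP_n$, and you have no a priori control on the dimension of that space.

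The paper's converse is a one-line dimension count, which you have all the ingredients for but do not use. Divisibility of \eqref{eq:Tlambdasing} by $H_\lambda$ imposes $\deg H_\lambda=|\lambda|$ linear conditions on $p$, so $\operatorname{codim} V\le|\lambda|$; wait, one needs the reverse inequality. The point is that the preceding proposition gives $\operatorname{codim}\cU_\lambda=|\lambda|$, and the forward direction gives $\cU_\lambda\subseteq V$; once one checks that the $|\lambda|$ divisibility conditions are independent (so $\operatorname{codim} V=|\lambda|$), equality $V=\cU_\lambda$ follows by dimensional exhaustion. This is the argument you should give instead.
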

\begin{proof}
  The forward implication is true because $T_\lambda$, as defined in
  \eqref{eq:Tlambdadef}, transforms every element of $\cU_\lambda$
  into a polynomial, and because the expression in
  \eqref{eq:Tlambdasing} above is just the numerator of the singular
  part of $T_\lambda$.  For the converse, we observe that divisibility
  of \eqref{eq:Tlambdasing} by $H_\lambda$ imposes precisely \[\deg
  H_\lambda = |\lambda|\] independent linear conditions on $p\in \cP$.
  Since $|\lambda|$ is also the codimension of $\cU_\lambda$ it
  follows by dimensional exhaustion that a polynomial $p\in \cP$ that
  satisfies these conditions is necessarily an element of
  $\cU_\lambda$.
\end{proof}

We now come to the following important question: is the subspace
$\cU_\lambda$ primitive or do the polynomials $H_{\lambda, j}\colon
j\notin \{k_1,\ldots, k_\ell\}$
share a common root? Let us note that the possibility that
$\cU_\lambda$ is imprimitive does not limit the scope of Theorem
\ref{thm:trivmonod}.  By Proposition \ref{prop:Tlambdaeigenvalue}
$T_\lambda$ is solvable by polynomials.  If the elements of
$\cU_\lambda$ share a common factor $\sigma(z)$, then the gauge
transformation
\[ \hT_\lambda = \sigma^{-1}\circ T_\lambda\circ \sigma \] gives a
``reduced'' operator that has polynomials $\sigma^{-1}H_{\lambda, k} $
as eigenfunctions.  By construction, $\hT_\lambda$ is primitive and
exactly solvable by polynomials, and therefore subject to the argument
of Proposition \ref{lem:trivmonod0}.  In this regard, we have the
following characterization of primitivity.
\begin{prop}
  \label{prop:Ulambdaprimitive}
  The subspace $\cU_\lambda$ is primitive if and only if $H_\lambda$ has
  simple roots.
\end{prop}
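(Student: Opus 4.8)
The plan is to prove both implications by relating common roots of the family $\{H_{\lambda,j}\}$ to the vanishing of $H_\lambda$ and its derivative at a point. First I would establish the backward direction (simple roots $\Rightarrow$ primitive) by contraposition combined with the characterization in Proposition~\ref{prop:flagcondition}. Suppose $\cU_\lambda$ is imprimitive, so some irreducible $\sigma(z)$ of degree $\geq 1$ divides every element of $\cU_\lambda$; pick a root $z_0$ of $\sigma$. Then every $p\in\cU_\lambda$ satisfies $p(z_0)=0$. I would feed this into the divisibility condition \eqref{eq:Tlambdasing}: since $2H_\lambda'(xp-p')+H_\lambda'' p$ is divisible by $H_\lambda$ for every $p\in\cU_\lambda$, and there are enough elements of $\cU_\lambda$ (the degree sequence stabilizes, so $\cU_\lambda$ contains polynomials of all sufficiently large degrees) to force, after evaluating at a root $z_0$ of $H_\lambda$, a relation among $H_\lambda'(z_0)$, $H_\lambda''(z_0)$ and the values $p(z_0), p'(z_0)$. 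The cleanest route: if $z_0$ is a root of $H_\lambda$, then for $p\in\cU_\lambda$ the expression $-2H_\lambda'(z_0)p'(z_0) + (2z_0 H_\lambda'(z_0)+H_\lambda''(z_0))p(z_0)$ must vanish (it is the value at $z_0$ of a multiple of $H_\lambda$). If additionally $z_0$ is a \emph{multiple} root of $H_\lambda$, then $H_\lambda'(z_0)=0$, so this condition degenerates to $H_\lambda''(z_0)p(z_0)=0$, and since $H_\lambda''(z_0)\neq 0$ at a double root (generically — one must be careful about triple roots), this forces $p(z_0)=0$ for all $p\in\cU_\lambda$, i.e. imprimitivity.

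Conversely, for the forward direction I would show that if $H_\lambda$ has a multiple root $z_0$ then $z_0$ is a common root of all $H_{\lambda,j}$. Here the natural tool is the factorization/intertwining structure: recall $H_{\lambda,j}=\cA_\lambda[H_j]=\Wr[H_{k_1},\dots,H_{k_\ell},H_j]$. I would argue at the level of the Wronskian directly: expanding $\Wr[H_{k_1},\dots,H_{k_\ell},H_j]$ along the last column shows $H_{\lambda,j}=\sum_{i}\pm \hH_{\lambda,i}\,H_j^{(\ell-?)}$-type combination; more usefully, one can use the general Wronskian identity that the vanishing order of $\Wr[g_1,\dots,g_n]$ at a point $z_0$ where the $g_i$ are analytic is governed by the "deficiency" of the jet of the span of the $g_i$ at $z_0$. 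Since $H_{k_1},\dots,H_{k_\ell}$ are polynomials with no common root (they are individually the Hermite polynomials and $H_\lambda=\Wr[H_{k_1},\dots,H_{k_\ell}]$ vanishes at $z_0$ only through a drop in the Wronskian rank), a double zero of $H_\lambda$ at $z_0$ means the functions $H_{k_1},\dots,H_{k_\ell}$ together with their derivatives up to order $\ell-1$ span only an $(\ell-1)$-dimensional space of jets at $z_0$, with the next jet also partly degenerate. That forced degeneracy then persists when one appends any further function $H_j$, making $\Wr[H_{k_1},\dots,H_{k_\ell},H_j](z_0)=0$. I would make this precise via the standard formula $\ord_{z_0}\Wr[g_1,\dots,g_n]=\sum(\text{jet orders})-\binom{n}{2}$ applied before and after adjoining $H_j$.

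The main obstacle I expect is the bookkeeping around \emph{higher}-order roots of $H_\lambda$ in the backward direction: a naive argument only handles double roots cleanly (where $H_\lambda''(z_0)\neq0$), but one must rule out, or separately treat, roots of multiplicity $\geq 3$, showing that even there the divisibility condition \eqref{eq:Tlambdasing} forces $p(z_0)=0$ for all $p\in\cU_\lambda$. The resolution is to use \emph{all} the conditions encoded in "$H_\lambda$ divides \eqref{eq:Tlambdasing}" — not just vanishing at $z_0$ but vanishing to the full multiplicity — which amounts to a small linear-algebra computation on Taylor coefficients at $z_0$: if $H_\lambda$ vanishes to order $m\geq 2$ at $z_0$, expanding $2H_\lambda'(xp-p')+H_\lambda'' p\equiv 0\pmod{(z-z_0)^m}$ yields a triangular system of $m$ linear constraints on the $(z-z_0)$-Taylor jet of $p$, whose solution space (intersected with all of $\cP$) has codimension exactly $m$ accounted for within the codimension-$|\lambda|$ count; pinning down that the $p(z_0)=0$ constraint is genuinely among them (rather than being a consequence of higher-order ones that leaves $p(z_0)$ free) is the delicate point, handled by inspecting the lowest-order coefficient, which reads $-2H_\lambda'(z_0)p'(z_0)+\dots$ and in the $m\geq2$ case forces the relevant leading relation to involve $p(z_0)$. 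I would organize this as a short lemma on the local structure of $\cU_\lambda$ at a multiple root of $H_\lambda$, then quote it for both directions.
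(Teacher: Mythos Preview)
Your proposal has a structural gap: both of your detailed arguments actually address the same implication, namely ``$H_\lambda$ has a multiple root $\Rightarrow$ $\cU_\lambda$ is imprimitive,'' and the converse (``imprimitive $\Rightarrow$ multiple root,'' equivalently ``simple roots $\Rightarrow$ primitive'') is never established. In your first paragraph you begin correctly by assuming imprimitivity with common root $z_0$, but then you write ``if additionally $z_0$ is a multiple root of $H_\lambda$ \ldots this forces $p(z_0)=0$''---which assumes the conclusion. You also tacitly assume $z_0$ is a root of $H_\lambda$; this is true (since $H_\lambda^2\in\cU_\lambda$ by Proposition~\ref{prop:flagcondition}, a common zero of $\cU_\lambda$ must kill $H_\lambda^2$), but you need to say it. To finish the missing direction you must show that when all roots $\xi_a$ are simple, no point is a common zero of $\cU_\lambda$: for $z_0\notin\{\xi_a\}$ use $H_\lambda^2\in\cU_\lambda$, and for $z_0=\xi_a$ exhibit (via Hermite interpolation on the conditions $p'(\xi_a)=r_a p(\xi_a)$) an element with $p(\xi_a)\neq 0$.

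For the direction you do treat, your \emph{flag-condition} argument is actually cleaner than you give it credit for and handles all multiplicities at once: if $H_\lambda=(z-z_0)^m h$ with $m\geq 2$ and $h(z_0)\neq 0$, the $(z-z_0)^{m-2}$--coefficient of $2H_\lambda'(xp-p')+H_\lambda'' p$ is $m(m-1)h(z_0)p(z_0)$, so divisibility by $(z-z_0)^m$ forces $p(z_0)=0$. There is no separate difficulty for $m\geq 3$. By contrast, your Wronskian--jet argument is not sufficient as stated: the general order formula $\ord_{z_0}\Wr=\sum a_i-\binom{n}{2}$ does \emph{not} imply that a double zero of $\Wr[g_1,\ldots,g_\ell]$ persists after adjoining an arbitrary $g_{\ell+1}$ (e.g.\ orders $\{1,2\}$ give Wronskian order $2$, but adjoining an order-$0$ function yields order $0$). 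Making that approach work requires precisely the special structure that the paper extracts from trivial monodromy.

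The paper's proof is quite different from yours. It passes to the Schr\"odinger picture, uses that $\cH_\lambda$ has trivial monodromy to force each root multiplicity $n_a$ of $H_\lambda$ to be a triangular number $n_a=m_a(m_a+1)/2$, and then reads off from the indicial equation of $T_\lambda$ at $\xi_a$ that every local (hence every polynomial) solution carries the factor $(x-\xi_a)^{m_a(m_a-1)/2}$, which is nontrivial exactly when $n_a>1$. The converse is obtained by comparing the explicit Laurent expansion \eqref{eq:Tlambdaexpansion} with the canonical form of Lemma~\ref{lem:ordergaps}. Your flag-condition route, once the missing direction is supplied, is more elementary and avoids the monodromy machinery; the paper's route, on the other hand, yields the finer information that the common-factor multiplicity at $\xi_a$ is exactly $m_a(m_a-1)/2$.
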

\begin{proof}
    We define
  \begin{align}
    U_\lambda &= x^2 - 2 \partial_{xx} \log H_\lambda = x^2 + 2 \left(
      \frac{H_\lambda'}{H_\lambda}\right)^2  -\frac{2H_\lambda''}{H_\lambda}\\
    \cH_\lambda[y] &= -y'' + U_\lambda y,\\
    \psi_{\lambda, k} &= e^{-x^2/2}
    \frac{H_{\lambda, k}}{H_\lambda},\quad k\notin \{ k_1,\ldots, k_\ell\}.
  \end{align}
  Then, by a straight-forward calculation, we have
  \[ T_\lambda = -\left(e^{x^2/2} H_\lambda\right)\circ
  \cH_\lambda\circ \left( e^{-x^2/2} H_\lambda^{-1}\right)+1. \] In
  this way, the eigenpolynomial relation \eqref{eq:Tlambdaeigenvalue}
  corresponds to the Hamiltonian eigenrelation
  \[ \cH_\lambda[\psi_{\lambda, k}] = (2k-2\ell+1) \psi_{\lambda, k}.\]

  Writing
  \[ H_\lambda = C \prod_a (x-\xi_a)^{n_a},\quad C\in \Rset, \] where
  the $\xi_a\in \Cset$ are the roots of $H_\lambda$, the $n_a\in
  \Nset$ are the corresponding multiplicities, and where
  \[ \sum_a n_a = \deg H_\lambda = |\lambda|\,, \]
  we have  by a direct calculation
  \begin{equation}
    \label{eq:Tlambdaexpansion}
    T_\lambda[y] =  y'' -2 \left(x + \sum_a \frac{n_a}{x-\xi_a}
    \right) y' + \left(2|\lambda|-1 + \sum_a
      \frac{2r_a}{x-\xi_a} + \sum_a \frac{n_a(n_a-1)}{(x-\xi_a)^2} \right) y\,,
  \end{equation}
  where
  \begin{equation}
    \label{eq:radef}
     r_a = n_a \left( \xi_a - \sum_{b\neq a}
    \frac{n_b}{\xi_b-\xi_a} \right)    
  \end{equation}
  The singularities of the eigenvalue differential equation $T[y] = E
  y$ are regular.  The indicial equation for the pole at $\xi_a$ is
  \[ r^2-r - 2r n_a + n_a(n_a-1) =0\] Since $\cH_\lambda$ has trivial
  monodromy each of the
  \[ n_a = \frac{m_a(m_a+1)}{2},\quad m_a \in \Nset\] is a triangular
  number \cite{Oblomkov1999}.  Applying the above substitution to the
  indicial equation gives
 \begin{equation}
   \label{eq:Tlambdachareq}
   \left(r -\frac{m_a(m_a-1)}{2}\right)\left(r-
     \frac{(m_a+1)(m_a+2)}{2}\right) = 0.
 \end{equation}
 Therefore all power series solutions of $T_\lambda[y]= Ey$ have
 $\prod_a (x-\xi_a)^{m_a(m_a-1)/2}$ as a common factor.  Since 
 $H_{\lambda, k}$ are eigenpolynomials of $T_\lambda$, if $n_a>1$, then
 $m_a>1$, and hence $H_{\lambda, k}(\xi_a) = 0$ for every $k$.

 Conversely, suppose that $\xi\in \Cset$ is a common root of order
 $n>0$ of $H_{\lambda, k}$ for every $k$.  By Lemma
 \ref{lem:ordergaps}, the leading order term in the Laurent expansion
 of $T=T_\lambda$ at $x=\xi$ is
 \[ T_{-2} = \partial_{xx} - \frac{2(\nu+n)}{x-\xi} \partial_x +
 \frac{n(1+n+2\nu)}{(x-\xi)^2} \] for some integer $\nu \geq
 0$. Consequently $x=\xi$ must be a singular point of $T_\lambda$;
 hence $\xi=\xi_a$ for some $a$.  
Direct comparison with
 \eqref{eq:Tlambdaexpansion} shows that
 \[ \nu=m_a,\quad n = \frac{m_a(m_a-1)}{2}.\]
 Since $n>0$ by assumption, $m_a>1$ and hence $n_a>1$ as was to be shown.
\end{proof}

If $\lambda$ is an Adler partition, the subspace $\cU_\lambda$ is primitive provided the recent conjecture put forward by Felder et al. in
\cite{Felder2012a} holds.
\begin{conj}
  All zeros of $H_\lambda$ are simple except possibly for $x=0$.
\end{conj}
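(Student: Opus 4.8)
The plan is to argue by contradiction, exploiting the rigidity imposed by trivial monodromy. First I would reformulate the conjecture by means of Proposition~\ref{prop:Ulambdaprimitive}: the assertion that $H_\lambda$ has only simple zeros away from $x=0$ is equivalent to the assertion that the eigenpolynomials $H_{\lambda,j}$, $j\notin\{k_1,\dots,k_\ell\}$, have no common zero other than possibly $x=0$. Indeed, as in the proof of that proposition, the Hamiltonian $\cH_\lambda=-\partial_{xx}+x^2-2\partial_{xx}\log H_\lambda$ is Darboux--Crum equivalent to the harmonic oscillator and hence has trivial monodromy, so a zero of $H_\lambda$ of multiplicity $n$ at $\xi_0\in\Cset$ forces $n=\frac{1}{2}m(m+1)$ for some $m\in\Nset$, and whenever $n>1$ (so $m\geq 2$ and $n\geq 3$) every $H_{\lambda,j}$ vanishes at $\xi_0$; conversely, every common zero of the $H_{\lambda,j}$ arises this way. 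It therefore suffices to show that the family $\{H_{\lambda,j}\}$ shares no common zero in $\Cset\setminus\{0\}$; equivalently, that after dividing out the common factor $x^s$ (with $s$ the common order of vanishing at the origin) the reduced operator $x^{-s}\circ T_\lambda\circ x^s$ is primitive.

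To rule out a common zero $\xi_0\neq 0$ I would induct on the length $\ell$ of $\lambda$, the cases $\ell\leq 2$ being checkable by the classical analysis of Hermite polynomials and of a single Tur\'anian $\Wr[H_a,H_b]$. For the inductive step, express $H_\lambda$ through the $(\ell-1)$-part Wronskians $\hH_{\lambda,i}$ of \eqref{eq:Hhat} and through $H_\mu$, where $\mu$ is obtained from $\lambda$ by deleting one part, using the bilinear Wronskian identities (Pl\"ucker relations / Dodgson condensation) satisfied by the defining determinant together with the lowering rule $H_m'=2mH_{m-1}$ and the intertwining relations of Proposition~\ref{prop:intertwining}. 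One then propagates the order of vanishing at $\xi_0$ through these identities, the goal being to show that a non-simple zero of $H_\lambda$ at $\xi_0\neq 0$ forces a non-simple zero in one of the smaller Wronskians or an associated gap sequence outside the set of partitions, contradicting either the inductive hypothesis or Oblomkov's Theorem~\ref{thm:oblomkov}.

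The main obstacle, and the reason this remains a conjecture, is precisely this inductive step: the chain of Darboux--Crum transformations realizing $H_\lambda$ may pass through \emph{singular} intermediate potentials, so regularity and simplicity of zeros are not automatically inherited by the smaller Wronskians, and controlling the creation and collision of zeros across a singular step requires genuinely new input. An equivalent, purely local version of the obstacle is analytic: a hypothetical zero of $H_\lambda$ of multiplicity $\frac{1}{2}m(m+1)$ with $m\geq 2$ at $\xi_0\neq 0$ produces a pole $\frac{m(m+1)}{(x-\xi_0)^2}+\cdots$ of $U_\lambda$, whose regular part must satisfy the Duistermaat--Gr\"unbaum ``locus'' conditions (the vanishing of the first $m-1$ odd Laurent coefficients at $\xi_0$), and one must show these are incompatible with $U_\lambda-x^2$ being $-2\partial_{xx}\log$ of a \emph{Hermite} Wronskian. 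A promising alternative is the Schur-function description: $H_\lambda(x)$ is, up to a nonzero constant, the Schur polynomial $s_\lambda$ evaluated along the line in power-sum space on which $p_1$ is proportional to $x$, $p_2$ is a fixed constant, and $p_j=0$ for $j\geq 3$ \cite{Felder2012a}; the conjecture then says that the scheme of zeros of this one-parameter family of Schur polynomials is reduced away from the origin, a statement one might attack directly via the bialternant or Jacobi--Trudi formula.
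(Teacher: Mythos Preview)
The paper does not prove this statement: it is explicitly labeled a \emph{conjecture}, attributed to Felder, Hemery, and Veselov \cite{Felder2012a}, and is used only conditionally (``provided the recent conjecture \ldots\ holds''). There is therefore no proof in the paper to compare your proposal against.

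Your proposal is honest about this: you yourself identify the inductive step as ``the reason this remains a conjecture'' and describe the obstacle accurately --- the intermediate Darboux steps may be singular, so simplicity of zeros is not inherited from shorter Wronskians in any obvious way. What you have written is thus not a proof but a survey of plausible attack lines (induction via Pl\"ucker/Dodgson identities, the local Duistermaat--Gr\"unbaum locus conditions, and the Schur-function specialization). Each of these is a reasonable heuristic, and the reformulation via Proposition~\ref{prop:Ulambdaprimitive} is correct, but none of them is carried to a conclusion, and you do not supply the ``genuinely new input'' you say is required. In short: the proposal correctly frames the problem and its difficulty, but it is not a proof, and the paper offers none either.
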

\noindent 
For an Adler partition $\lambda$, the zeros of $H_{\lambda}(x)$ do not
lie on the real line, and therefore in this case, the conjecture
together with Proposition \ref{prop:Ulambdaprimitive} imply that
$\cU_\lambda$ is primitive.  In addition, if $\cU_\lambda$ is primitive
it can be characterized in the following, simpler manner, providing explicit first order constraints for its elements at each of the simple roots of $H_\lambda$.

\begin{prop}
  \label{prop:flagwithsimpleroots}
  Suppose that $H_\lambda$ has simple roots, let $\xi_a,\;
  i=1,\ldots, |\lambda|$ be an enumeration of these roots, and let
  \[ r_a = \xi_a + \sum_{b\neq a} \frac{1}{\xi_a-\xi_b}.\] Then $p\in
  \cU_\lambda$ if and only if
  \begin{equation}
    \label{eq:flagcondition}
    p'(\xi_a) - r_a p(\xi_a) = 0,\quad a=1,\ldots, |\lambda|.
  \end{equation}
\end{prop}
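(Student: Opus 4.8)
The plan is to use Proposition \ref{prop:flagcondition}, which already characterizes $\cU_\lambda$ as the set of polynomials $p$ for which the expression $2H_\lambda'(xp-p')+H_\lambda'' p$ is divisible by $H_\lambda$, and to translate this single divisibility condition into a collection of pointwise first-order conditions, one at each root of $H_\lambda$. Since $H_\lambda$ has simple roots by hypothesis, divisibility by $H_\lambda$ of a polynomial $Q(x)$ is equivalent to $Q(\xi_a)=0$ for every root $\xi_a$; so the task reduces to evaluating $Q(x) := 2H_\lambda'(xp-p')+H_\lambda'' p$ at $x=\xi_a$.

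First I would note that since $\xi_a$ is a simple root of $H_\lambda$, we have $H_\lambda(\xi_a)=0$ and $H_\lambda'(\xi_a)\neq 0$. Evaluating $Q$ at $\xi_a$ gives
\[ Q(\xi_a) = 2H_\lambda'(\xi_a)\big(\xi_a p(\xi_a)-p'(\xi_a)\big) + H_\lambda''(\xi_a) p(\xi_a).\]
Dividing through by the nonzero quantity $2H_\lambda'(\xi_a)$, the condition $Q(\xi_a)=0$ becomes
\[ \xi_a p(\xi_a) - p'(\xi_a) + \frac{H_\lambda''(\xi_a)}{2H_\lambda'(\xi_a)}\, p(\xi_a) = 0,\]
which rearranges to $p'(\xi_a) - r_a p(\xi_a) = 0$ with
\[ r_a = \xi_a + \frac{H_\lambda''(\xi_a)}{2H_\lambda'(\xi_a)}.\]
So the remaining step is purely computational: show that $\tfrac12 H_\lambda''(\xi_a)/H_\lambda'(\xi_a)$ equals $\sum_{b\neq a}\tfrac{1}{\xi_a-\xi_b}$. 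Writing $H_\lambda = C\prod_b(x-\xi_b)$, a standard logarithmic-derivative computation gives $H_\lambda'/H_\lambda = \sum_b \tfrac{1}{x-\xi_b}$, and differentiating once more and extracting the value at the simple root $\xi_a$ (e.g. by writing $H_\lambda'(x) = C\sum_c\prod_{b\neq c}(x-\xi_b)$ and $H_\lambda''(x) = C\sum_c\sum_{d\neq c}\prod_{b\neq c,d}(x-\xi_b)$, then evaluating at $\xi_a$ where only terms with $c=a$ or $d=a$ survive) yields $H_\lambda''(\xi_a) = 2H_\lambda'(\xi_a)\sum_{b\neq a}\tfrac{1}{\xi_a-\xi_b}$, which is exactly what is needed.

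Finally I would observe that the logical equivalence is preserved in both directions: $p\in\cU_\lambda$ iff $Q$ is divisible by $H_\lambda$ (Proposition \ref{prop:flagcondition}) iff $Q(\xi_a)=0$ for all $a$ (since $H_\lambda$ has $|\lambda|$ distinct roots and $\deg Q < \deg(xp)\cdot$ is controlled so no spurious extra roots are needed — in fact $\deg Q \le |\lambda| - 1 + \deg p$, but divisibility by the degree-$|\lambda|$ polynomial $H_\lambda$ is equivalent to vanishing at its $|\lambda|$ simple roots regardless of the degree of $Q$) iff \eqref{eq:flagcondition} holds for all $a$. The only real content is the elementary identity for $r_a$; I do not anticipate a genuine obstacle, just the bookkeeping of the second-derivative evaluation at a simple root, which is the step I would write out most carefully.
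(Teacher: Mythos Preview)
Your proposal is correct and follows essentially the same approach as the paper: both start from Proposition~\ref{prop:flagcondition}, evaluate the numerator $2H_\lambda'(xp-p')+H_\lambda''p$ at each simple root $\xi_a$, and identify $r_a$ via the product formula for $H_\lambda$. The paper simply points to the earlier partial-fraction expansion \eqref{eq:Tlambdaexpansion}--\eqref{eq:radef} (specialized to $n_a=1$) for the computation of $r_a$, whereas you redo that calculation directly; the content is the same.
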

\begin{proof}
  The expression in \eqref{eq:flagcondition} follows by a direct
  calculation from evaluating \eqref{eq:Tlambdasing} at a root
  $\xi_a$.  The gist of this calculation  is given above in
  \eqref{eq:Tlambdaexpansion} and \eqref{eq:radef}.
\end{proof}

\subsection{Orthogonality}

Orthogonality of the $\Xl$-Hermite polynomials is a direct consequence of their Sturm-Liouville character. Given
a partition $\lambda$ and the corresponding gap sequence
$k_1,\dots,k_l$ defined by \eqref{eq:klambda}, let us define the
following polynomial
\begin{equation}
  \label{eq:taudef} p_\lambda(x) = (x-k_1) (x-k_2) \cdots (x-k_\ell)
\end{equation} We observe that Adler partitions are precisely those
for which \mbox{$p_\lambda(n)\geq 0$} for all integers $n$. This form of
the  definition was employed  in \cite{Krein1957}.
\begin{prop} The $\Xl$-Hermite polynomials $H^{(\lambda)}_j$ satisfy
the following orthogonality relation
\begin{equation}\label{eq:orthogonality} \int_{-\infty}^\infty
H^{(\lambda)}_i H^{(\lambda)}_j \, W_{\lambda^2}(x)\,dx = \delta_{i,j}
\sqrt{\pi}\, 2^j\, j! \,p_{\lambda^2}(j)
 \end{equation} where the orthogonality weight is given by
\begin{equation}
  \label{eq:Wdefine} W_{\lambda^2}(x)=
\frac{e^{-x^2}}{\left(H_{\lambda^2}(x)\right)^2}.
\end{equation}
\end{prop}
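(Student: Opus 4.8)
The plan is to read off both statements from the Sturm--Liouville structure of the operator $T_{\lambda^2}$ together with the intertwining relations of Proposition~\ref{prop:intertwining}. First I would put $T_{\lambda^2}$ into formally self-adjoint form: by \eqref{eq:Tlambdadef} the coefficient of $y'$ in $T_{\lambda^2}[y]$ equals $-2\bigl(x+H_{\lambda^2}'/H_{\lambda^2}\bigr)=(\log W_{\lambda^2})'$ with $W_{\lambda^2}=e^{-x^2}/H_{\lambda^2}(x)^2$, so that
\[
  T_{\lambda^2}[y]=\frac{1}{W_{\lambda^2}}\bigl(W_{\lambda^2}\,y'\bigr)'+r_{\lambda^2}(x)\,y .
\]
Since $\lambda^2$ is an Adler partition, the Krein--Adler Theorem~\ref{thm:krein} (equivalently Theorem~\ref{thm:main}) guarantees that $H_{\lambda^2}$ has no real zeros; hence $W_{\lambda^2}$ is smooth and strictly positive on all of $\Rset$, and $W_{\lambda^2}(x)\,q(x)\to0$ as $x\to\pm\infty$ for every polynomial $q$. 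In particular all integrals in \eqref{eq:orthogonality} converge and $W_{\lambda^2}$ is a positive-definite weight.

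For the orthogonality, Proposition~\ref{prop:Tlambdaeigenvalue} applied to the double partition $\lambda^2$ (of length $2\ell$) gives $T_{\lambda^2}[H^{(\lambda)}_i]=(4\ell-2i)H^{(\lambda)}_i$, and these eigenvalues are distinct for distinct indices. The Lagrange identity for the self-adjoint form above, integrated over $\Rset$, then yields
\[
  (2j-2i)\int_{-\infty}^{\infty} H^{(\lambda)}_i\,H^{(\lambda)}_j\,W_{\lambda^2}\,dx
  =\Bigl[\,W_{\lambda^2}\bigl({H^{(\lambda)}_i}'\,H^{(\lambda)}_j-H^{(\lambda)}_i\,{H^{(\lambda)}_j}'\bigr)\Bigr]_{-\infty}^{\infty},
\]
and the right-hand side vanishes because of the decay noted above (there is no interior boundary contribution since $W_{\lambda^2}$ is regular on $\Rset$). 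Hence $\int H^{(\lambda)}_i H^{(\lambda)}_j W_{\lambda^2}\,dx=0$ for $i\neq j$; equivalently this is the $L^2$-orthogonality of the eigenfunctions $\psi_{\lambda^2,j}=e^{-x^2/2}H^{(\lambda)}_j/H_{\lambda^2}$ of the regular Schr\"odinger operator $\cH_{\lambda^2}=-\partial_{xx}+U_{\lambda^2}$.

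It remains to evaluate the diagonal integrals, which is the crux. Here I would use $H^{(\lambda)}_j=\cA_{\lambda^2}[H_j]$ together with the fact that $\cB_{\lambda^2}$ is the adjoint of $\cA_{\lambda^2}$ relative to the pair of weights $(W_{\lambda^2},e^{-x^2})$,
\[
  \int_{-\infty}^{\infty}\bigl(\cA_{\lambda^2}[f]\bigr)\,g\;W_{\lambda^2}\,dx=\int_{-\infty}^{\infty} f\,\bigl(\cB_{\lambda^2}[g]\bigr)\,e^{-x^2}\,dx ,
\]
obtained by $2\ell$ integrations by parts with all boundary terms dropping out by the Gaussian decay and the regularity of $H_{\lambda^2}^{-1}$ on $\Rset$. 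Taking $f=g=H_j$ turns the square norm into $\int H_j\,(\cB_{\lambda^2}\cA_{\lambda^2}[H_j])\,e^{-x^2}\,dx$. Now $\cB_{\lambda^2}\cA_{\lambda^2}$ commutes with the classical Hermite operator $T$ (Proposition~\ref{prop:intertwining}), does not raise degree, and kills $H_m$ for each of the $2\ell$ Hermite indices $m$ in the gap sequence of $\lambda^2$ — those that appear as the $\ell$ consecutive pairs $\{k_i,k_i+1\}$ of Definition~\ref{def:XHermite} — because $\cA_{\lambda^2}[H_m]$ is a Wronskian with a repeated column. Hence $\cB_{\lambda^2}\cA_{\lambda^2}[H_j]=c_j\,H_j$, where $c_j$ is a degree-$2\ell$ polynomial in $j$ with these prescribed roots, hence proportional to $p_{\lambda^2}(j)=\prod_{i=1}^{\ell}(j-k_i)(j-k_i-1)$; the proportionality constant is fixed either by comparing leading coefficients or, more explicitly, by telescoping the first-order relations $B_{\lambda,k}A_{\lambda,k}=T_\lambda+2k-2\ell$ from the proof of Proposition~\ref{prop:intertwining} over the $2\ell$ deleted levels, each contributing one linear factor in $j$. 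Combining with the classical identity $\int H_j^2\,e^{-x^2}\,dx=\sqrt{\pi}\,2^j j!$ gives \eqref{eq:orthogonality}; that the result is positive for every admissible $j$ is consistent with $W_{\lambda^2}>0$ and reflects that each pair $\{k_i,k_i+1\}$ contributes the positive factor $(j-k_i)(j-k_i-1)$ (this is the observation, made just before the proposition, that $p_{\lambda^2}(n)\ge0$ for all integers $n$).

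The hard part is this last step: establishing that $\cB_{\lambda^2}$ is exactly the adjoint of $\cA_{\lambda^2}$ with respect to $(W_{\lambda^2},e^{-x^2})$ and that every boundary term produced by the $2\ell$-fold integration by parts vanishes — which again relies essentially on $H_{\lambda^2}$ having no real zeros — and then tracking the overall multiplicative constant so that the telescoped product of eigenvalue differences comes out as precisely $p_{\lambda^2}(j)$ rather than a scalar multiple of it. The self-adjoint-form argument for the orthogonality relation itself is routine Sturm--Liouville theory.
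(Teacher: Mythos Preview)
Your approach is correct and, for the norm computation, essentially coincides with the paper's: both rest on the formal adjoint relation between the $A$ and $B$ operators, iterated over the $2\ell$ Darboux steps, reducing the $W_{\lambda^2}$-inner product to the classical Hermite inner product and picking up one linear factor $(j-k)$ per step so that the telescoped product is $p_{\lambda^2}(j)$. The paper states this as the single-step indefinite identity \eqref{eq:ABadjoint} and iterates, whereas you phrase it as a global adjointness $\cA_{\lambda^2}^{\,*}=\cB_{\lambda^2}$ with respect to the pair of weights together with the commutation $\cB_{\lambda^2}\cA_{\lambda^2}$--$T$ and the observation that $\cA_{\lambda^2}$ kills the $2\ell$ seed Hermite polynomials; these are the same ingredients packaged slightly differently. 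One small caution: the argument ``commutes with $T$, does not raise degree, has $2\ell$ prescribed zeros'' shows $\cB_{\lambda^2}\cA_{\lambda^2}[H_j]=c_j H_j$ with $c_j$ vanishing at the gap indices, but does not by itself prove that $c_j$ is a \emph{polynomial} of degree $2\ell$ in $j$; the telescoping via $B_{\lambda,k}A_{\lambda,k}=T_\lambda+2k-2\ell$ (which you also invoke) is what actually gives this and simultaneously pins down the constant, so that step is not optional. The only genuine difference is that you establish the off-diagonal orthogonality directly from the self-adjoint Sturm--Liouville form of $T_{\lambda^2}$, while the paper obtains orthogonality and norms together from the iterated adjoint relation and the classical Hermite orthogonality \eqref{eq:Hnorm}; your route is slightly more pedestrian for $i\neq j$ but perfectly sound.
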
 Note that the weight $W_{\lambda^2}$ is regular and positive
definite because $\lambda^2$ is an Adler sequence, as it is clear from
\eqref{eq:taudef} and Theorem \ref{thm:krein}.

\begin{proof} The key to the proof is the formal adjoint relation
between the $B$ and $A$ operators defined above.  To be more precise,
we have
\begin{equation}
  \label{eq:ABadjoint} \int^x A_{\lambda, k}[f] g\, W_{\lambda, k}+
\int^x f B_\lambda[g] \, W_\lambda = \frac{e^{-x^2}}{H_\lambda
H_{\lambda, k}}
\end{equation} Iterating the above relation and applying the classical
formula
\begin{equation}
  \label{eq:Hnorm} \int_{-\infty}^\infty H_i(x) H_j(x) e^{-x^2} dx =
\delta_{i,j}\, \sqrt{\pi} \,2^j\, j!
\end{equation} gives the desired generalization
\eqref{eq:orthogonality}.

\end{proof}

\subsection{Completeness}

We have already shown that $ H^{(\lambda)}_j$ for \mbox{$j\in\mathbb
N\backslash \{k_1,k_1+1,\dots,k_\ell,k_\ell+1 \}$} is a numerable sequence of polynomial eigenfunctions of a second order differential operator that satisfies an orthogonality relation. We will now prove that the sequence spans a complete basis of the corresponding Hilbert space. 

In the case of an Adler partition $\lambda$, we know that $|\lambda|=2m$ is an even positive integer and from Theorem \ref{thm:krein} we see that $H_\lambda>0$ for all $x\in\mathbb R$.  It follows that the
weight
\[ W_\lambda(x) = \frac{e^{-x^2}}{H_\lambda(x)^2} dx,\quad x\in \Rset
\] is regular and has finite moments of all orders.
\begin{prop}\label{prop:density}
  If $\lambda$ is an Adler partition, the polynomial subspace $\cU_\lambda$ is dense in the Hilbert space
  $\mathrm{L}^2(\Rset,W_\lambda)$.
\end{prop}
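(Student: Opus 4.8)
The statement asserts density of the polynomial subspace $\cU_\lambda$ in $\mathrm{L}^2(\Rset, W_\lambda)$, where $W_\lambda = e^{-x^2}/H_\lambda(x)^2$ for an Adler partition $\lambda$. My plan is to reduce the problem to the known completeness of the classical Hermite polynomials in $\mathrm{L}^2(\Rset, e^{-x^2})$ via the intertwining operators $\cA_\lambda$ and $\cB_\lambda$ introduced in Proposition \ref{prop:intertwining}, together with the adjoint relation \eqref{eq:ABadjoint}. Concretely, I would first observe that the map $p \mapsto \cA_\lambda[p] = \Wr[H_{k_1},\dots,H_{k_\ell},p]$ sends $\cP$ onto $\cU_\lambda$, and that, after conjugating by the ground-state factors, it intertwines the classical Hermite Hamiltonian with $\cH_\lambda$. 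The operator $\cB_\lambda$ provides an inverse (up to constants on each eigenspace), as encoded in the first-order factorizations $B_{\lambda,k}A_{\lambda,k} = T_\lambda + 2k - 2\ell$ from the proof of Proposition \ref{prop:intertwining}.

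First I would translate everything into Hilbert-space language: set $\psi_{\lambda,j} = e^{-x^2/2} H_{\lambda,j}/H_\lambda$ and note these are, up to nonzero scalars, the images under a bounded-below-gap Darboux–Crum transform of the Hermite functions $h_j = e^{-x^2/2} H_j$. Since $\lambda$ is an Adler partition, Theorem \ref{thm:krein} guarantees $H_\lambda > 0$ on $\Rset$, so the transform introduces no singularities and the $\psi_{\lambda,j}$ genuinely lie in $\mathrm{L}^2(\Rset)$; this is where the Adler hypothesis is essential. The orthogonality relation \eqref{eq:orthogonality} already tells us the $\psi_{\lambda,j}$ form an orthogonal set in $\mathrm{L}^2(\Rset)$; what remains is completeness. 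The standard argument: take $f \in \mathrm{L}^2(\Rset, W_\lambda)$ orthogonal to every $H_{\lambda,j}$, i.e. $e^{-x^2/2} f/H_\lambda$ orthogonal to every $\psi_{\lambda,j}$ in $\mathrm{L}^2(\Rset)$. Apply the adjoint operator: by \eqref{eq:ABadjoint}, pairing $\langle \cA_\lambda[p], g\rangle$ against a suitable weight equals pairing $p$ against $\cB_\lambda[g]$ plus boundary terms that vanish because of the decay of $e^{-x^2}/(H_\lambda H_{\lambda,k})$ at $\pm\infty$. Pushing $e^{-x^2/2} f/H_\lambda$ back through $\cB_\lambda$ then yields a function orthogonal to all classical Hermite polynomials against $e^{-x^2}$, which by classical completeness must be zero; unwinding the intertwiner shows $f = 0$.

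The care required — and the main obstacle — is in justifying the boundary terms in the iterated adjoint relation \eqref{eq:ABadjoint} and confirming that applying $\cB_\lambda$ keeps us inside the relevant $\mathrm{L}^2$ space, i.e. that the Darboux–Crum transformation and its inverse are genuine unitary (up to scalars on eigenspaces) intertwiners between the two weighted Hilbert spaces rather than merely formal ones. One clean way around the analytic bookkeeping is to invoke a general density principle: if $\{\psi_{\lambda,j}\}$ is an orthogonal system of eigenfunctions of an essentially self-adjoint Schrödinger operator $\cH_\lambda$ with purely discrete spectrum and no missing eigenvalues below the transformation gap, then it is automatically complete. Since $U_\lambda = x^2 - 2\partial_{xx}\log H_\lambda$ is a bounded perturbation of $x^2$ with the same quadratic growth, $\cH_\lambda$ is essentially self-adjoint on $\mathrm{L}^2(\Rset)$ with compact resolvent; one then checks, using the factorization $\cH_\lambda = \cH_{\lambda'}' + \text{const}$ at each Darboux step, that the spectrum of $\cH_\lambda$ is exactly $\{2j - 2\ell + 1 : j \notin \{k_1,k_1+1,\dots,k_\ell,k_\ell+1\}\}$ with no spurious eigenvalues — the deleted levels are precisely the square-non-integrable formal solutions. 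Completeness of the eigenbasis of a self-adjoint operator with discrete spectrum then finishes the argument. I would present the intertwiner/adjoint route as the main line and remark that the self-adjointness argument gives an independent confirmation.
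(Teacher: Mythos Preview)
Your main line has a real gap, and it is precisely the one you flag but do not close: the adjoint relation \eqref{eq:ABadjoint} is an integration-by-parts identity between smooth, decaying functions, while $\cB_\lambda$ is an $\ell$-th order differential operator. For a generic $f\in\mathrm{L}^2(\Rset,W_\lambda)$ there is no reason for $\cB_\lambda$ applied to $e^{-x^2/2}f/H_\lambda$ to exist as an $\mathrm{L}^2$ function, so the step ``pushing back through $\cB_\lambda$ yields something orthogonal to all classical Hermite polynomials'' does not go through; and you cannot first approximate $f$ by smooth functions without assuming the very density you want. Your fallback via essential self-adjointness of $\cH_\lambda$ is a legitimate route and can be made rigorous, but the step you wave at --- that the $\psi_{\lambda,j}$ exhaust the spectrum, i.e.\ no $\mathrm{L}^2$ eigenfunction survives at the deleted levels and no new eigenvalues appear --- is the entire content of the argument, not a remark. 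Carrying it out means iterating the one-step factorizations, checking at each stage that $A^*\phi$ stays in $\mathrm{L}^2$ for any $\mathrm{L}^2$ eigenfunction $\phi$, and analysing the kernel of $A$; none of this is done.

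The paper sidesteps all of this with an elementary direct argument that uses no spectral theory. The key observation (Proposition~\ref{prop:flagcondition}) is that $H_\lambda(x)^2\,p(x)\in\cU_\lambda$ for every polynomial $p$, so it suffices to show $H_\lambda^2\,\cP$ is dense in $\mathrm{L}^2(\Rset,W_\lambda)$. Since division by $H_\lambda$ is an isometry from $\mathrm{L}^2(\Rset,W_\lambda)$ onto $\mathrm{L}^2(\Rset,e^{-x^2})$, this reduces to showing $H_\lambda\,\cP$ is dense in $\mathrm{L}^2(\Rset,e^{-x^2})$; because $H_\lambda$ and $1+x^{2m}$ (with $2m=|\lambda|$) are comparable on $\Rset$, it is enough to prove $(1+x^{2m})\cP$ is dense there, which the paper does by splitting into even/odd parts and reducing via $y=x^2$ to the Laguerre weight. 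This is short and self-contained; what your spectral approach would buy, if completed, is the stronger statement that the $H_{\lambda,j}$ are exactly the eigenbasis of a concrete self-adjoint operator, but that is more than is needed for density.
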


In order to prove the completeness of exceptional Hermite polynomials we will need to preliminary lemmas.
Let $\cH_\alpha=\rL^2[(0,\infty),y^\alpha e^{-y} dy]$ denote the Hilbert
space of the classical Laguerre polynomials and $\cH =
\rL^2[\Rset,e^{-x^2} dx]$ the Hilbert space of the Hermite
polynomials. Throughout the proof, we will make use of Theorem 5.7.1 in \cite{Sz}, which asserts that $\cP$, the vector space of univariate polynomials, is
dense in $\cH_\alpha,\; \alpha>-1$ and in $\cH$.  We will write
\[  q(x) \cP(x) = \{  q(x) p(x) \colon p \in \cP \}\]
to denote  a polynomial subspace with a common factor $q(x)$.
\begin{lem}
  \label{lem:density1}
  The polynomial subspace $(1+y^m)\cP(y)$ is dense in $\cH_\alpha$ for
  every integer $m>0$ and every real $\alpha>0$.
\end{lem}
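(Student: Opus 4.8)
The plan is to deduce the lemma from the classical density of $\cP$ in $\cH_\alpha$ (Theorem 5.7.1 in \cite{Sz}) by two reductions: first absorbing the factor $1+y^m$ into a change of weight, and then enlarging that weight to a genuine Laguerre-type weight on which Szeg\H{o}'s result applies after a rescaling. For the first reduction, I would note that $1+y^m\geq 1$ on $(0,\infty)$, so multiplication by $1+y^m$ defines a unitary map $V$ from $\cH'=\rL^2[(0,\infty),(1+y^m)^2 y^\alpha e^{-y}\,dy]$ onto $\cH_\alpha$, with inverse $f\mapsto f/(1+y^m)$; this is immediate from the definitions of the norms, and surjectivity holds because $f/(1+y^m)\in\cH'$ whenever $f\in\cH_\alpha$. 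Since $V$ carries $\cP$ onto $(1+y^m)\cP$ and a unitary sends dense sets to dense sets, the lemma is equivalent to the statement that $\cP$ is dense in $\cH'$.

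For the second reduction, write $w_1(y)=(1+y^m)^2 y^\alpha e^{-y}$ and $w_2(y)=y^\alpha e^{-y/2}$. Since $(1+y^m)^2 e^{-y/2}$ is continuous on $[0,\infty)$ and vanishes at infinity it is bounded, so $w_1\leq C w_2$ with $C=\sup_{y\geq 0}(1+y^m)^2 e^{-y/2}<\infty$; hence the inclusion $\rL^2[(0,\infty),w_2\,dy]\hookrightarrow\cH'$ is continuous. The change of variable $y=2u$ then furnishes a unitary (up to the constant $2^{\alpha+1}$) from $\rL^2[(0,\infty),w_2\,dy]$ onto $\cH_\alpha$ that carries polynomials bijectively onto polynomials, so Theorem 5.7.1 of \cite{Sz} — applicable since $\alpha>0>-1$ — shows that $\cP$ is dense in $\rL^2[(0,\infty),w_2\,dy]$. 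Combined with the continuous inclusion, this places all of $\rL^2[(0,\infty),w_2\,dy]$ inside the closure of $\cP$ in $\cH'$.

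It remains to check that $\rL^2[(0,\infty),w_2\,dy]$ is itself dense in $\cH'$, which I would do by truncation: for $f\in\cH'$ the functions $f\chi_{(0,n)}$ converge to $f$ in $\cH'$ by dominated convergence, and each $f\chi_{(0,n)}$ lies in $\rL^2[(0,\infty),w_2\,dy]$ because on $(0,n)$ one has $w_2\leq e^{n/2}w_1$. Chaining these approximations gives density of $\cP$ in $\cH'$, and then $V$ transports it to density of $(1+y^m)\cP$ in $\cH_\alpha$, completing the argument.

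The point that needs care — and the only real obstacle — is that one cannot argue directly that orthogonality to $(1+y^m)\cP$ forces orthogonality to $\cP$: multiplying or dividing a generic element of $\cH_\alpha$ by $1+y^m$ need not keep it in $\cH_\alpha$, and the reweighted space $\cH'$ still has exponential type $e^{-y}$, exactly as the original. The device that makes everything work is to spend one factor of $e^{-y/2}$ so as to dominate $w_1$ by a true Laguerre weight where Szeg\H{o}'s theorem is available; the price is the harmless truncation step above.
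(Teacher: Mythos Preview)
Your argument is correct. The unitary $V$, the continuous inclusion $\rL^2(w_2)\hookrightarrow\cH'$, the dilation $y=2u$ reducing to Szeg\H{o}'s theorem, and the truncation step are all sound and combine cleanly. In fact your proof works for every $\alpha>-1$, not just $\alpha>0$.

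The paper takes a genuinely different route. Rather than spending a factor $e^{-y/2}$ to dominate the weight, it uses the elementary inequality $1+y^m\leq(1+y)^m$ together with the \emph{translation} $y\mapsto y+1$ to compare against the Laguerre space $\cH_{2m+\alpha}$ with a higher exponent on $y$ but the same exponential $e^{-y}$. Concretely, given a polynomial target $q$ the paper sets $\hat q(y)=q(y-1)/((y-1)^m+1)$ for $y\geq 1$ (and $0$ otherwise), checks $\hat q\in\cH_{2m+\alpha}$, approximates it there by a polynomial $\hat p$, and translates back via $p(y)=\hat p(y+1)$; the estimate $(1-1/y)^\alpha\leq 1$ for $y\geq 1$ closes the computation and is where the hypothesis $\alpha>0$ actually enters. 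Your approach is more structural (unitary equivalence plus a weight comparison and truncation) and slightly more general in the range of $\alpha$; the paper's is a direct hands-on estimate that avoids introducing the auxiliary weight $w_2$ but needs the restriction $\alpha\geq 0$.
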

\begin{proof}
  Given a polynomial $q(y)$ and an $\epsilon>0$ it suffices to find a
  polynomial $p(y)$ such that
  \[ \Vert q- (1+y^m)p \Vert_{\cH_\alpha}^2 = \int_0^\infty \left(
    q(y) - (1+y^m) p(y)\right)^2 y^\alpha e^{-y} dy \leq
  \epsilon. \] Define the function
  \[ \hat{q}(y) =
  \begin{cases}
    \displaystyle \frac{q(y-1)}{(y-1)^m+1}, &  y\geq 1\\[8pt]
    0 & 0\leq y<1
  \end{cases}
  \]
  We assert that $\hat{q} \in \cH_{2m+\alpha}$ by observing that
  \begin{align*}
    \int_0^\infty \hat{q}(y)^2y^{2m+\alpha} e^{-y}\, dy &=
    \int_1^\infty
    \left(\frac{q(y-1)}{(y-1)^m+1}\right)^2 y^{2m+\alpha} e^{-y}\,  dy \\
    & = e^{-1} \int_0^\infty q(y)^2
    \left(\frac{(1+y)^{m}}{1+y^m}\right)^2 (1+y)^\alpha e^{-y} dy\\
    &< \infty
  \end{align*}
  Next, choose a polynomial $\hat{p}(y)$ such that
  \[ \Vert \hat{q} - \hat{p} \Vert_{\cH_{2m+\alpha}}^2 \leq
  \frac{\epsilon}{e}, \] and set
  \[ p(y) = \hat{p}(y+1).\] It follows that
  \begin{align*}
    &\int_0^\infty \left(q(y)- (1+y^m) p(y)\right)^2 y^\alpha
    e^{-y} dy \\
    &\quad = \int_0^\infty \left(\frac{q(y)}{(1+y^m)}-
      p(y)\right)^2 (1+y^m)^2 y^\alpha e^{-y} dy\\
    &\quad \leq \int_0^\infty \left(\frac{q(y)}{1+y^m}-
      p(y)\right)^2 (1+y)^{2m} y^\alpha e^{-y} dy\\
    &\quad = e\int_1^\infty \left(\hat{q}(y)- \hat{p}(y)\right)^2
    y^{2m+\alpha} \left(1-\frac{1}{y}\right)^\alpha
    e^{-y} dy\\
    &\quad \leq e\int_1^\infty \left(\hat{q}(y)- \hat{p}(y)\right)^2
    y^{2m+\alpha}
    e^{-y} dy\\
    &\quad \leq e \Vert \hat{q} - \hat{p}\Vert_{\cH_{2n+\alpha}}\\
    &\quad \leq \epsilon,
  \end{align*}
  as was to be shown.
\end{proof}

\begin{lem}
  \label{lem:density2}
  The polynomial subspace $(1+x^{2m}) \cP(x)$ is dense in $\cH$ for
  every integer $m>0$.
\end{lem}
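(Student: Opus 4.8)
The plan is to reduce the statement to Lemma~\ref{lem:density1} by splitting $\cH$ into its even and odd sectors. Since the weight $e^{-x^2}$ is even, the orthogonal decomposition $\cH=\cH^{\mathrm{ev}}\oplus\cH^{\mathrm{odd}}$ into even and odd functions is preserved by multiplication by the even polynomial $1+x^{2m}$, so it suffices to prove that $(1+x^{2m})\cP^{\mathrm{ev}}$ is dense in $\cH^{\mathrm{ev}}$ and that $(1+x^{2m})\cP^{\mathrm{odd}}$ is dense in $\cH^{\mathrm{odd}}$, where $\cP^{\mathrm{ev}}$ and $\cP^{\mathrm{odd}}$ denote the even and odd polynomials. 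First I would pass to the half-line via $y=x^2$: this gives isometric isomorphisms $\cH^{\mathrm{odd}}\cong\cH_{1/2}$ through $f(x)=x\,g(x^2)$ and $\cH^{\mathrm{ev}}\cong\cH_{-1/2}$ through $f(x)=g(x^2)$, and under both identifications $1+x^{2m}$ becomes $1+y^m$ while $\cP^{\mathrm{odd}}$ and $\cP^{\mathrm{ev}}$ both become $\cP(y)$; so in either sector the subspace whose density is at issue becomes $(1+y^m)\cP(y)$. For the odd sector this is precisely Lemma~\ref{lem:density1} with $\alpha=\tfrac12>0$, and nothing more is needed.

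For the even sector one must show $(1+y^m)\cP(y)$ is dense in $\cH_{-1/2}$, and here the index $\alpha=-\tfrac12$ lies outside the range of Lemma~\ref{lem:density1}. To get around this I would use the observation that multiplication by $y$ is an isometric isomorphism $M_y\colon\cH_{3/2}\to\cH_{-1/2}$, with inverse $h\mapsto h/y$, since $\int_0^\infty|y\,g(y)|^2y^{-1/2}e^{-y}\,dy=\int_0^\infty|g(y)|^2y^{3/2}e^{-y}\,dy$. This map carries the subspace $(1+y^m)\cP(y)$ into itself, because $y(1+y^m)p(y)=(1+y^m)\bigl(y\,p(y)\bigr)$ with $y\,p(y)\in\cP$. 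Now Lemma~\ref{lem:density1} with $\alpha=\tfrac32>0$ gives that $(1+y^m)\cP(y)$ is dense in $\cH_{3/2}$; applying the surjective isometry $M_y$, its image $y(1+y^m)\cP(y)$ is dense in $\cH_{-1/2}$, and hence so is the larger subspace $(1+y^m)\cP(y)$. Combining the two sectors completes the proof.

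The only genuine obstacle is the negative index $\alpha=-\tfrac12$ forced by the even sector; once one notices that $\cH_{-1/2}$ and $\cH_{3/2}$ are unitarily equivalent through multiplication by $y$, this disappears, and everything else is routine bookkeeping with the change of variables $y=x^2$ and the orthogonal even/odd splitting.
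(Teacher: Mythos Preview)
Your proof is correct. The overall strategy---splitting into even and odd sectors and passing to the Laguerre spaces $\cH_\alpha$ via $y=x^2$---matches the paper's, and in the odd sector both arguments invoke Lemma~\ref{lem:density1} at $\alpha=\tfrac12$. The difference lies in how the even sector (naturally landing at $\alpha=-\tfrac12$, outside Lemma~\ref{lem:density1}'s range) is handled. The paper restricts attention to polynomial targets $q$ (sufficient since $\cP$ is dense in $\cH$), writes $q(x)=q_0+xq_1(x^2)+x^2q_2(x^2)$, and forces the approximant to satisfy $p(0)=q_0$; this makes the even part of $q-(1+x^{2m})p$ divisible by $x^2$, so after the change of variables the problem becomes approximation in $\cH_{3/2}$. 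Your route instead exploits the unitary equivalence $M_y\colon\cH_{3/2}\to\cH_{-1/2}$ to push the density of $(1+y^m)\cP$ from $\cH_{3/2}$ to $\cH_{-1/2}$ directly. Both arguments ultimately rely on Lemma~\ref{lem:density1} at $\alpha=\tfrac12$ and $\alpha=\tfrac32$; yours is a cleaner, more functional-analytic repackaging that avoids the explicit bookkeeping with the constant term and does not require first reducing to polynomial targets.
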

\begin{proof}
  Given a polynomial $q(x)$ and an $\epsilon>0$ it suffices to find a
  polynomial $p(x)$ such that
  \begin{equation}
    \label{eq:d2qpepsilon}
    \Vert q- (1+x^{2m})p \Vert_{\cH}^2 = \int_\Rset
    \left( q(x) - (1+x^{2m}) p(x)\right)^2 e^{-x^2} dx \leq
    \epsilon.   \end{equation}  
  Write
  \[ q(x) = q_0 + x q_1(x^2) + x^2 q_2(x^2) \] where $q_0$ is a
  constant and where $q_1(y), q_2(y)$ are polynomials in $y=x^2$.
  Imposing the condition that $p(0)=q(0)$ let us write
  \[ p(x) = q_0+ x p_1(x^2)+ x^2 p_2(x^2) \] where $p_1(y), p_2(y)$
  are polynomials. Then, by the orthogonality of odd and even
  functions in $\cH$, the inequality \eqref{eq:d2qpepsilon} assumes
  the form
  \begin{align*}
    &\int_\Rset
    \left( q_1(x^2) - (1+x^{2m})p_1(x^2)\right)^2 x^2e^{-x^2} dx \\
    &\qquad + \int_\Rset \left(q_2(x^2) - q_0 x^{2(m-1)}-
      (1+x^{2m})p_2(x^2)\right)^2 x^4 e^{-x^2} dx\\
    &= \int_0^\infty (q_1(y) - (1+y^m) p_1(y))^2 y^{\frac{1}{2}} e^{-y} dy\\
    &\qquad + \int_0^\infty \left(q_2(y) - q_0 y^{m-1} -
      (1+y^m) p_2(y)\right)^2 y^{\frac{3}{2}} e^{-y} dy \\
    &\leq \epsilon,
  \end{align*}
  where for the first equality we employ the change of variables
  $y=x^2$.  By Lemma \ref{lem:density1}, it is possible to find
  polynomials $p_1(y), p_2(y)$ such that the above inequality is
  satisfied.
  \end{proof}

\begin{proof}[Proof of Proposition \ref{prop:density}]
Let $f\in L^2(\Rset, W_\lambda)$.  
Set
\[ \hat{f}(x) = \frac{(1+x^{2m})}{H_\lambda(x)^2}f(x)
\] and observe that
\begin{align*}
 \int_\Rset \hat{f}(x)^2 e^{-x^2}\, dx 
  &\leq A^2 \int_\Rset \left(\frac{f(x)}{H_\lambda(x)}\right)^2
  e^{-x^2}\,dx = \int_\Rset f(x)^2 W_\lambda(x) \,dx < \infty
\end{align*}
where 
\[ A = \sup \left\{ \frac{1+x^{2m}}{H_\lambda(x)} \colon x \in \Rset
\right\}<\infty .\] Let $\epsilon>0$ be given.  
Set
\[ B = \sup \left\{ \frac{H_\lambda(x)}{1+x^{2m}} \colon x \in \Rset
\right\}<\infty.\] By Lemma \ref{lem:density2} we can find a
polynomial $p(x)$ such that
\[ \int_\Rset \left( \hat{f}(x)- (1+x^{2m}) p(x)\right)^2 e^{-x^2}\, dx
\leq \frac{\epsilon}{B^2}.\] Hence
\begin{align*}
  & \int_\Rset \left( f(x) - H_\lambda(x)^2 p(x) \right)^2 W_\lambda(x)\, dx\\
  &\quad = \int_\Rset \left( \frac{f(x)}{H_\lambda(x)} - H_\lambda(x)
    p(x) \right)^2 e^{-x^2}\, dx\\
  &\quad \leq B^2 \int_\Rset \left( \hat{f}(x) - (1+x^{2m})
    p(x) \right)^2 e^{-x^2}\, dx\\
  &\quad \leq \epsilon
\end{align*}
By Proposition \ref{prop:flagcondition} a polynomial of the form
$H_\lambda(x)^2 p(x)$ belongs to $\cU_\lambda$.   The Proposition is proved.
\end{proof}

\subsection{Recursion formulas}

Although the definition of $\Xl$-Hermite polynomials via a Wronskian determinant of classical Hermite polynomials is the most compact way of defining them, it is clear that for the purpose of an efficient computation it would be better to have a recursion formula. The existence of a recursion formula for the exceptional polynomial families has been a major challenge in the past few years until the recent work of Odake \cite{Odake2013}, which shows an elegant way to derive such recurrence relations for certain families of exceptional polynomials. 

Although the recursion formulas given in \cite{Odake2013} are given for sequences of exceptional polynomials whose degree sequence is $m,m+1,m+2,\dots$, i.e. with one gap at the beginning, the procedure can also be applied to sequences with an arbitrary degree sequence.
In the case of Hermite polynomials, in addition, the recursion for the connection polynomials (which are only defined recursively in \cite{Odake2013})  can be solved explicitly.

For integers $n,i,j$, let
\[ C^n_{ij} =
\begin{cases}
  \displaystyle \frac{n!}{i! j! (n-i-j)!},&\quad  i,j,n-(i+j)\geq 0\\
  0, &\quad \text{otherwise}
\end{cases}
\]
be the trinomial coefficient, and for an integer $i$ let

\[ (x)_i =
\begin{cases}
  x (x+1)\cdots (x+i-1),& i> 0\\
  1,& i=0\\
  0,& i<0
\end{cases}
\]
denote the usual Pochhammer symbol.  
In the rest of the section, we adopt the convention that $H_{\lambda,n}=0$ for $n<0$.
\begin{prop}\label{prop:recur}
  Let $\lambda=(\lambda_1,\dots,\lambda_\ell)$ be a partition of length $\ell$ and $n \geq -\ell-1$ be a non-negative integer.  The Wronskians of Hermite polynomials $H_{\lambda, k}$ defined by \eqref{eq:Hlk} obey the following  relation:
  \begin{equation}
    \label{eq:xhrecur}
    0=\sum_{j,k,m\atop {2j-m-k=0}} (-1)^m 2^{-j} (n+k+1)_{\ell+1-j}\,
    C^{\ell+1}_{m,j-m}\, H_{m}H_{\lambda,n+k} ,\qquad n\geq-\ell-1
  \end{equation}
  where the sum is taken over all non-negative integers $k,m,j$ that
  satisfy the above constraint.
\end{prop}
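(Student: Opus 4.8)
The plan is to package the Wronskians $H_{\lambda,k}$ into a single generating function, show that it is annihilated by an explicit operator in the generating variable, and read off \eqref{eq:xhrecur} from the Taylor coefficients. Write $G_\lambda(x,t):=\sum_{n\ge 0}\frac{H_{\lambda,n}(x)}{n!}\,t^n$, using the convention $H_{\lambda,n}=0$ for $n<0$ (and the automatic vanishing $H_{\lambda,k_i}=0$). Expanding the Wronskian \eqref{eq:Hlk} along its last column exhibits $H_{\lambda,n}=\cA_\lambda[H_n]$, where $\cA_\lambda$ is a linear differential operator in $x$ of order $\ell$ whose coefficients do not depend on $n$. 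Applying $\cA_\lambda$ termwise to the classical identity $\sum_{n\ge0}\frac{H_n(x)}{n!}t^n=e^{2xt-t^2}$ and using $\partial_x^i e^{2xt-t^2}=(2t)^i e^{2xt-t^2}$ gives $G_\lambda(x,t)=e^{2xt-t^2}\,P_\lambda(x,2t)$, where $P_\lambda(x,s)$ is the total symbol of $\cA_\lambda$, hence a polynomial of degree $\le\ell$ in $s$.

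Next, the first-order operator $L:=\partial_t+2t-2x$ obeys $L[e^{2xt-t^2}Q(2t)]=2e^{2xt-t^2}Q'(2t)$ for every polynomial $Q$, since the factor $2x-2t$ produced by differentiating the exponential cancels $-2x+2t$; thus $L$ lowers the $s$-degree of the symbol by one, and as $\deg_s P_\lambda\le\ell$ we obtain the functional equation $(\partial_t+2t-2x)^{\ell+1}G_\lambda=0$. To read off the recurrence, write $L=\partial_t+(2t-2x)$, note $[\partial_t,\,2t-2x]=2$, and normal-order the power with all $\partial_t$'s to the right:
\[ L^{\ell+1}=\sum_{p+q+2k=\ell+1}\frac{(\ell+1)!}{p!\,q!\,k!}\,(2t-2x)^q\,\partial_t^{\,p}. \]
Using $\partial_t^{\,p}G_\lambda=\sum_{m\ge0}\frac{H_{\lambda,m+p}}{m!}t^m$, expanding $(2t-2x)^q$ by the binomial theorem, extracting the coefficient of $t^{\,n}$ and multiplying by $n!$ (so that $n!/(n-i)!$ becomes $(n-i+1)_i$) turns $L^{\ell+1}G_\lambda=0$ into the ``monomial form''
\[ 0=\sum_{p+q+2k=\ell+1}\frac{(\ell+1)!}{p!\,q!\,k!}\sum_{i=0}^{q}\binom{q}{i}(-2x)^{q-i}\,2^{\,i}\,(n-i+1)_i\,H_{\lambda,\,n-i+p}. \]

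Finally, identity \eqref{eq:xhrecur} is this same relation after the reindexing $n\mapsto n+\ell+1$ (which puts the highest Wronskian at index $n+2\ell+2$ and the lowest at $n$): substituting $H_m(x)=\sum_{i\ge0}\frac{(-1)^i m!}{i!\,(m-2i)!}(2x)^{m-2i}$ into the right side of \eqref{eq:xhrecur} and collecting like powers of $2x$, the equality of the coefficients of each monomial $(2x)^aH_{\lambda,\,n+b}$ on the two sides becomes a single Chu--Vandermonde summation over the Hermite-expansion index. The computation is transparent for $\ell=0$, where both forms reduce to $H_{n+2}=2xH_{n+1}-2(n+1)H_n$, and is the same with more indices in general. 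An alternative route, closer to \cite{Odake2013}, is instead to ``dress'' the classical three-term recurrence $H_{n+1}=2xH_n-2nH_{n-1}$ by $\cA_\lambda$ and its formal derivatives, producing a $2\ell+3$-term recurrence whose connection coefficients are defined recursively, and then to verify by induction on $\ell$ that the closed form \eqref{eq:xhrecur} solves them.

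I expect the only genuinely delicate step to be this last repackaging. The coefficient extraction naturally yields the recurrence with the $x$-dependence in bare powers $(2x)^c$ and the weights organized by the triple $(p,q,k)$, whereas \eqref{eq:xhrecur} presents the same relation with the $x$-dependence grouped into Hermite polynomials $H_m$ carrying the trinomial coefficients $C^{\ell+1}_{m,j-m}$ and the Pochhammer factors $(n+k+1)_{\ell+1-j}$. Reconciling the two requires careful bookkeeping of the signs coming from $[\partial_t,2t]=2$ together with the Hermite-to-monomial expansion; the cleanest organization is to verify the ``$(2x)^aH_{\lambda,n+b}$-coefficient'' identity termwise via a standard hypergeometric summation.
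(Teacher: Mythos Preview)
Your generating-function argument is correct and genuinely different from the paper's proof. The paper follows Odake: it introduces an auxiliary identity
\[
\sum_{j=0}^{\ell}\sum_{m=0}^{j}A^{\ell}_{njm}\,H_m H^{(\ell)}_{n+2j-m}=\ell!\,H_{n+\ell}H^{(\ell)},
\]
shows by a direct manipulation of the classical three-term recurrence that this implies \eqref{eq:xhrecur}, and then closes an induction on $\ell$ by taking the Wronskian of \eqref{eq:xhrecur} with $H^{(\ell+1)}$ and invoking the Sylvester ``Wronskian of Wronskians'' identity together with $H_m'=2mH_{m-1}$. Your route bypasses the induction entirely: the observation $G_\lambda=e^{2xt-t^2}P_\lambda(x,2t)$ with $\deg_s P_\lambda\le\ell$ and $L[e^{2xt-t^2}Q(2t)]=2e^{2xt-t^2}Q'(2t)$ is a clean one-shot proof of $L^{\ell+1}G_\lambda=0$, and the Heisenberg normal ordering of $L^{\ell+1}$ gives the recurrence directly.

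The trade-off is exactly where you flag it. The paper's inductive bookkeeping is set up so that the coefficients $A^{\ell}_{njm}$ already carry the trinomial/Pochhammer structure of \eqref{eq:xhrecur}; no repackaging is needed. Your coefficient extraction yields the recurrence in the monomial basis $(2x)^a$, and the match with the stated Hermite-basis form still requires the termwise hypergeometric identity you mention; note in particular that the two forms differ by an overall factor $2^{\ell+1}$ (compare the leading coefficients of $H_{\lambda,n+2\ell+2}$), which you should track through the Chu--Vandermonde step. As a practical matter this verification is mechanical, but it is not literally contained in your write-up, so if you present this argument you should either carry it out or, more economically, show that both recurrences express $H_{\lambda,n+2\ell+2}$ via the same first-order $t$-equation $L^{\ell+1}G_\lambda=0$ and hence coincide after normalization. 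The paper's approach also yields the auxiliary relation above as a byproduct, which may be of independent use; your approach is shorter and more conceptual but does not produce it.
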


The following remarks are in place:

\begin{enumerate}

\item The sum \eqref{eq:xhrecur} is finite because the Pochhammer symbol evaluates
  to zero if $j>\ell+1$ and the trinomial symbol evaluates to zero if
  $j>\ell+1$ or if $m>j$. 
   In effect, the above identity can be expressed in the following manner
   \begin{subequations}
   \begin{equation}
   \sum_{k=0}^{2\ell+2} B^\ell_{n,k} H_{\lambda,n+k},\quad n\geq -\ell-1,
   \end{equation}
   where
   \begin{equation}
   B^\ell_{n,k} =\sum_{j=\lceil\frac{k}{2}\rceil}^{\min(k,\ell+1)} (-1)^k 2^{-j} (n+k+1)_{\ell+1-j} C^{\ell+1}_{2j-k,k-j} H_{2j-k}
   \end{equation}
   \end{subequations}
   is a linear combination of even(odd) Hermite polynomials if $k$ is even(odd).
  \item Since  the index $k$ ranges from $0$ to $2(\ell+1)$, we have
  a $(2\ell+3)$-term recurrence relation for the $(\ell+1)$-order Wronskians.
\item The recursion formula \eqref{eq:xhrecur} is valid for Wronskians of an arbitrary sequence of Hermite polynomials, whether they form an Adler sequence or not. 
\item The initial values  of the recurrence relation are given by
\[H_{\lambda,0}, H_{\lambda,1},\dots,H_{\lambda,\ell}\]
where at least one of which has to be non-zero since the length of $\lambda$ is $\ell$ and that is the size of the largest possible gap.  The starting equation of the recurrence occurs at $n=-\ell-1$ and it allows to compute $H_{\lambda,\ell+1}$.

\end{enumerate}

Let us illustrate these recursion formulas by providing the first few explicit examples.

\begin{itemize}
\item 
When $\ell=0$, the above identity reduces to the usual 3-term recurrence relation for Hermite polynomials, which we express as
\[ \frac{1}{2} H_0 H_{n+2} - \frac{1}{2} H_1 H_{n+1} + (n+1) H_0 H_n =
0,\quad n\geq -1 \]

\item For $\ell=1$, the second order Wronskians satisfy the following 5-term recurrence relation

\begin{align*}
  0=&4(n+1)_2 H_0 H_{\lambda,n}-4 (n+2) H_1 H_{\lambda,n+1} + \Big( 4(n+3) H_0 +
    H_2\Big) H_{\lambda,n+2} \\
  &\quad - 2 H_1 H_{\lambda,n+3}+ H_0  H_{\lambda,n+4}  ,\qquad n\geq -2.
\end{align*}
The initial values for this sequence are $H_{\lambda,0}$ and $H_{\lambda,1}$,  and one of them has to be nonzero since $\lambda=(\lambda_1)$ has length $\ell=1$. At $n=-2$ the above recurrence allows to compute $H_{\lambda,2}$.

\item For $\ell=2$, $\lambda=(\lambda_1,\lambda_2)$ and the third order Wronskians $H_{\lambda, k}$ satisfy the following 7-term recurrence relation
\begin{align*}
  0&=8(n+1)_3\, H_0 \,H_{\lambda,n}-12 (n+2)_2 \,H_1\,
  H_{\lambda,n+1}\\
  &\quad +\Big(12 (n+3)_2\, H_0+6(n+3)\,    H_2\Big)
  \,H_{\lambda,n+2}\\
  &\quad    -\Big(12 (n+4)\,
    H_1+H_3\Big) H_{\lambda,n+3}\\
  &\quad +\Big(6(n+5)\,
    H_0+ 3H_2\Big) H_{\lambda,n+4}\\
  &\quad -3 H_1\,
  H_{\lambda,n+5}+ H_0\, H_{\lambda,n+6},\qquad n\geq-3.
\end{align*}
The initial values for this sequence are $H_{\lambda,0},H_{\lambda,1},H_{\lambda,2}$,  and one of them has to be nonzero since $\lambda=(\lambda_1,\lambda_1)$ has length $\ell=2$. At $n=-3$ the above recurrence allows to compute $H_{\lambda,3}$.
\end{itemize}
\begin{proof}[Proof of Proposition \ref{prop:recur} ]
For a given partition $\lambda=(\lambda_1,\dots,\lambda_\ell)$ of length $\ell$  we observe that the coefficients of the recurrence relation depend only on the partition length. The proof follows the argument given by Odake in \cite{Odake2013} and proceeds by induction on the number $\ell$ of Darboux transformations, which motivates the introduction of the following shorthand notation
\begin{align*}
  H^{(\ell)} &= H_\lambda= \Wr[H_{k_1},\ldots, H_{k_\ell}] \\
  H^{(\ell)}_{n} &= H_{\lambda,n}=\Wr[H_{k_1},\ldots, H_{k_\ell},H_n] \\
  H^{(\ell+1)} &= \Wr[H_{k_1},\ldots, H_{k_\ell},H_{k_{\ell+1}}] \\
  H^{(\ell+1)}_{n} &= \Wr[H_{k_1},\ldots, H_{k_\ell},H_{k_{\ell+1}},H_n] 
\end{align*}
The recurrence relation \eqref{eq:xhrecur} can be then expressed as
\begin{equation}
  \label{eq:xrecur1}
  \sum_{j=0}^{\ell+1} \sum_{m=0}^j A^{\ell+1}_{njm}\,
  H_{m}H^{(\ell)}_{n+2j-m} = 0,
\end{equation}
where
\[ A^\ell_{njm} = (-1)^m 2^{-j} (n+2j-m+1)_{\ell-j}\,
C^{\ell}_{m,j-m}. \]

The proof will be completed in two steps. First we will need to prove the following Lemma.

\begin{lem}\label{lem:rec}
The following relation 
\begin{equation}
  \label{eq:xrecur2}
  \sum_{j=0}^{\ell} \sum_{m=0}^j
  A^\ell_{njm}\, H_{m}H^{(\ell)}_{n+2j-m} =    \ell!\,  H_{n+\ell} H^{(\ell)}
\end{equation}
implies \eqref{eq:xrecur1}.
\end{lem}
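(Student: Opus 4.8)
The plan is to derive \eqref{eq:xrecur1} from \eqref{eq:xrecur2} by eliminating the inhomogeneous term $\ell!\,H_{n+\ell}H^{(\ell)}$ through a three-term combination of shifted copies of \eqref{eq:xrecur2}. Write
\[
R_N := \sum_{j=0}^{\ell}\sum_{m=0}^{j} A^\ell_{Njm}\, H_m\, H^{(\ell)}_{N+2j-m},
\]
so that by hypothesis $R_N = \ell!\,H_{N+\ell}\,H^{(\ell)}$ for every $N$. Applying the classical Hermite three-term recurrence in the form $\tfrac12 H_{M+2} - x H_{M+1} + (M+1)H_M = 0$ with $M=n+\ell$, one gets immediately
\[
\tfrac12 R_{n+2} - x\,R_{n+1} + (n+\ell+1)\,R_n
= \ell!\,H^{(\ell)}\Bigl(\tfrac12 H_{n+\ell+2} - x H_{n+\ell+1} + (n+\ell+1)H_{n+\ell}\Bigr) = 0.
\]
It thus remains to put the left-hand side of this identity into the normal form $\sum_{j,m}(\,\cdot\,)\,H_m H^{(\ell)}_{n+2j-m}$ and to identify the coefficients with the $A^{\ell+1}_{njm}$ appearing in \eqref{eq:xrecur1}.

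I would do this term by term. In $R_{n+2}$ the Wronskian index equals $(n+2)+2j-m = n+2(j+1)-m$, so the re-indexing $j\mapsto j-1$ converts that summand into $\sum_{j,m}A^\ell_{(n+2)(j-1)m}H_m H^{(\ell)}_{n+2j-m}$; the term $(n+\ell+1)R_n$ is already in normal form. For the middle term I would use the linearisation $xH_m = \tfrac12 H_{m+1} + m\,H_{m-1}$ (again the Hermite three-term recurrence) to split $-x R_{n+1}$ into an $H_{m+1}$-part and an $H_{m-1}$-part, and then apply $m\mapsto m-1,\ j\mapsto j-1$ to the first and $m\mapsto m+1$ to the second. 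Collecting the four contributions, the coefficient of $H_m H^{(\ell)}_{n+2j-m}$ comes out as
\[
\tfrac12 A^\ell_{(n+2)(j-1)m} - \tfrac12 A^\ell_{(n+1)(j-1)(m-1)} - (m+1)A^\ell_{(n+1)j(m+1)} + (n+\ell+1)A^\ell_{njm},
\]
and the claim to be verified is that this equals $A^{\ell+1}_{njm}$, which is precisely \eqref{eq:xrecur1}.

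The remaining step is the verification of this purely combinatorial identity, and this is where the (routine but somewhat delicate) bookkeeping sits. Writing $A^\ell_{abc}=(-1)^c 2^{-b}(a+2b-c+1)_{\ell-b}\,C^\ell_{c,b-c}$ and cancelling the common factor $(-1)^m 2^{-j}$, one expands both sides using the Pochhammer splitting $(x)_{\ell+1-j}=(x)_{\ell-j}(x+\ell-j)$ with $x=n+2j-m+1$ and the trinomial Pascal relation $C^{\ell+1}_{a,b}=C^\ell_{a-1,b}+C^\ell_{a,b-1}+C^\ell_{a,b}$; once the terms carrying the factors $n+\ell+1+j-m$ and $n+\ell+1$ have cancelled, what survives is the elementary identity $(j-m)\,C^\ell_{m,j-m} = (m+1)\,C^\ell_{m+1,j-m-1}$, both sides of which equal $\ell!\,/\bigl(m!\,(j-m-1)!\,(\ell-j)!\bigr)$. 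One must also keep track of the degenerate ranges --- summands with $j-1<0$, $m-1<0$, $m+1>j$ or $j=\ell+1$, and the endpoint $n=-\ell-1$ of the recursion parameter --- where the conventions that the Pochhammer and trinomial symbols vanish (together with $H_{\lambda,k}=0$ for $k<0$) make each boundary case consistent: the case $j=\ell+1$ collapses to ordinary Pascal, and the starting value $n=-\ell-1$, not reached by the combination above, is checked by hand. The main obstacle is therefore organisational rather than conceptual --- carrying the two-dimensional index shifts through without a sign slip or an out-of-range term.
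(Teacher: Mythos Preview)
Your proposal is correct and follows essentially the same route as the paper: both eliminate the inhomogeneous term $\ell!\,H_{n+\ell}H^{(\ell)}$ by taking a three-term combination of \eqref{eq:xrecur2} at three consecutive shifts, linearise the resulting $xH_m$ via the Hermite recurrence, reindex, and reduce to the same combinatorial identity among the $A$-symbols. The only cosmetic difference is that the paper works with the shifts $n-2,\,n-1,\,n$ (and so lands on \eqref{eq:xrecur1} with $n\to n-2$), whereas you use $n,\,n+1,\,n+2$; your normalisation is just the paper's divided by $2$.
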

\begin{proof}
  Suppose that \eqref{eq:xrecur2} holds.  We multiply the ordinary
  3-term recurrence for the classical Hermite polynomials
  \[ 2(\ell+n-1) H_{n+\ell-2} -2x H_{n+\ell-1} + H_{n+\ell} =0\]
  by $\ell!  H^{(\ell)}$ to obtain
  \begin{align*}
    0&= \sum_{j=0}^\ell \sum_{m=0}^j \left(2(\ell+n-1)
      A^\ell_{n-2,j,m} H^{(\ell)}_{n-2+k} - 2x A^\ell_{n-1,j,m}
      H^{(\ell)}_{n-1+k}+
      A^\ell_{n,j,m} H^{(\ell)}_{n+k}\right) H_m\\
    &= \sum_{j=0}^\ell \sum_{m=0}^j \left(2(\ell+n-1) A^\ell_{n-2,j,m}
      H^{(\ell)}_{n-2+k} +
      A^\ell_{n,j,m} H^{(\ell)}_{n+k}\right) H_m \\
    &\quad - \sum_{j=0}^\ell \sum_{m=0}^j  \left(H_{m+1}+ 2m H_{m-1}\right)
    A^\ell_{n-1,j,m} H^{(\ell)}_{n-1+k},
  \end{align*}
  where $k=2j-m$ for notational convenience, and where we use the
  3-term recurrence again in the second equation.  Reindexing the
  above sums gives
  \begin{align*}
    0&= \sum_{j=0}^\ell \sum_{m=0}^j 2(\ell+n-1) A^\ell_{n-2,j,m}
    H^{(\ell)}_{n-2+k} H_m + \sum_{j=1}^{\ell+1} \sum_{m=0}^{j-1}
    A^\ell_{n,j-1,m} H^{(\ell)}_{n-2+k} H_m \\
    &\quad - \sum_{j=1}^{\ell+1} \sum_{m=1}^{j} A^\ell_{n-1,j-1,m-1}
    H^{(\ell)}_{n-2+k}H_{m} - \sum_{j=0}^\ell \sum_{m=0}^{j-1} 2(m+1)
    A^\ell_{n-1,j,m+1} H^{(\ell)}_{n-2+k}H_{m}.
  \end{align*}
  By the definition of the trinomial $C^\ell_{j,j-m}$ symbol, 
  \begin{align*}
    &A^\ell_{n-2,j,m}=0 & \text{if } &  j=\ell+1,\\
    &A^\ell_{n,j-1,m} = 0 & \text{if }&   m=j\quad\text{or } j=0,\\
    &A^\ell_{n-1,j-1,m-1} =0& \text{if }&   m=0\quad\text{or } j=0,\\
    &A^\ell_{n-1,j,m+1}  = 0 & \text{if } & m=j\quad\text{or } j=\ell+1.
  \end{align*}
  Hence all of the above sums can be collected as a single sum with
  the same range indices; namely,
  \begin{align*}
    0 &=\sum_{j=0}^{\ell+1} \sum_{m=0}^j 
    \left( 2(\ell+n-1) A^\ell_{n-2,j,m}+ A^\ell_{n,j-1,m}- A^\ell_{n-1,j-1,m-1}- 2(m+1)
    A^\ell_{n-1,j,m+1} \right) H^{(\ell)}_{n-2+k}H_{m}.
  \end{align*}
  An elementary calculation now gives
  \[ 2(\ell+n-1) A^\ell_{n-2,j,m}+ A^\ell_{n,j-1,m}-
  A^\ell_{n-1,j-1,m-1}- 2(m+1) A^\ell_{n-1,j,m+1} =
  2A^{\ell+1}_{n-2,j,m} ,\]
  thereby giving \eqref{eq:xrecur1}, albeit with $n\to n-2$  
\end{proof}

  We now use Lemma \ref{lem:rec} to establish \eqref{eq:xrecur2} by induction on
  $\ell$.  Specializing \eqref{eq:xrecur1} to the case $\ell=0$
  yields,
  \begin{align*}
    0 &= A^1_{n00} H_0 H_n + A^1_{n01} H_0
    H_{n+2} +A^1_{n10} H_1 H_{n+1}\\
    &= (n+1) H_n + 2^{-1} H_{n+2} - x H_{n+1},
  \end{align*}
  which is the usual recurrence relation.

  We now suppose that \eqref{eq:xrecur2} and hence \eqref{eq:xrecur1} hold
  for a given $\ell$.  The first step is to form the Wronskian of
  \eqref{eq:xrecur1} with $H^{(\ell+1)}$.  Using the generalized
  Leibnitz identity
    \[ \Wr[f,gh]  = h \Wr[f,g] + h' f g \]
    and the Hermite lowering relation
    \[ H_m' = 2m H_{m-1} \]
    gives
    \begin{align*}
      0&=\sum_{j=0}^{\ell+1} \sum_{m=0}^j A^{\ell+1}_{njm}\, H_m
      \Wr[H^{(\ell+1)},H^{(\ell)}_{n+2j-m}] +H^{(\ell+1)}
      \sum_{j=1}^{\ell+1} \sum_{m=1}^j 2m A^{\ell+1}_{njm}\,
      H_{m-1}H^{(\ell)}_{n+2j-m}
    \end{align*}
    Next, we employ the ``Wronskian of  Wronskians'' identity
    \[ \Wr[ \Wr[f_1, \ldots, f_n, g], \Wr[f_1,\ldots, f_n, h]] =
    \Wr[f_1,\ldots, f_n] \Wr[f_1,\ldots, f_n, g,h] \]
    to obtain
    \begin{align*}
      0&=H^{(\ell)} \sum_{j=0}^{\ell+1} \sum_{m=0}^jA^{\ell+1}_{njm}
      \, H_m H^{(\ell+1)}_{n+2j-m} +H^{(\ell+1)} \sum_{j=1}^{\ell+1}
      \sum_{m=1}^j 2m
      A^{\ell+1}_{njm}\, H_{m-1}H^{(\ell)}_{n+2j-m}\\
      &=H^{(\ell)} \sum_{j=0}^{\ell+1} \sum_{m=0}^jA^{\ell+1}_{njm} \,
      H_m H^{(\ell+1)}_{n+2j-m} -(\ell+1) H^{(\ell+1)} \sum_{j=0}^{\ell}
      \sum_{m=0}^j A^{\ell}_{n+1,jm}\, H_{m-1}H^{(\ell)}_{n+2j-m},
    \end{align*}
    where we used the identity
    \[ (1+\ell) A^\ell_{n+1,jm} + 2(m+1) A^{\ell+1}_{n,j+1,m+1} = 0. \]
    Applying \eqref{eq:xrecur2} and cancelling $H^{(\ell)}$  gives
    \begin{equation}
      \label{eq:xrecur3}
      \sum_{j=0}^{\ell+1} \sum_{m=0}^jA^{\ell+1}_{njm} \,
      H_m H^{(\ell+1)}_{n+2j-m} =(\ell+1)! H_{n+\ell+1} H^{(\ell+1)},
    \end{equation}
    which is \eqref{eq:xrecur2} with $\ell\to \ell+1$, thus closing the induction.
\end{proof}

\section{Discussion}

The results of this paper can be viewed from different angles depending on the focus of interest. From the point of view of monodromy free potentials, we have proved that every second order differential operator whose leading order is a constant that admits a numerable sequence of eigenpolynomials is monodromy free. Even more, its general solution is an entire function. This implies in particular that all rational extensions of the quantum harmonic oscillator are monodromy free, and therefore can be obtained by a sequence of rational Darboux transformations from the harmonic potential. 

It is important to note at this point that the harmonic oscillator only admits state-adding and state-deleting rational Darboux transformations, but no isospectral ones, which makes the whole classification easier to tackle. Moreover, as it was proved by Oblomkov in \cite{Oblomkov1999} and later observed by Felder et al. in \cite{Felder2012a}, the use of state-adding Darboux transformations does not lead to any new rational extensions, and any such combination can be expressed in terms of state-deleting transformations alone. This idea has been further extended by Sasaki and Odake in \cite{Odake2013z} to other polynomial families, and to an arbitrary shape invariant potential in \cite{GGM13}.

From the point of view of exceptional orthogonal polynomials, the relevance of this paper is twofold: first, it gives a partial answer to the conjecture formulated in \cite{Gomez-Ullate2012} for the Hermite case: every family of exceptional Hermite polynomials can be obtained from the classical Hermite by a sequence of rational (state-deleting) Darboux transformations. Second, the paper shows the proper way to index the $\Xl$-Hermite polynomials, and it provides some of their properties: differential equations, orthogonality relations, proof of completeness, etc.

Another main novelty consists in providing explicit recurrence relations satisfied by the $\Xl$-Hermite polynomials, by generalizing and explicitly solving the relations first proposed by Odake in \cite{Odake2013}. We observe that the order of the recurrence is $2\ell+3$ where $\ell$ is the number of rational state-deleting Darboux transformations. However, the duality between state-adding and state-deleting transformations implies that there will be recurrence relations of lower order for some of these families. Otherwise speaking, if the lowest order recurrence relation
for a given family of $\Xl$-Hermite polynomials wants to be given, both state-adding and state-deleting Darboux transformations need to be considered, \cite{GGM13}.

\section{Acknowledgements}

The research of the first author (DGU) has been
supported in part by Spanish MINECO-FEDER Grants MTM2009-06973, MTM2012-31714, and the Catalan Grant 2009SGR--859. The research of the third author (RM) was supported in part by NSERC grant RGPIN-228057-2009.

\end{document}